\documentclass[12pt]{article}

\usepackage{amsmath,amssymb,amsthm,appendix,bm,booktabs,caption,graphicx,multirow,natbib,physics}
\usepackage[margin=1in]{geometry}
\usepackage[singlespacing]{setspace}
\usepackage[dvipsnames]{xcolor}
\usepackage[colorlinks,citecolor=Blue,linkcolor=Blue,urlcolor=Maroon]{hyperref}

\DeclareMathOperator{\corr}{corr}
\DeclareMathOperator{\cov}{cov}
\DeclareMathOperator{\sd}{sd}
\DeclareMathOperator{\vvar}{var}

\newtheorem{definition}{Definition}
\newtheorem{proposition}{Proposition}

\bibliographystyle{aer}

\title{Optimal macroprudential policy with preemptive bailouts\thanks{I would like to thank Toni Braun, Kaiji Chen, Tasos Karantounias, Federico Mandelman, Ricardo Nunes, Franck Portier, Juan Rubio-Ram\'{i}rez, Kjetil Storesletten, Vivian Yue, and seminar and conference participants at the Federal Reserve Bank of Atlanta, Emory University, Bank of Lithuania, University of Surrey, 2022 Econometric Society meetings, 2023 Bank of Portugal Workshop on Financial Stability and Macroprudential Policy, 2nd UCL-Surrey Workshop in Macroeconomics, 54th Annual Conference of the Money, Macro and Finance Society, 4th Macroeconomics Network in the Southwest Workshop, 2023 European Winter Meeting of the Econometric Society, 2024 RCEA International Conference, and 2024 European Summer Meeting of the Econometric Society for helpful comments and suggestions.}}
\author{Aliaksandr Zaretski\footnote{School of Social Sciences, University of Surrey, e-mail: \url{a.zaretski@surrey.ac.uk}.}}
\date{April 6, 2025}

\begin{document}

\maketitle

\begin{abstract}
    I study the optimal regulation of a financial sector where individual banks face self-enforcing constraints countering their default incentives. The constrained-efficient social planner can improve over the unregulated equilibrium in two dimensions. First, by internalizing the impact of banks' portfolio decisions on the prices of assets and liabilities that affect the enforcement constraints. Second, by redistributing future net worth from new entrants to surviving banks, which increases the current forward-looking value of all banks, relaxing their enforcement constraints and decreasing the probability of banking crises. The latter can be accomplished with systemic preemptive bailouts that are time consistent and unambiguously welfare improving. Unregulated banks can be both overleveraged and underleveraged depending on the state of the economy, thus macroprudential policy requires both taxes and subsidies, while minimum bank capital requirements are generally ineffective.
\end{abstract}

JEL codes: E44, E60, G21, G28.

Keywords: banking crises, constrained efficiency, macroprudential policy, pecuniary externalities, preemptive bailouts, time consistency.
\newpage

\section{Introduction}
The global financial crisis and the Great Recession of the late 2000s raised several challenging normative questions. Is there a need for macroprudential regulation, and if so, which policy instruments are effective? Should regulatory requirements be conditioned on individual-specific characteristics? Is ``too big to fail'' a problem? Are bailouts justified? In the recent decade, considerable progress has been made in understanding the rationales for macroprudential regulation in small open economies that borrow in the international financial market at exogenously determined interest rates. At the same time, our knowledge of the optimal regulation of banks in quantitative general equilibrium remains limited. Banking crises were at the heart of the global financial crisis of 2007--2009, including in the US and the UK, and many developed and developing economies have a bank-based financial system. Therefore, it is crucial to understand how to regulate banks optimally over the business cycle.

This paper considers a quantitative general equilibrium environment with endogenous financial frictions in the banking system. In this environment, multiple externalities arise, justifying system-wide regulation, e.g., bank balance sheet taxation or minimum capital requirements. Without regulation, occasional large drops in net worth lead to financial crises when endogenous financial constraints switch from having slack to being binding. An alternative way to decrease the probability of financial crises and improve welfare is through ``preemptive bailouts.'' Expected future transfers relax the current financial constraints guaranteeing bank solvency and alleviating the limited enforcement friction in the first place. Such transfers are systemic, not being a source of moral hazard.

The economy I consider is a real business cycle model with a banking sector \citep{gertler10,gertler11} and a nonlinear investment technology \citep{lucas71}. The banking system consists of heterogeneous banks that issue deposits to households, invest in the real economy with state-contingent returns, and face survival risk. Financial intermediation is imperfect due to the limited enforcement of deposit contracts and the resulting enforcement constraint faced by individual banks. The enforcement constraint posits that the forward-looking bank value must be at least as great as the value of default---running away with a fraction of assets---at all possible contingencies. The constraint is thus similar to that studied by \cite{kehoe02} in an international real business cycle model with endogenously incomplete markets.

I highlight two distinct sources of the inefficiency of the competitive equilibrium allocation. First, there are pecuniary externalities: individual bankers do not internalize that their private portfolio decisions influence the price of claims on firm profits and the future return on bank assets, affecting both the forward-looking bank value and the value of default of all banks. The equilibrium asset price at $t$ is a function of the aggregate capital stock chosen at $t-1$ and $t$, depending negatively on the former and positively on the latter. Consequently, in the partial derivative sense, greater bank lending to the real sector at $t$ is linked to a greater demand for capital goods and asset price at $t$, increasing the value of default; furthermore, there is a lower asset price and marginal product of capital at $t+1$, decreasing the ex post return on bank assets at $t+1$ and the bank value at $t$. Both effects tighten enforcement constraints of all banks at $t$. On the other hand, by decreasing the asset price at $t+1$, greater bank lending at $t$ decreases the value of default at $t+1$, relaxing future enforcement constraints. Hence, while the former effects are consistent with excessive borrowing and lending in the competitive equilibrium, the latter effect contributes to insufficient borrowing and lending, and in both cases, the effect on bank borrowing is due to the balance sheet identity. There are, moreover, additional externalities, many of which depend on the extent of commitment by the policymaker. A planner that can choose a policy plan at the beginning of time once and for all internalizes the effect of allocations at $t$ on $t-1$ expectations. As a result, the planner with commitment internalizes that greater bank lending at $t$ increases the bank value and relaxes the enforcement constraint at $t-1$, which is due to a positive effect on the asset price and thus on the ex post asset return at $t$. This effect is absent without commitment when the planner must consider how current decisions affect the endogenous state and optimal decisions in the future.

The second type of inefficiencies is the very nature of the limited enforcement friction that restricts bank borrowing and lending compared to the economy without such a friction. Intuitively, if the enforcement constraint is binding at $t$, one can achieve a strict welfare improvement by promising a greater future bank value conditional on survival to $t+1$, relaxing the financial constraint at $t$. Formally, this goal can be achieved by manipulating the future bank value distribution between entrants and survivors. Under the assumption of commitment, the planner internalizes how allocations affect the future distribution, and how the latter impacts the tightness of the current enforcement constraint.

Formally, I characterize the constrained efficient allocation, which results from a planning problem of a benevolent policymaker that maximizes household welfare subject to the competitive equilibrium implementability constraints, but making portfolio decisions on behalf of the banking system. I study this problem both under the assumption of commitment and no commitment on the planner's side. The no-commitment case corresponds to a Markov perfect equilibrium of a noncooperative game between successive policymakers \citep{klein08}. Both types of constrained efficient allocations highlight similar distortions in the competitive equilibrium. However, by construction, the time-consistent planner cannot affect the future bank value distribution except by affecting the future endogenous state. Therefore, the distribution is always taken as given in the time-consistent analysis.

I show how to implement both types of constrained efficient allocations in a regulated competitive equilibrium with two types of policy instruments that address the two types of inefficiencies. The externalities can be corrected either by linear taxes on deposits and loans balanced in the aggregate or by one of these types of taxes rebated lump sum. On the contrary, the minimum state-contingent bank capital requirements are generally insufficient. The problem with the latter is its asymmetric nature that does not allow closing the wedges in the contingencies in which the optimal credit spread is too low. The given bank value distribution can be achieved with entrants/survivors-specific transfers---preemptive bailouts---that either add up to zero or match the aggregate lump-sum transfer that rebates the proceeds from the linear taxes. For most computations, I use global projection methods to fully account for precautionary savings effects when the occasionally binding enforcement constraint is likely to switch from the nonbinding to binding regime.

Quantitatively, in both the Markov perfect and Ramsey equilibria, the enforcement constraint is binding by an order of magnitude less often than in the competitive equilibrium. Both normative arrangements generate sizable consumption-equivalent welfare gains: up to a 0.75\% ergodic distribution average at the commitment allocation. In the Ramsey equilibrium, the optimal bank value distribution is biased towards surviving banks, and there is more borrowing and lending than in the competitive equilibrium. At the same time, the Ramsey allocation has less borrowing and lending than in the unconstrained competitive equilibrium---that is, the competitive equilibrium in the environment without the enforcement friction. Without commitment, internalizing the pecuniary externalities is the key pursuit, as the planner has limited ability to affect the future bank value distribution. With commitment, preemptive bailouts have more power, and the planner engineers a survivors-biased distribution with greater equilibrium borrowing and lending. Transfers to survivors increase around crises.

\paragraph{Related literature}
This paper is related to the literature on financial crises and pecuniary externalities arising from endogenous financial constraints. \cite{lorenzoni08} is one of the first papers to highlight overborrowing in the competitive equilibrium due to a pecuniary externality in the price of capital in a three-period model with two-sided limited commitment and direct finance; \cite{davila18} provide a comprehensive theoretical analysis of pecuniary externalities. \cite{bianchi11} considers a quantitative endowment (small open) economy with two goods and a flow collateral constraint that depends on the relative price of nontradable goods. He shows that overborrowing due to the pecuniary externality can be corrected with a state-contingent debt tax. In the same model, \cite{benigno16} show that policies that relax financial constraints ex post achieve greater welfare than the optimal debt tax since the former can implement the unconstrained first-best allocation. The competitive equilibrium features underborrowing compared to that allocation. Moreover, \cite{schmitt-grohe21} emphasize that there exist reasonable parameterizations under which multiple equilibria arise in that model, including a self-fulfilling equilibrium that features underborrowing. \cite{benigno13} find that underborrowing arises in a related production economy and highlight the importance of ex post policies. In the context of small open and endowment economies with stock collateral constraints, \cite{bianchi18} and \cite{jeanne19} identify overborrowing in the competitive equilibrium compared to the Markov perfect equilibrium and characterize the optimal time-consistent debt tax. The current paper focuses on the implications of pecuniary externalities due to asset prices and asset returns that affect the forward-looking enforcement constraint faced by financial intermediaries.

This paper is also related to the literature on financial crises, bailouts, and optimal financial regulation. \cite{allen04} generalize the model of \cite{diamond83} and show that with complete markets and limited market participation, the competitive equilibrium is either incentive efficient or constrained efficient, defaults and financial crises occur in equilibrium with incomplete contracts, and no regulation is warranted. However, with incomplete markets, there is room for liquidity regulation. \cite{farhi09} show that the competitive equilibrium in the model of \cite{diamond83} is constrained inefficient even with complete markets if agents can engage in hidden trades. In the current paper, there is endogenous market incompleteness that gives rise to pecuniary externalities and inefficient financial intermediation. \cite{farhi12} demonstrate that imperfectly targeted time-consistent accommodating interest-rate policies lead to multiple equilibria, increase correlation in risk-taking behavior by financial intermediaries and sow the seeds of future crises. Regulation in the form of a cap on short-term debt reduces the set of equilibria to a singleton that corresponds to the commitment benchmark. The current paper focuses on the symmetric equilibrium in the banking system to facilitate aggregation and permit tractable positive and normative analysis. Even though preemptive bailouts are imperfectly targeted within entrants and survivors, the symmetric equilibrium is not subject to collective moral hazard. \cite{bianchi16} studies the implications of a pecuniary externality in an equity constraint that depends on the market wage rate and emphasizes the benefits of a systemic debt relief policy---a proportional reduction in debt repayments---that helps relax equity constraints during crises. The objective of relaxing financial constraints is similar to the objective of preemptive bailouts in the current paper, but the latter constitute a somewhat different policy---group-dependent lump-sum transfers provided to banks at $t+1$ to relax financial constraints at $t$. \cite{chari16} develop a model where costly firm bankruptcies occur in the competitive equilibrium, which is both ex ante and ex post efficient if compared to the commitment benchmark. Without commitment, inefficient bailouts will arise, and regulation in the form of a limit on the debt-to-value ratio and the tax on firm size is desirable to achieve a sustainably efficient outcome. In the current paper, the competitive equilibrium is constrained inefficient compared to the commitment benchmark, while preemptive (not actual) bailouts help mitigate the source of endogenous market incompleteness.

As part of smaller quantitative literature, \cite{boissay16} develop a real business cycle model with a banking sector that features an interbank market. High-skilled banks borrow from low-skilled banks and households to lend to firms and may decide to divert borrowed funds to invest in the storage technology subject to diversion costs. A relevant incentive compatibility constraint eliminates the former possibility. The authors briefly discuss the constrained inefficiency of the competitive equilibrium and compute welfare losses. A crucial difference from the current paper is that the bank's problem is static, and the incentive constraint is always binding in equilibrium; therefore, the sources of the inefficiency of the competitive equilibrium are utterly different from those in the current paper. Indeed, in the current paper, the competitive equilibrium is constrained efficient if the enforcement constraint is always binding, and the bank value distribution externality arises because the enforcement constraint is forward looking. \cite{collard17} study locally Ramsey-optimal bank capital requirements and monetary policy. In their model, sufficiently high capital requirements help eliminate risky lending in equilibrium. On the contrary, in the current paper, capital requirements do not generally constitute an effective policy instrument, and their role is different---to force individual banks to internalize pecuniary externalities due to the enforcement constraint. In a continuous-time environment, \cite{ditella19} demonstrates how the possibility of hidden trades in physical capital by intermediaries inflates the asset price and risk exposure of other intermediaries. The constrained efficient allocation can be implemented with a tax on assets, while bank capital requirements are ineffective. \cite{vanderghote21} develops a continuous-time model with nominal rigidities and a banking sector that is similar to that in the current paper but with the capital requirement constraint imposed as part of the environment. The author restricts attention to Markov equilibria and acknowledges the presence of pecuniary externalities, discussing them intuitively and computing the optimal capital requirement numerically. The current paper instead characterizes constrained efficient allocations that do not depend on the presence of specific policy instruments. Indeed, as mentioned above, capital requirements might not be effective for correcting the externalities. Moreover, the current paper identifies the bank value distribution externality and conducts the normative analysis both with and without commitment on the planner's side.

Several papers studied the welfare implications of specific policies in related environments under the assumption that the enforcement constraint is always binding to allow for smooth local approximations \citep{gertler10,gertler11,gertler12,depaoli17}. In the current paper, the enforcement constraint is occasionally binding, and the model is solved using global or quasi-global methods. Moreover, as mentioned above, the competitive equilibrium is constrained efficient if the enforcement constraint is always binding, so in that case, regulation might not be desirable. \cite{gertler20RED} and \cite{akinci22} also use global methods, but they do not study efficiency, restricting the analysis to specific policy rules.

The rest of this paper proceeds as follows. Section \ref{sec: model} describes the theoretical model and defines the competitive equilibrium. Section \ref{sec: normative analysis} conducts the normative analysis. Section \ref{sec: quantitative results} presents quantitative results. Section \ref{sec: conclusion} concludes. \hyperlink{appendices}{Appendices} contain proofs of theoretical results.

\section{Model}\label{sec: model}
Consider a basic version of the economy described in \cite{gertler10} and \cite{gertler11}. Time $t$ is discrete, and the horizon is infinite, so $t\in\mathbb{Z}_+$. There are unit measures of identical households and firms producing final and capital goods. Each household is a family of $f\in(0,1)$ bankers and $1-f$ workers, and there is perfect consumption insurance within a family. Bankers manage banks that intermediate funds between households and final good firms. Crucially, there is limited enforcement of deposit contracts between households and banks, which gives rise to an endogenous financial friction. Without that friction, the economy collapses to a standard real business cycle model. At each date $t>0$, there is uncertainty regarding the state of nature $s_{t}\in{}S$, and $s_{0}\in{}S$ is fixed. For simplicity, we can think of $S$ being finite. A history of states is $s^{t}=(s_{0},s_{1},\dots,s_{t})\in{}S^{t}$, where $S^{t}\equiv{}S^{t-1}\times{}S$ with $S^{0}\equiv{}S$, and the probability of a history $s^{t}$ is $\pi_{t}(s^{t})$. We will keep the history dependence implicit when possible.

\subsection{Households}
On behalf of a family, the head of the household decides how much to consume $C_{t}$, save in one-period deposits $\frac{D_{t+1}}{R_{t}}$ with the gross return $R_{t}$, and how much labor $L_{t}$ to supply at the wage rate $W_{t}$. The budget constraint of the household is
\begin{equation*}
    C_{t}+\frac{D_{t+1}}{R_{t}}\le{W_{t}}L_{t}+D_{t}+\Pi_{t},
\end{equation*}
where $\Pi_{t}$ denotes net transfers from the ownership of banks and firms.

The household's preferences are represented by $\mathbb{E}_{0}[\sum_{t=0}^{\infty}\beta^{t}U(C_{t},L_{t})]$, where $\beta\in(0,1)$, $U:\mathbb{R}^2_{+}\to\mathbb{R}$ is twice continuously differentiable and strictly concave with $U_C>0$, $U_L<0$, and $\lim_{C\to{0}}U_C(C,L)=\infty$ for all $L\ge{0}$. The necessary conditions for optimality include the budget constraint holding as equality, the labor supply condition that links the wage to the marginal rate of substitution of consumption for leisure \eqref{eq: household labor supply}, and the Euler equation that prices deposits \eqref{eq: household Euler deposits}.
\begin{align}
    W_{t}&=-\frac{U_{L,t}}{U_{C,t}},\label{eq: household labor supply}\\
    U_{C,t}&=\beta{}R_{t}\mathbb{E}_{t}(U_{C,t+1}).\label{eq: household Euler deposits}
\end{align}
Combined with initial and transversality conditions on $\{D_{t}\}$, the above equations are also sufficient to determine the household's optimal plan given prices and government policies.

\subsection{Bankers}
A banker manages a bank that invests net worth $n_{t}$ and deposits $\frac{d_{t+1}}{R_{t}}$ into firms' equity $k_{t+1}$ at the price $Q_{t}$. The bank's balance sheet constraint is then
\begin{equation*}
    Q_{t}k_{t+1}=n_{t}+\frac{d_{t+1}}{R_{t}}.
\end{equation*}

Bankers are assumed to stay in the banking business for a finite expected period of time. Specifically, a banker in period $t$ remains to be a banker in period $t+1$ with the probability $\sigma\in[0,1)$ and becomes a worker with the probability $1-\sigma$. Hence, the expected lifetime of the banking business is $\frac{1}{1-\sigma}$. A banker that exits transfers the accumulated net worth to their household. Accordingly, $(1-\sigma)f$ workers start a banking business each period, being endowed with a startup net worth $n^0_{t}>0$ by their households. The future net worth is the difference between the ex post returns on assets and liabilities:
\begin{equation*}
    n_{t+1}=X_{t+1}k_{t+1}-d_{t+1},
\end{equation*}
where $X_{t}$ is the gross payoff per unit of capital stock.

Let $v_{t}$ denote the bank value. The bank value satisfies a stochastic difference equation:
\begin{equation*}
    v_{t}=\mathbb{E}_{t}\{\Lambda_{t,t+1}[(1-\sigma)n_{t+1}+\sigma{}v_{t+1}]\},
\end{equation*}
where $\Lambda_{t,t+1}\equiv\beta\frac{U_{C,t+1}}{U_{C,t}}$ is the representative household's stochastic discount factor.

We assume that at the end of a period $t$, a banker could divert a fraction $\theta\in[0,1]$ of assets to their household. In that case, the bank would default, while other households could recover only the remaining fraction $1-\theta$ of assets. Consequently, other households will be willing to lend to the bank only if the following incentive compatibility constraint holds:
\begin{equation*}
    v_{t}\ge\theta{}Q_{t}k_{t+1}.
\end{equation*}
This constraint is essentially an enforcement constraint (EC) of the type studied in \cite{kehoe02}. Note that $v_{t}$ depends on infinitely many future controls, similar to the class of problems studied by \citet{marcet19}. Crucially, however, the constraint set is linear in $n_{t}$, which simplifies the matters significantly.

The banker's problem is
\begin{equation*}
    \max_{\{d_{t+1},k_{t+1},v_{t}\}}v_{0}
\end{equation*}
subject to the nonnegativity, balance sheet, net worth, bank value, and enforcement constraints. Let $\Tilde{\nu}_{t}(s^{t})$, $\gamma_{t}(s^{t})$, and $\Tilde{\lambda}_{t}(s^{t})$ denote the Lagrange multipliers on the balance sheet, bank value, and enforcement constraints for a history $s^{t}$ normalized by $(\beta\sigma)^{t}\pi_{t}(s^{t})\frac{U_{C,t}(s^{t})}{U_{C,0}}$. Define also the scaled multipliers $\nu_{t}\equiv\frac{\Tilde{\nu}_{t}}{\gamma_{t-1}}$ and $\lambda_{t}\equiv\frac{\Tilde{\lambda}_{t}}{\gamma_{t-1}}$. The next proposition characterizes the solution to the banker's problem.
\begin{proposition}\label{prop: banker's problem}
    The solution to the banker's problem is characterized by the Euler equations
    \begin{align}
        \nu_{t}&=(1+\lambda_{t})\mathbb{E}_{t}[\Lambda_{t,t+1}(1-\sigma+\sigma\nu_{t+1})]R_{t},\label{eq: bank Euler deposits DE}\\
        \theta\lambda_{t}+\nu_{t}&=(1+\lambda_{t})\mathbb{E}_{t}\left[\Lambda_{t,t+1}(1-\sigma+\sigma\nu_{t+1})\frac{X_{t+1}}{Q_{t}}\right],\label{eq: bank Euler capital DE}
    \end{align}
    and the complementary slackness conditions
    \begin{equation*}
        0=\lambda_{t}(v_{t}-\theta{}Q_{t}k_{t+1}),\qquad
        \lambda_{t}\ge{}0,\qquad
        v_{t}\ge\theta{}Q_{t}k_{t+1},
    \end{equation*}
    where the value function is $v_{t}=\nu_{t}n_{t}$. The scaled multipliers $\lambda_{t}$ and $\nu_{t}$ and the ratios $\frac{d_{t+1}}{n_{t}}$ and $\frac{k_{t+1}}{n_{t}}$ are independent of $n_{t}$.
\end{proposition}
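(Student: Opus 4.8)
The plan is to form the sequence-form Lagrangian of the banker's problem, differentiate with respect to the controls $\{v_t,d_{t+1},k_{t+1}\}$ after substituting the net-worth law $n_{t+1}=X_{t+1}k_{t+1}-d_{t+1}$ into the bank-value constraint and into the date-$(t+1)$ balance sheet, and then rescale the raw multipliers by $(\beta\sigma)^t\pi_t(s^t)U_{C,t}/U_{C,0}$ to obtain $\tilde\nu_t,\gamma_t,\tilde\lambda_t$. The structural feature I would lean on throughout is that, holding the competitive prices $\{Q_t,R_t,X_t,\Lambda_{t,t+1}\}$ fixed, every constraint and the objective $v_0$ are linear, so the problem is an (infinite-dimensional) linear program whose Karush--Kuhn--Tucker conditions are both necessary and sufficient, and whose feasible set is homogeneous of degree one in net worth.

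First I would differentiate in $v_t$. Writing the raw multipliers on the bank-value and enforcement constraints as $\Gamma_t$ and $\Phi_t$, the term $v_t$ enters its own date-$t$ bank-value constraint with coefficient $-1$, the date-$(t-1)$ constraint through $\sigma\mathbb{E}_{t-1}[\Lambda_{t-1,t}v_t]$, and the enforcement constraint with coefficient $+1$, so the stationarity condition is $\Gamma_t=\sigma\Lambda_{t-1,t}\pi(s_t\mid s^{t-1})\Gamma_{t-1}+\Phi_t$. Dividing by the normalizing factor makes the discount, probability, and marginal-utility weights telescope, using $\Lambda_{t-1,t}=\beta U_{C,t}/U_{C,t-1}$, and this collapses to $\gamma_t=\gamma_{t-1}+\tilde\lambda_t$, i.e.\ $\gamma_t=\gamma_{t-1}(1+\lambda_t)$ with $\lambda_t\equiv\tilde\lambda_t/\gamma_{t-1}$ and $\gamma_{-1}\equiv1$ (the $t=0$ condition $1-\Gamma_0+\Phi_0=0$ fixing the initialization). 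This identity is exactly what makes the scaled objects $\nu_t\equiv\tilde\nu_t/\gamma_{t-1}$ and $\lambda_t$ the right variables. Differentiating in $d_{t+1}$ (present in the date-$t$ balance sheet, in $n_{t+1}$ inside the date-$t$ bank-value constraint, and in $n_{t+1}$ inside the date-$(t+1)$ balance sheet) and in $k_{t+1}$ (the same three places plus the enforcement term $\theta Q_tk_{t+1}$), then normalizing and substituting $\tilde\nu_t=\gamma_{t-1}\nu_t$, $\tilde\nu_{t+1}=\gamma_t\nu_{t+1}$, $\tilde\lambda_t=\gamma_{t-1}\lambda_t$ and $\gamma_t=\gamma_{t-1}(1+\lambda_t)$, the common factor $\gamma_{t-1}$ cancels and delivers precisely \eqref{eq: bank Euler deposits DE} and \eqref{eq: bank Euler capital DE}; the complementary-slackness block is just the KKT condition attached to the enforcement constraint.

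To obtain $v_t=\nu_t n_t$ I would argue by \emph{guess-and-verify}. Conjecturing $v_{t+1}=\nu_{t+1}n_{t+1}$ and inserting it into the bank-value constraint gives $v_t=\mathbb{E}_t[\Lambda_{t,t+1}(1-\sigma+\sigma\nu_{t+1})n_{t+1}]$; eliminating $d_{t+1}$ through the balance sheet writes $n_{t+1}=(X_{t+1}-R_tQ_t)k_{t+1}+R_tn_t$, so $v_t$ becomes affine in $k_{t+1}$ and $n_t$ with coefficients equal to the two conditional expectations in \eqref{eq: bank Euler deposits DE}--\eqref{eq: bank Euler capital DE}. Substituting those Euler equations reduces the expression to $v_t=\tfrac{\theta\lambda_t}{1+\lambda_t}Q_tk_{t+1}+\tfrac{1}{1+\lambda_t}\nu_tn_t$, and the complementary-slackness condition $\lambda_t(v_t-\theta Q_tk_{t+1})=0$ collapses this to $v_t=\nu_tn_t$ in both the slack ($\lambda_t=0$) and binding ($v_t=\theta Q_tk_{t+1}$) regimes. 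Independence of $n_t$ then follows because, once $v_t=\nu_t n_t$ is imposed, the entire system---the two Euler equations, the slackness condition rewritten as $\lambda_t(\nu_t-\theta Q_t k_{t+1}/n_t)=0$, and the balance sheet $Q_tk_{t+1}/n_t=1+d_{t+1}/(R_tn_t)$---involves only $\nu_t,\lambda_t$ and the ratios $k_{t+1}/n_t,d_{t+1}/n_t$, with $n_t$ appearing nowhere else; this is just the homogeneity-of-degree-one property of the linear program made explicit.

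The main obstacle is conceptual rather than computational: the control $v_t$ is forward looking and depends on infinitely many future decisions (the recursive-constraint difficulty flagged in the text via \citet{marcet19}), so a priori the program is not a standard dynamic program. What rescues the argument is the linearity of every constraint in net worth, which makes KKT sufficient, forces a linear value function, and lets the infinite-horizon multiplier ratios telescope into the single-step identity $\gamma_t=\gamma_{t-1}(1+\lambda_t)$. The one step demanding genuine care is this normalization bookkeeping---verifying that the consecutive-date factors $(\beta\sigma)^t\pi_tU_{C,t}/U_{C,0}$ combine with $\Lambda_{t,t+1}=\beta U_{C,t+1}/U_{C,t}$ so that the survival weights $\sigma$ and the stochastic discount factor reassemble exactly---together with invoking a transversality condition to legitimize the guess-and-verify step over the infinite horizon.
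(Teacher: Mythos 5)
Your proposal is correct and follows essentially the same route as the paper's proof: sequence Lagrangian, FOCs in $(v_t,d_{t+1},k_{t+1})$ yielding $\gamma_t=\gamma_{t-1}+\Tilde{\lambda}_t$, the stationary rescaling by $\gamma_{t-1}$ to obtain \eqref{eq: bank Euler deposits DE}--\eqref{eq: bank Euler capital DE}, the collapse of the bank value via the Euler equations, balance sheet, and complementary slackness, and homogeneity of the constraint set in $n_t$ for the invariance claims. The only (cosmetic) difference is in the value-function step: the paper carries the residual $\mu_t\equiv v_t-\nu_t n_t$ exactly and shows it satisfies the homogeneous recursion $\mu_t=\sigma(1+\lambda_t)\mathbb{E}_t(\Lambda_{t,t+1}\mu_{t+1})$, hence vanishes, whereas you conjecture $\mu_{t+1}=0$ and verify $\mu_t=0$ case by case, closing with the same transversality-type argument.
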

\begin{proof}
    See Appendix \ref{sec: proof banker's problem}.
\end{proof}

According to the expression for the value function, the marginal value of net worth equals the scaled multiplier $\nu_{t}$. The intuition is that net worth is more valuable when the balance sheet constraint is tighter: the greater the original multiplier on the balance sheet constraint $\Tilde{\nu}_{t}$, the greater the marginal value of net worth $\nu_{t}$.

As shown in Appendix \ref{sec: proof banker's problem}, at the optimal plan, $\gamma_{t}=1+\sum_{j=0}^{t}\lambda_{j}$. Remember that $\gamma_{t}$ affects the scaled multipliers $\nu_{t}$ and $\lambda_{t}$; therefore, similar to \cite{kehoe02} and \cite{marcet19}, the solution to the banker's problem depends on the history of Lagrange multipliers associated with the EC. At the same time, the scaled multipliers $\nu_{t}$ and $\lambda_{t}$ are sufficient statistics for the characterization of the optimal plan.

The KKT conditions \eqref{eq: bank Euler deposits DE} and \eqref{eq: bank Euler capital DE} imply that the risk-adjusted credit spread $\mathbb{E}_{t}[\Lambda_{t,t+1}(1-\sigma+\sigma\nu_{t+1})(\frac{X_{t+1}}{Q_{t}}-R_{t})]$ is entirely determined by the scaled multiplier $\lambda_{t}$. The tighter the EC, the greater the $\lambda_{t}$, and the greater the spread.

By Proposition \ref{prop: banker's problem},
\begin{equation}\label{eq: banking sector value DE}
    V_{t}=\nu_{t}N_{t}.
\end{equation}
The signs of all individual ECs are equivalent to the sign of the aggregate EC. Hence, it is enough to consider the aggregate complementary slackness conditions
\begin{equation}\label{eq: banking sector complementary slackness DE}
    0=\lambda_{t}(V_{t}-\theta{}Q_{t}K_{t+1}),\qquad
    \lambda_{t}\ge{0},\qquad
    V_{t}\ge\theta{}Q_{t}K_{t+1}.
\end{equation}
The banking sector balance sheet is
\begin{equation}\label{eq: banking sector balance sheet}
    Q_{t}K_{t+1}=N_{t}+\frac{D_{t+1}}{R_{t}}.
\end{equation}
The aggregate net worth $N_{t}$ is a sum of the aggregate net worth of survivors $N^1_{t}$ and entrants $N^0_{t}$. A fraction $\sigma$ of old banks survive, so their aggregate net worth is $N^1_{t}=\sigma(X_{t}K_{t}-D_{t})$. The aggregate endowment of entrants is $N^0_{t}=\Bar{N}+\omega{}Q_{t}K_{t}$, where $(\Bar{N},\omega)\in\mathbb{R}^2_+$. Hence,
\begin{equation}\label{eq: banking sector net worth}
    N_{t}=\sigma(X_{t}K_{t}-D_{t})+\Bar{N}+\omega{}Q_{t}K_{t}.
\end{equation}
A necessary condition for the existence of a deterministic steady state is $\sigma{R}<1$, which ensures that the initial net worth $N_{0}>0$---determined by the initial conditions---vanishes as $t\to\infty$. Taking into account \eqref{eq: household Euler deposits}, the former condition is equivalent to $\sigma<\beta$. Quantitatively, $\sigma<\beta$ is also necessary for the existence of an ergodic distribution: $\sigma{R_{t}}<1$ must hold ``on average'' to have $\lim_{t\to\infty}\mathbb{E}(N_{t})\in\mathbb{R}_{++}$.

\subsection{Firms}
The economy is populated by firms that produce final and capital goods.

\subsubsection{Final good producers}
Firms that produce the final good demand labor $L_{t}$ and purchase machines and equipment $K_{t}$ from capital good producers. The technology is represented by a production function $F:\mathbb{R}^2_+\to\mathbb{R}_+$, which is twice continuously differentiable, satisfies Inada conditions, and exhibits constant returns to scale. Firms rely on external financing from banks to purchase capital goods by offering state-contingent securities, which correspond to the quantity of capital goods demanded. By no-arbitrage, the price of both securities and capital goods equals $Q_{t}$. The objective of the firm is then
\begin{equation*}
    \max_{K_{t},L_{t}}A_{t}F(\xi_{t}K_{t},L_{t})+Q_{t}(1-\delta)\xi_{t}K_{t}-X_{t}K_{t}-W_{t}L_{t},
\end{equation*}
where $A_{t}$ is the total factor productivity (TFP), $\xi_{t}$ represents capital quality, and $\delta\in[0,1]$ is the depreciation rate. Both $\{A_{t}\}$ and $\{\xi_{t}\}$ are exogenous stochastic processes. Profit maximization implies
\begin{align}
    X_{t}&=[A_{t}F_{K,t}+Q_{t}(1-\delta)]\xi_{t},\label{eq: firm capital demand}\\
    W_{t}&=A_{t}F_{L,t}.\label{eq: firm labor demand}
\end{align}

\subsubsection{Capital good producers}
Capital goods are produced according to a production technology $(I,K)\to{}f(I,K)$, where $f:\mathbb{R}^2_+\to\mathbb{R}$ is strictly increasing in $I$, increasing in $K$, and concave. An example of such a technology is described in \cite{lucas71}. The firm's problem is static:
\begin{equation*}
    \max_{I_{t}}Q_{t}f(I_{t},K_{t})-I_{t},
\end{equation*}
which implies
\begin{equation}
    Q_{t}=\frac{1}{f_{I,t}}.\label{eq: firm capital supply}
\end{equation}

\subsection{Market clearing}
The capital, securities, and final good markets clear as follows:
\begin{align}
    K_{t+1}&=(1-\delta)\xi_{t}K_{t}+f(I_{t},K_{t}),\label{eq: market clearing capital}\\
    A_{t}F(\xi_{t}K_{t},L_{t})&=C_{t}+I_{t}.\label{eq: market clearing final good}
\end{align}

\subsection{Competitive equilibrium}\label{sec: competitive equilibrium}
We now define the competitive equilibrium (CE) in the unregulated economy.
\begin{definition}\label{def: CE}
    Given initial conditions $D_{0}$, $K_{0}$, transversality conditions, and exogenous stochastic processes $\{A_{t},\xi_{t}\}$, the CE is represented by the following measurable functions that map $S^{t}$ to $\mathbb{R}$ for all $t\ge{0}$:
    \begin{itemize}
        \item allocations $C_{t}$, $D_{t+1}$, $I_{t}$, $K_{t+1}$, $L_{t}$, $N_{t}$;
        \item prices $Q_{t}$, $R_{t}$, $W_{t}$, $X_{t}$;
        \item transformed Lagrange multipliers $\lambda_{t}$, $\nu_{t}$.
    \end{itemize}
    The functions are consistent with \eqref{eq: household labor supply}--\eqref{eq: market clearing final good} for all $t\ge{0}$.
\end{definition}
The linearity of the banker's problem allows the construction of a set of allocations of individual bankers consistent with the CE.

\section{Normative analysis}\label{sec: normative analysis}
This section studies the problem of a benevolent social planner who will maximize household welfare, internalizing the determination of market prices and making the optimal portfolio decisions on behalf of the banking system subject to the aggregate EC. We will characterize the constrained efficient allocation under commitment (CEA) and the Markov-perfect constrained efficient allocation (MCEA). We will show how to implement the CEA and MCEA in the regulated CE with either affine taxes on bank assets and liabilities or state-contingent capital requirements to address the pecuniary externalities and bank entrants/survivors-specific transfers to achieve the targeted bank value distribution.

\subsection{Sources of inefficiency}
To proceed with the formal characterization of the CEA and MCEA, we must derive the aggregate EC of the banking system. Doing so will also clarify the nature of distortions in the CE on an intuitive level.

Let us index the existing bankers with $i\in[0,f]$. We can assume without loss of generality that survivors are always in the $[0,\sigma{}f]$ interval, and entrants are in the $(\sigma{}f,f]$ interval. Hence, the indices of $(1-\sigma)\sigma{}f$ current survivors that will exit next period will be filled by $\sigma(1-\sigma)f$ current entrants that will survive to the next period. Let $v^1_{i,t+1}(s^{t+1})$ denote the bank value of the banker $i$ conditional on survival from $s^{t}$ to $s^{t+1}$. Let $\Delta_{t}\equiv\frac{V^1_{t}}{V_{t}}$, where $V^1_{t}\equiv\int_0^{\sigma{}f}v_{i,t}\dd{i}$ is the aggregate bank value of survivors. It follows that the aggregate bank value of the banking system satisfies
\begin{align}
    V_{t}&\equiv\int_0^{f}v_{i,t}\dd{i}\nonumber\\
    &=\mathbb{E}_{t}\left\{\Lambda_{t,t+1}\left[(1-\sigma)\int_0^{f}n_{i,t+1}\dd{i}+\sigma{}\int_0^{f}v^1_{i,t+1}\dd{i}\right]\right\}\nonumber\\
    &=\mathbb{E}_{t}\left\{\Lambda_{t,t+1}\left[(1-\sigma)(X_{t+1}K_{t+1}-D_{t+1})+\int_0^{\sigma{}f}v_{i,t+1}\dd{i}\right]\right\}\nonumber\\
    &=\mathbb{E}_{t}\{\Lambda_{t,t+1}[(1-\sigma)(X_{t+1}K_{t+1}-D_{t+1})+\Delta_{t+1}V_{t+1}]\},\label{eq: aggregate bank value}
\end{align}
where the second equality follows from the definition of the individual bank value of the banker $i\in[0,f]$, the third equality follows from the fact that each banker $i$ has the same probability of survival from $s^{t}$ to $s^{t+1}$, and the fourth equality follows from definitions of $V^1$ and $\Delta$.

The aggregate EC is
\begin{equation}
    V_{t}\ge\theta{}Q_{t}K_{t+1}.\label{eq: aggregate EC}
\end{equation}
Further using \eqref{eq: firm capital demand} and substituting \eqref{eq: aggregate bank value} in \eqref{eq: aggregate EC}, we obtain
\begin{multline*}
    \mathbb{E}_{t}\Bigl\{\Lambda_{t,t+1}\Bigl[(1-\sigma)\{[A_{t+1}F_K(\xi_{t+1}K_{t+1},L_{t+1})+Q(K_{t+1},K_{t+2},\xi_{t+1})(1-\delta)]\xi_{t+1}K_{t+1}-D_{t+1}\}\\
    +\Delta_{t+1}V_{t+1}\Bigr]\Bigr\}\ge\theta{}Q(K_{t},K_{t+1},\xi_{t})K_{t+1},
\end{multline*}
where the asset price function $Q$ is defined by \eqref{eq: firm capital supply} and \eqref{eq: market clearing capital} as
\begin{equation*}
    Q(\underset{-}{K_{t}},\underset{+}{K_{t+1}},\xi_{t})=\left[\Phi'\left(\Phi^{-1}\left(\frac{K_{t+1}}{K_{t}}-(1-\delta)\xi_{t}\right)\right)\right]^{-1}.
\end{equation*}
The function $Q$ is decreasing in the first argument and increasing in the second argument, which follows from $\Phi$ being strictly increasing and concave.

There are two broad sources of potential distortions in the CE allocation. The first, highlighted in red, arises because individual bankers do not internalize how their asset allocations affect the current asset price and the future asset returns. The second, highlighted in green, reflects that the future continuation value of the banking system conditional on survival might be inefficiently low.

The first type of distortions reflects pecuniary externalities working through the asset price $Q$ and the asset payoff $X$, affecting both the bank value and the value of default---running away with a fraction of assets. First, private bankers do not internalize that higher investment in the real sector---higher $K_{t+1}$ in the aggregate---decreases the future asset returns by decreasing both the future marginal product of capital and the future asset price, which, in turn, decreases the current bank value and makes the ECs of all banks more likely to be binding at $t$. Second, individual bankers do not internalize that greater $K_{t+1}$ increases the current asset price $Q_{t}$, making the default option more attractive and further increasing the probability that ECs of all bankers are binding at $t$. Third, since greater $K_{t+1}$ decreases the future asset price, it has a negative effect on the future value of default, relaxing the future ECs. Fourth, from the perspective of the planner that has commitment, a higher $K_{t+1}$ increases the $t-1$ expectation of the current asset return, thus relaxing the EC at $t-1$. Fifth, from the perspective of the planner that does not have commitment and limits its policies to Markovian ones, the changes in $D_{t+1}$ and $K_{t+1}$ are the changes in the endogenous state variables of the ``future'' planner, having multiple additional effects through the future policy functions. Therefore, the private portfolio decisions might be distorted through multiple channels working in opposite directions, some of which depend on the assumption of commitment from the planner's side. We will study these channels in more detail in the following subsections.

The nature of the second type of potential inefficiencies is linked to how the future bank value conditional on survival affects the current value of the banking system. From the perspective of an individual banker, the continuation value is a product of the constant survival probability $\sigma$ and the future bank value $v_{t+1}$. From the planner's perspective, the aggregate continuation value equals the aggregate bank value of the survived banks $V^1_{t+1}$, which is a state-contingent share $\Delta_{t+1}$ of the aggregate future bank value $V_{t+1}$. If the planner could choose $\{\Delta_{t}\}$, it would generally be optimal to increase it in all contingencies to relax the aggregate EC and thus expand the feasible set, potentially leading to welfare gains.

We are now ready to proceed with the formal characterization of the constrained efficient allocations, both with and without commitment.

\subsection{Constrained efficient allocation under commitment}
Consider the sequential planning problem of optimizing the household welfare by choosing infinite sequences of history-contingent allocations at $t=0$ subject to relevant infinite sequences of history-contingent CE implementability constraints. By Definition \ref{def: CE}, the complete set of CE implementability conditions is \eqref{eq: household labor supply}--\eqref{eq: market clearing final good}. Since we let the planner optimize on behalf of the banking system, the constraints \eqref{eq: bank Euler deposits DE}, \eqref{eq: bank Euler capital DE}, and \eqref{eq: banking sector complementary slackness DE} are not applicable. Consequently, we replace (bankers aggregate EC) with the definition of the aggregate bank value \eqref{eq: aggregate bank value} and the aggregate EC \eqref{eq: aggregate EC}. We can use \eqref{eq: household Euler deposits}, \eqref{eq: banking sector net worth}, \eqref{eq: firm labor demand}, \eqref{eq: firm capital demand}, \eqref{eq: firm capital supply}, and \eqref{eq: market clearing capital} to solve for $R_{t}$, $N_{t}$, $W_{t}$, $X_{t}$, $Q_{t}$, and $I_{t}$, respectively. It is also convenient to define the investment and asset price functions $I$ and $Q$. (The latter has already been defined in the previous subsection.) The investment function $I$ is defined based on \eqref{eq: market clearing capital} as
\begin{equation*}
    I(\underset{-^*}{K_{t}},\underset{+}{K_{t+1}},\xi_{t})=\Phi^{-1}\left(\frac{K_{t+1}}{K_{t}}-(1-\delta)\xi_{t}\right)K_{t},
\end{equation*}
where $^*$ in $-^*$ indicates a numerical statement, although it is true under any reasonable calibration.

Before describing the planning problem, we must decide how to handle $\Delta_{t+1}$ appearing in \eqref{eq: aggregate bank value}. By definition, we must have $\Delta_{t}(s^{t})\in[0,1)$ for all $t\ge{}0$ and $s^{t}\in{}S^{t}$. To see that the right bound is not included, note that otherwise we would have $v_{i,t}=0$ for all entering bankers $i\in(\sigma{}f,f]$. By the individual EC, we would then have $Q_{t}k_{i,t+1}=0$ for all such $i$, implying that all entrants could not operate. Suppose the planner considers $\{\Delta_{t}\}$ as a control variable. Since the latter affects the continuation value in the EC only, it may be optimal to set $\Delta_{t+1}(s^{t+1})\to{}1$ if the EC is binding at $s^{t}$. In such a case, the maximum cannot be attained. To avoid this problem, let us, first, define the CE-consistent bank value distribution $\{\widehat{\sigma}^1_{t}\}$, where
\begin{equation*}
    \widehat{\sigma}^1_{t}\equiv\frac{\sigma(X_{t}K_{t}-D_{t})}{N_{t}}.
\end{equation*}
We will then conduct the analysis under the assumption that $\{\Delta_{t}\}$ is either given or satisfies $\{\Delta_{t}\}=\{\widehat{\sigma}^1_{t}\}$. Since the feasible set for $\{\Delta_{t}\}$ is a space of sequences of functions that map to an open unit interval, we can explore the implications of alternative distributions $\{\Delta_{t}\}$ quantitatively in a straightforward manner.

The sequential planning problem is, therefore,
\begin{equation*}
    \max_{\{C_{t},D_{t+1},K_{t+1},L_{t},V_{t}\}}\mathbb{E}_{0}\left[\sum_{t=0}^{\infty}\beta^{t}U(C_{t},L_{t})\right]
\end{equation*}
subject to
\begin{align*}
    0&=N_{t}-Q(K_{t},K_{t+1},\xi_{t})K_{t+1}+\beta\mathbb{E}_{t}\left(\frac{U_C(C_{t+1},L_{t+1})}{U_C(C_{t},L_{t})}\right)D_{t+1},\\
    0&=\mathbb{E}_{t}\left\{\beta\frac{U_C(C_{t+1},L_{t+1})}{U_C(C_{t},L_{t})}[(1-\sigma)(X_{t+1}K_{t+1}-D_{t+1})+\Delta_{t+1}V_{t+1}]\right\}-V_{t},\\
    0&\le{}V_{t}-\theta{}Q(K_{t},K_{t+1},\xi_{t})K_{t+1},\\
    0&=U_C(C_{t},L_{t})A_{t}F_{L}(\xi_{t}K_{t},L_{t})+U_L(C_{t},L_{t}),\\
    0&=A_{t}F(\xi_{t}K_{t},L_{t})-C_{t}-I(K_{t},K_{t+1},\xi_{t}),
\end{align*}
where
\begin{gather*}
    X_{t}\equiv[A_{t}F_{K}(\xi_{t}K_{t},L_{t})+Q(K_{t},K_{t+1},\xi_{t})(1-\delta)]\xi_{t},\\
    N_{t}\equiv\sigma(X_{t}K_{t}-D_{t})+\Bar{N}+\omega{}Q(K_{t},K_{t+1},\xi_{t})K_{t},
\end{gather*}
and $\{\Delta_{t}\}$ is either given or satisfies
\begin{equation*}
    \Delta_{t}=\Delta(\underset{-}{D_{t}},\underset{-^*}{K_{t}},\underset{+^*}{K_{t+1}},\underset{+}{L_{t}},A_{t},\xi_{t})=\widehat{\sigma}^1_{t},
\end{equation*}
where we again use notation $-^*$ and $+^*$ to indicate numerical statements.

Let us denote the Lagrange multipliers on the planner's constraints---normalized by $\beta^{t}\pi(s^{t})$---as $\widetilde{\nu}_{t}$, $\widetilde{\gamma}_{t}$, $\widetilde{\lambda}_{t}$, $\lambda^L_{t}$, and $\lambda^Y_{t}$, respectively. Define $\nu_{t}\equiv\frac{\widetilde{\nu}_{t}}{U_{C,t}}$, $\lambda_{t}\equiv\frac{\widetilde{\lambda}_{t}}{U_{C,t}}$, and $\gamma_{t}\equiv\frac{\widetilde{\gamma}_{t}}{U_{C,t}}$. As in the CE, define $\widehat{x}_{t}\equiv\frac{x_{t}}{\gamma_{t}}$ and $\Bar{x}_{t}\equiv\frac{\widehat{x}_{t}}{1-\widehat{\lambda}_{t}}$ for $x\in\{\nu,\lambda,\lambda^L,\lambda^Y\}$.

As discussed in the previous subsection, there are multiple potential sources of inefficiency of the CE allocation. The next proposition provides a formal validation.
\begin{proposition}\label{prop: CEA characterization}
    The CE allocation is generally inefficient compared to the CEA. The CEA analogs of the Euler equations \eqref{eq: bank Euler deposits DE} and \eqref{eq: bank Euler capital DE} are
    \begin{align*}
        \Tilde{\nu}_{t}&=(1+\Tilde{\lambda}_{t})\mathbb{E}_{t}\left[\Lambda_{t,t+1}\left(1-\sigma+\Delta_{t+1}\sigma\Tilde{\nu}_{t+1}\underbrace{-\frac{\partial\Delta_{t+1}}{\partial{}D_{t+1}}V_{t+1}}_{\text{future distribution}\;(+)}\right)\right]R_{t},\\
        \theta\Tilde{\lambda}_{t}+\Tilde{\nu}_{t}&=(1+\Tilde{\lambda}_{t})\mathbb{E}_{t}[\Lambda_{t,t+1}(1-\sigma+\Delta_{t+1}\sigma\Tilde{\nu}_{t+1})\frac{X_{t+1}}{Q_{t}}]+\Psi^K_{t},
    \end{align*}
    where $\Psi^K_{t}$ satisfies
    \begin{align*}
        Q_{t}\Psi^K_{t}&\equiv\underbrace{\Tilde{\nu}_{t}Q_{2,t}\{[\sigma(1-\delta)\xi_{t}+\omega]K_{t}-K_{t+1}\}}_{\text{balance sheet}\;(-^*)}+\underbrace{(1+\Tilde{\lambda}_{t})\mathbb{E}_{t}\left(\Lambda_{t,t+1}\frac{\partial\Delta_{t+1}}{\partial{}K_{t+1}}V_{t+1}\right)}_{\text{future distribution}\;(-^*)}\\
        &\quad\underbrace{-\Tilde{\lambda}_{t}\theta{}Q_{2,t}K_{t+1}}_{\text{value of default}\;(-)}\underbrace{-\frac{\Tilde{\lambda}^Y_{t}}{U_{C,t}}I_{2,t}}_{\text{consumption}\;(-^*)}+\underbrace{\frac{\bm{1}_\mathbb{N}(t)}{\Delta_{t}}\Biggl[\underbrace{(1-\sigma)Q_{2,t}(1-\delta)\xi_{t}K_{t}}_{\text{asset return}\;(+)}+\underbrace{\frac{\partial\Delta_{t}}{\partial{}K_{t+1}}V_{t}}_{\text{distribution}\;(+^*)}\Biggr]}_\text{$t-1$ expectations}\\
        &\quad+\underbrace{(1+\Tilde{\lambda}_{t})\mathbb{E}_{t}\{\Lambda_{t,t+1}(1-\sigma+\Delta_{t+1}\sigma\Tilde{\nu}_{t+1})[A_{t+1}F_{KK,t+1}\xi_{t+1}+Q_{1,t+1}(1-\delta)]\xi_{t+1}K_{t+1}\}}_{\text{future asset return}\;(-)}\\
        &\quad+(1+\Tilde{\lambda}_{t})\mathbb{E}_{t}\Biggl(\Lambda_{t,t+1}\Delta_{t+1}\Biggl\{\underbrace{\Tilde{\nu}_{t+1}[\omega(Q_{1,t+1}K_{t+1}+Q_{t+1})-Q_{1,t+1}K_{t+2}]}_{\text{future balance sheet}\;(+^*)}\\
        &\quad\underbrace{-\Tilde{\lambda}_{t+1}\theta{}Q_{1,t+1}K_{t+2}}_{\text{future value of default}\;(+)}+\underbrace{\Tilde{\lambda}^L_{t+1}A_{t+1}F_{KL,t+1}\xi_{t+1}}_{\text{future wage}\;(+^*)}+\underbrace{\frac{\Tilde{\lambda}^Y_{t+1}}{U_{C,t+1}}(A_{t+1}F_{K,t+1}\xi_{t+1}-I_{1,t+1})}_{\text{future consumption}\;(+^*)}\Biggr\}\Biggr).
    \end{align*}
    Moreover, the following holds.
    \begin{enumerate}
        \item If \eqref{eq: aggregate EC} at the CEA is either binding almost surely (a.s.) or nonbinding a.s. for all $t\ge{}0$, then the CEA is time consistent. Otherwise, it is generally time inconsistent.
        \item If $\{\Delta_{t}\}=\{\widehat{\sigma}^1_{t}\}$ and \eqref{eq: aggregate EC} at the CEA is binding a.s. for all $t\ge{}0$, and the CEA satisfies $\mathbb{E}_{t}[\Lambda_{t,t+1}f_{t+1}(\frac{X_{t+1}}{Q_{t}}-R_{t})]\ge{}0$ for all $\{f_{t}\}_{t\ge{}1}$ with $f_{t}:S^{t}\to\mathbb{R}_{++}$, then the CEA equals the CE allocation, that is, the latter is constrained efficient.
        \item Given $\{\Delta_{t}\}$, for all $t\ge{}0$, let $\Bar{S}^{t}\subseteq{}S^{t}$ be the set of histories at which \eqref{eq: aggregate EC} is strictly binding---in the sense that the corresponding Lagrange multiplier is positive. If $\Bar{S}^{t}$ is of positive measure at least for some $t\ge{}0$, then there exists $\{\widetilde{\sigma}^1_{t}\}$ with $\widetilde{\sigma}^1_{t}(s^{t})\in[\Delta_{t}(s^{t}),1)$ for all $(t,s^{t})$ such that $\{\widetilde{\sigma}^1_{t}\}$ is strictly preferred to $\{\Delta_{t}\}$.
    \end{enumerate}
\end{proposition}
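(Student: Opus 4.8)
The plan is to treat Proposition~\ref{prop: CEA characterization} in four stages: derive the planner's Euler equations, then dispatch the three numbered claims in turn. First I would form the Lagrangian of the sequential problem with the five per-period multipliers $\widetilde{\nu}_t,\widetilde{\gamma}_t,\widetilde{\lambda}_t,\lambda^L_t,\lambda^Y_t$ (each normalized by $\beta^t\pi(s^t)$) and differentiate with respect to each control. The conditions for $C_t$, $L_t$, and $D_{t+1}$ are routine and pin down $\lambda^Y_t$, $\lambda^L_t$, and the deposit Euler equation. The only delicate first-order condition is the one for $K_{t+1}$: since the capital chosen at $t$ enters the asset-price and investment functions $Q$ and $I$ and the productivity terms at dates $t-1$, $t$, and $t+1$, its condition collects derivatives from the balance-sheet, bank-value, enforcement, labor-supply, and resource constraints at all three dates. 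Grouping these derivatives by the constraint and date from which they originate reproduces exactly the labeled blocks of $\Psi^K_t$; the date-$(t-1)$ block (the terms through $Q_{2,t}$ and $\partial\Delta_t/\partial K_{t+1}$) exists only for $t\ge1$, which is the source of $\bm{1}_\mathbb{N}(t)$. The first-order condition for $V_t$, together with the appearance of $\Delta_tV_t$ inside the date-$(t-1)$ bank-value constraint, delivers the co-state recursion $\gamma_t=\gamma_{t-1}\Delta_t+\lambda_t$; rewriting $\gamma_{t-1}$ in terms of $\gamma_t$ via this recursion is what attaches the $1/\Delta_t$ prefactor to the commitment block.

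For claim 1, I would compare this system to the analogous Markov (MCEA) optimality conditions, in which a planner re-optimizing at $t$ takes the inherited $V_{t-1}$ and the endogenous state as given and so cannot exploit the effect of date-$t$ choices on the date-$(t-1)$ enforcement constraint. The only discrepancy between the two systems is the date-$(t-1)$ expectations block of $\Psi^K_t$. I would then argue this block is immaterial to the allocation in exactly the two stated regimes: if \eqref{eq: aggregate EC} is slack a.s.\ for all $t$ then $\lambda_t=0$ and the friction never constrains the real allocation, so the CEA coincides with the first best and is trivially time consistent; if it binds a.s.\ for all $t$ then $V_t=\theta Q_tK_{t+1}$ ties the forward-looking value to contemporaneous objects, so the promise carried in $V_{t-1}$ has no free content and a re-optimizer reproduces the commitment path. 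In the mixed case the block is generically nonzero along the path, and since it depends on $Q_{2,t}$ and $\partial\Delta_t/\partial K_{t+1}$, which do not vanish under the maintained curvature of $Q$ and $\Delta$, this is not a knife-edge, giving time inconsistency.

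For claim 2, with $\{\Delta_t\}=\{\widehat{\sigma}^1_t\}$ the planner's bank-value constraint collapses to the CE recursion~\eqref{eq: aggregate bank value}, so the two problems share the same forward-looking constraint and the distribution and future-distribution terms reduce to their CE counterparts. I would then verify that the CE allocation together with the image of the CE multipliers satisfies the full CEA system: the equality constraints hold by construction, the EC binds a.s.\ by hypothesis, and the remaining task is to show $\Psi^K_t$ vanishes along the CE path. Here the binding EC does the work: since $V_t=\theta Q_tK_{t+1}$ and $V_{t+1}=\theta Q_{t+1}K_{t+2}$ hold identically, the asset-price sensitivities entering through $Q_{2,t}$ (balance sheet and value of default) and through $Q_{1,t+1}$ (future balance sheet, future value of default, future asset return) cancel in pairs. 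The spread hypothesis, being equivalent to $X_{t+1}/Q_t\ge R_t$ a.s.\ (concentrate a positive $f_{t+1}$ on any successor to force the sign), guarantees the enforcement multiplier is nonnegative in every contingency, i.e.\ that the always-binding regime is consistent with the KKT conditions rather than spurious.

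For claim 3, I would use a monotone-relaxation argument. Holding the allocation fixed, raising the sequence $\{\Delta_t\}$ pointwise raises every $V_t$ through the bank-value equality, hence weakly relaxes every enforcement constraint, so the feasible set expands and the planner's value is nondecreasing in $\{\Delta_t\}$. To obtain strictness I would increase $\Delta_{t+1}$ only on the successors of the positive-measure set $\Bar{S}^t$ on which $\widetilde{\lambda}_t>0$, keeping the new $\widetilde{\sigma}^1_t<1$ so the maximum is still attained. By the envelope theorem the marginal value of this relaxation is the integral of $\widetilde{\lambda}_t$ against the induced increase in $V_t$ over $\Bar{S}^t$, which is strictly positive, so a small such increase yields a strictly preferred $\{\widetilde{\sigma}^1_t\}$ with $\widetilde{\sigma}^1_t(s^t)\in[\Delta_t(s^t),1)$ everywhere. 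I expect the main obstacle to lie in the first two stages: the careful bookkeeping that assembles $\Psi^K_t$ from the three-date appearance of $K_{t+1}$, and, for claim 2, the pairwise cancellation of the pecuniary terms under the binding EC; the upgrade from weak to strict improvement in claim 3 is comparatively routine once attainment of the maximum is preserved by bounding $\widetilde{\sigma}^1_t$ away from $1$.
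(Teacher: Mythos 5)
Your derivation of the planner's optimality system follows the paper's own route: the Lagrangian with the five multipliers, the co-state recursion $\gamma_{t}=\bm{1}_{\mathbb{N}}(t)\Delta_{t}\gamma_{t-1}+\lambda_{t}$ from the first-order condition for $V_{t}$, the three-date bookkeeping for $K_{t+1}$ that assembles the blocks of $\Psi^{K}_{t}$, and the $1/\Delta_{t}$ prefactor obtained by inverting the co-state recursion (one omission: the stationarizing division by $\gamma_{t}$ requires $\gamma_{t}>0$, which the paper secures by restricting attention to $t\ge t^{*}$ with $\lambda_{t^{*}}>0$). Claim 3 is also handled essentially as in the paper---raise $\Delta_{t+1}$ only on successors of the positive-measure set $\Bar{S}^{t}$, keep the new distribution below one so the maximum is still attained, and obtain strictness from the positive multiplier on the EC. For claim 1 your two regime arguments reach the right conclusions, but your framing that ``the only discrepancy with the Markov conditions is the $t-1$ block'' is inaccurate (the MCEA conditions contain the $\Xi^{D}_{S},\Xi^{K}_{S}$ terms, which have no counterpart under commitment); the paper's cleaner argument is that in either regime the allocation is determined by the implementability constraints alone, which can be written recursively on the state $(D,K,z)$, so re-optimization cannot change it.

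The genuine gap is claim 2. Your plan is to verify that the CE allocation satisfies the planner's first-order conditions, the key step being that $\Psi^{K}_{t}$ vanishes along the CE path by ``pairwise cancellation'' under the binding EC. This fails twice over. Technically, the cancellation does not occur: the balance-sheet and value-of-default terms sum to $Q_{2,t}\{\Tilde{\nu}_{t}[\sigma(1-\delta)\xi_{t}+\omega]K_{t}-(\Tilde{\nu}_{t}+\theta\Tilde{\lambda}_{t})K_{t+1}\}$, which $V_{t}=\theta{}Q_{t}K_{t+1}$ does not annihilate, and the consumption ($I_{2,t}$), future-wage, future-consumption, and distribution terms have no partners at all. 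Conceptually, nothing requires $\Psi^{K}_{t}$ to vanish: when the EC binds a.s., the five equality constraints per period (balance sheet, bank value, binding EC, labor, resource) exactly determine the five controls $(C_{t},D_{t+1},K_{t+1},L_{t},V_{t})$, so the CEA allocation is pinned down by the constraints alone, and the planner's first-order conditions merely solve for the planner's multipliers, which absorb whatever value $\Psi^{K}_{t}$ takes; the planner's and the bankers' multipliers are different objects, so no identification of the two systems is needed or possible. The work that remains---and this is what the paper does---runs in the opposite direction: show that the constraint-determined allocation is decentralizable as a CE by solving \eqref{eq: bank Euler deposits DE}--\eqref{eq: bank Euler capital DE} for the CE multipliers at that allocation, invoking the spread hypothesis with $f_{t}=1-\sigma+\sigma\Tilde{\nu}_{t}$ to guarantee $\Tilde{\lambda}_{t}\ge{}0$, with complementary slackness holding because the EC binds. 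Your observation that the spread hypothesis is equivalent to $X_{t+1}/Q_{t}\ge{}R_{t}$ a.s.\ is correct, but you deploy it for the wrong purpose (consistency of the binding regime) rather than for signing the decentralizing CE multiplier.
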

\begin{proof}
    See Appendix \ref{sec: proof CEA characterization}.
\end{proof}

Our first observation is that if the planner takes the distribution as given, there is no distortion in the choice of deposits, consistent with our intuitive analysis in the previous subsection.\footnote{One might notice a slight difference in the Euler equations: instead of $\sigma\Tilde{\nu}_{t+1}$ in Proposition \ref{prop: banker's problem} we have $\Delta_{t+1}\sigma\Tilde{\nu}_{t+1}$ in Proposition \ref{prop: CEA characterization}---this difference is solely due to how the Lagrange multiplier on the bank value constraint is related to the multiplier on the EC. All original multipliers are stationary in the CEA, unlike in the CE. If one writes the deposit Euler equation in terms of the original multipliers, it will be symbolically equivalent to that in the CE, as one can verify in the proofs of Propositions \ref{prop: banker's problem} and \ref{prop: CEA characterization}.} If the planner internalizes the determination of the distribution, a wedge between the deposit Euler equations does appear: the social marginal cost of deposits is greater than the private marginal cost because the planner understands that greater borrowing at $t$ has a negative effect on the future net worth of survived banks and, therefore, on their relative bank value. At the same time, the presence of the wedge in the Euler equation should not necessarily lead to overborrowing in the CE because the deposit Euler equation is essentially a fixed-point equation in the transformed multiplier $\{\Tilde{\nu}_{t}\}$ conditional on other variables, and the multipliers in the CE and CEA are generally different.

The wedge between the asset Euler equations $\Psi^K_{t}$ consists of multiple terms with opposing effects on the sign of the wedge. If $\{\Delta_{t}\}=\{\widehat{\sigma}^1_{t}\}$, there are two terms (highlighted in green) capturing the effect of the choice of capital on the bank value distribution. On the one hand, greater capital negatively affects the future distribution through the negative effect on the future asset price and asset return. On the other hand, greater capital positively affects the $t-1$ expectation of the distribution at $t$ through the positive effect on the current asset price. Both effects rely on the nature of commitment.

Consider the remaining terms that do not depend on the ability to affect the distribution. Greater capital affects the current asset price positively, increasing the ex post asset return and net worth of both survivors and entrants (the liability side) while also directly increasing the value of bank assets. These two balance sheet margins typically have a negative net effect on the planner's marginal value of capital. By increasing bank assets, greater capital immediately increases the value of default, thus negatively affecting the marginal benefit of capital. Moreover, it increases investment and lowers consumption, generating an additional negative partial effect. The final negative effect is due to the negative impact of greater capital on the future asset price and asset return.

There are several positive effects. With commitment, greater capital and a greater asset price at $t$ affect the $t-1$ expectation of the current asset return positively, increasing the bank value and relaxing the EC at $t-1$. Furthermore, the balance sheet, value of default, and consumption channels described in the previous paragraph have their future counterparts since the asset price function depends on both the beginning-of-the-period and end-of-the-period capital stock. Greater capital has a negative effect on the future asset price; therefore, the future balance sheet, value of default, and consumption effects have positive signs. Furthermore, greater capital increases the future marginal product of labor and the wage rate, having an additional positive effect.

The inefficiency of the CE allocation relative to the CEA and the fact that the planner chooses allocations that must be consistent with the forward-looking household Euler equation \eqref{eq: household Euler deposits} and the definition of the forward-looking aggregate bank value \eqref{eq: aggregate bank value} imply that the CEA is generally time inconsistent. There is a special case when the CEA is time consistent: it happens if the EC is either always binding or always nonbinding at the CEA. In such a case, the implementability constraints completely determine the CEA. These constraints can be formulated recursively as a system of functional equations on the state space $(D,K,z)$; therefore, the CEA must be time consistent. If this situation occurs with $\{\Delta_{t}\}=\{\widehat{\sigma}^1_{t}\}$, the CEA implementability constraints are necessary for the CE. They are, moreover, sufficient if the expected credit spread discounted with the pricing kernel $\Lambda_{t,t+1}f_{t+1}$ for positive-valued $f$ is nonnegative in the CEA. (This condition guarantees that the CE Lagrange multiplier $\Tilde{\lambda}_{t}$ is nonnegative.) The described situation does not arise quantitatively: the EC is only occasionally binding. Nevertheless, this result has an implication for computing macro-banking models similar to that in this paper. If we computed such a model ignoring the occasionally binding constraint, assuming that it is always binding, we would not be able to identify the externalities and would wrongly conclude that the CE allocation is efficient. Note that the issue here is not in the order of local approximation---the allocations would seem identical independently of the order---but in accounting for the occasionally binding constraint properly.

The final part of Proposition \ref{prop: CEA characterization} states that for a given distribution $\{\Delta_{t}\}$, we can generally find an alternative distribution $\{\widetilde{\sigma}^1_{t}\}$ which is at least weakly preferred to $\{\Delta_{t}\}$ as long as the CEA at the original distribution has contingencies in which the EC is binding. The alternative distribution increases the future bank value of survivors, which automatically increases the current bank value both at $s^{t}$ and the preceding contingencies, relaxing the EC at those contingencies and expanding the planner's feasible set. Again, this argument relies on the nature of commitment: the planner relaxes the EC at $t$ by promising a more survivors-biased distribution at $t+1$, bearing similarity with forward guidance for monetary policy. Note also that ex post, the planner is indifferent between honoring such promises or not because $\{\Delta_{t}\}$ affects the planner's constraints only through the continuation value in the forward-looking bank value. In other words, affecting the bank value distribution by itself is \emph{not} a source of time inconsistency.

\subsection{Markov perfect equilibrium}
Since the CEA is generally time inconsistent, a thorough and complete investigation of the constrained efficiency of the economy considered in this paper requires exploring the implications of the lack of commitment by the policymaker. To do so, we will study a Markov perfect equilibrium (MPE) of a noncooperative game between sequential---``current'' and ``future''---social planners \citep{klein08}. We will focus on the concept of MPE due to its quantitative tractability, following \cite{bianchi16} and \cite{bianchi18} who applied this approach in the analysis of optimal macroprudential policy in small open economies. Other concepts of time-consistent policies exist, such as sustainable policies \citep{chari90}, and Markov policies are generally inferior to history-contingent sustainable policies.

Denote as $\mathcal{X}\subseteq\mathbb{R}^2$ and $\mathcal{Z}\subseteq\mathbb{R}^2$ the spaces of endogenous and exogenous state variables $X\in\mathcal{X}$ and $z\in\mathcal{Z}$. We have $X=(D,K)$ and $z=(A,\xi)$. Let $\mathcal{S}\equiv\mathcal{X}\times\mathcal{Z}$ with $S\in\mathcal{S}$. To simplify notation, we will often use the subscripts $S$ and $S'$ to denote the values of functions evaluated at those states. Denote the future planner's value and policy functions as $\Bar{V}^h$, $\Bar{C}$, $\Bar{K}'$, $\Bar{L}$, $\Bar{V}^1$, where all functions map $\mathcal{S}\to\mathbb{R}$. Since the current planner can affect the future bank value of survivors $\Bar{V}^1$ only indirectly by affecting $S'=(D',K',z')$, we can use \eqref{eq: aggregate bank value} to solve for $V$, removing it from the set of implementability conditions. The current planner's best response to the future planner's decisions is represented by
\begin{equation*}
    V^h(S)=\max_{(C,D',K',L)\in\mathcal{G}(S)}{U(C,L)+\beta\mathbb{E}_{z}(\Bar{V}^h(S'))},
\end{equation*}
where $\mathcal{G}:\mathcal{S}\to\mathcal{P}(\mathbb{R}^4_{+})$ is defined by the constraints
\begin{align*}
    0&=\sigma(X_{S}K-D)+\Bar{N}+Q(K,K',\xi)(\omega{}K-K')+\beta\mathbb{E}_{z}\left(\frac{U_C(\Bar{C}_{S'},\Bar{L}_{S'})}{U_C(C,L)}\right)D',\\
    0&\le{}\beta\mathbb{E}_{z}\left\{\frac{U_C(\Bar{C}_{S'},\Bar{L}_{S'})}{U_C(C,L)}[(1-\sigma)(X_{S'}K'-D')+\Bar{V}^1_{S'}]\right\}-\theta{}Q(K,K',\xi)K',\\
    0&=U_C(C,L)A{}F_{L}(\xi{}K,L)+U_L(C,L),\\
    0&=A{}F(\xi{}K,L)-C-I(K,K',\xi),
\end{align*}
where
\begin{gather*}
    X_{S}\equiv[A{}F_{K}(\xi{}K,L)+Q(K,K',\xi)(1-\delta)]\xi,\\
    X_{S'}\equiv[A'F_{K}(\xi'K',\Bar{L}_{S'})+Q(K',\Bar{K}'_{S'},\xi')(1-\delta)]\xi'.
\end{gather*}
In an MPE for a given distribution $\Delta:\mathcal{S}\to[0,1)$, $V^h\equiv\Bar{V}^h$ solves the Bellman equation, and policy functions of the current and future planners coincide. In particular, $V^1$ satisfies
\begin{equation*}
    V^1_{S}=\Delta_{S}\mathbb{E}_{z}\{\Lambda_{S,S'}[(1-\sigma)(X_{S'}K'_{S}-D'_{S})+V^1_{S'}]\}.
\end{equation*}

Consistent with the notation used so far, let us denote the Lagrange multipliers on the planner's constraints as $\widetilde{\nu}$, $\widetilde{\lambda}$, $\lambda^L$, and $\lambda^Y$, respectively. Define $\nu_{S}\equiv\frac{\widetilde{\nu}_{S}}{U_{C,S}}$ and $\lambda_{S}\equiv\frac{\widetilde{\lambda}_{S}}{U_{C,S}}$. The next proposition parallels Proposition \ref{prop: CEA characterization} in the context of the MPE.
\begin{proposition}\label{prop: MCEA characterization}
    The CE allocation is generally inefficient compared to the MCEA. Under the assumption of differentiability of the policy functions, the MCEA generalized Euler equations associated with $D'$ and $K'$ can be represented as
    \begin{align*}
        \nu_{S}&=R_{S}\mathbb{E}_{z}\{\Lambda_{S,S'}[(1-\sigma)\lambda_{S}+\sigma\nu_{S'}]\}\underbrace{-\frac{R_{S}\Xi^D_{S}}{U_{C,S}}}_\text{future policy},\\
        \theta\lambda_{S}+\nu_{S}&=\mathbb{E}_{z}\left\{\Lambda_{S,S'}[(1-\sigma)\lambda_{S}+\sigma\nu_{S'}]\frac{X_{S'}}{Q_{S}}\right\}+\Omega^K_{S}+\underbrace{\frac{\Xi^K_{S}}{Q_{S}U_{C,S}}}_\text{future policy},
    \end{align*}
    where for $x\in\{D,K\}$,
    \begin{multline*}
        \Xi^X_{S}\equiv\beta\nu_{S}\mathbb{E}_{z}(\underbrace{U_{CC,S'}\Bar{C}_{X,S'}+U_{CL,S'}\Bar{L}_{X,S'}}_{\text{SDF in deposit rate}\;(-^*)})D'_{S}\\
        +\beta\lambda_{S}\mathbb{E}_{z}\Bigl(\Bigl[(\underbrace{U_{CC,S'}\Bar{C}_{X,S'}+U_{CL,S'}\Bar{L}_{X,S'}}_{\text{SDF in aggregate bank value}\;(-^*)})[(1-\sigma)(X_{S'}K'_{S}-D'_{S})+\Bar{V}^1_{S'}]\\
        +U_{C,S'}\{\underbrace{(1-\sigma)[A'F_{KL,S'}\Bar{L}_{X,S'}+Q_{2,S'}\Bar{K}'_{X,S'}(1-\delta)]\xi'K'_{S}+\Bar{V}^1_{X,S'}}_\text{future asset return and bank value of survivors}\}\Bigr]\Bigr),
    \end{multline*}
    where SDF is the stochastic discount factor, is the combined marginal effect of $X'$ on the current planner's Lagrangian through the policy functions of the future planner $\Bar{C}$, $\Bar{L}$, $\Bar{K}'$, and $\Bar{V}^1$. The capital wedge satisfies
    \begin{align*}
        Q_{S}\Omega^K_{S}&\equiv\underbrace{\nu_{S}Q_{2,S}\{[\sigma(1-\delta)\xi+\omega]K-K'_{S}\}}_{\text{balance sheet}\;(-^*)}\underbrace{-\lambda_{S}\theta{}Q_{2,S}K'_{S}}_{\text{value of default}\;(-)}\underbrace{-\frac{\lambda^Y_{S}}{U_{C,S}}I_{2,S}}_{\text{consumption}\;(-^*)}\\
        &\quad+\underbrace{\mathbb{E}_{z}\{\Lambda_{S,S'}[(1-\sigma)\lambda_{S}+\sigma\nu_{S'}][A'{}F_{KK,S'}\xi'+Q_{1,S'}(1-\delta)]\xi'{}K'_{S}\}}_{\text{future asset return}\;(-)}\\
        &\quad+\mathbb{E}_{z}\Biggl(\Lambda_{S,S'}\Biggl\{\underbrace{\nu_{S'}[\omega(Q_{1,S'}K'_{S}+Q_{S'})-Q_{1,S'}K'_{S'}]}_{\text{future balance sheet}\;(+^*)}\underbrace{-\lambda_{S'}\theta{}Q_{1,S'}K'_{S'}}_{\text{future value of default}\;(+)}+\underbrace{\lambda^L_{S'}A'{}F_{KL,S'}\xi'}_{\text{future wage}\;(+^*)}\\
        &\quad+\underbrace{\frac{\lambda^Y_{S'}}{U_{C,S'}}(A'{}F_{K,S'}\xi'-I_{1,S'})}_{\text{future consumption}\;(+^*)}\Biggr\}\Biggr).
    \end{align*}
\end{proposition}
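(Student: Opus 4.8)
The plan is to obtain the two generalized Euler equations (GEEs) from the Kuhn--Tucker and envelope conditions of the current planner's recursive best-response problem, treating the future planner's policy functions $\Bar{C}$, $\Bar{L}$, $\Bar{K}'$, $\Bar{V}^1$ as known differentiable objects evaluated at the induced next-period state $S'=(D',K',z')$. First I would attach the multipliers $\widetilde{\nu}$, $\widetilde{\lambda}$, $\lambda^L$, $\lambda^Y$ to the four constraints defining $\mathcal{G}(S)$ and form the Lagrangian including the continuation value $\beta\mathbb{E}_z(\Bar{V}^h(S'))$. The static first-order conditions for $C$ and $L$ fix the normalization that converts the raw multipliers into $\nu_S=\widetilde{\nu}_S/U_{C,S}$ and $\lambda_S=\widetilde{\lambda}_S/U_{C,S}$ and determine the auxiliary multipliers $\lambda^L$ and $\lambda^Y$; these carry no intertemporal content, so I would dispatch them quickly.

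The substantive step is the pair of intertemporal conditions for $D'$ and $K'$, where I would sort every term into three groups. Group (i): the direct appearances of $D'$, $K'$ in the constraints, namely the dependence of $Q(K,K',\xi)$ and $I(K,K',\xi)$ on their second argument, the multiplicands in $X_{S'}K'-D'$ and in $\theta Q K'$, and the direct dependence of $X_{S'}$ on $K'$ through $F_K(\xi'K',\cdot)$ and $Q(K',\cdot,\xi')$. These yield the left-hand sides $\nu_S$ and $\theta\lambda_S+\nu_S$ (from $-\widetilde{\nu}_S Q_S$ and $-\widetilde{\lambda}_S\theta Q_S$), the contemporaneous balance-sheet, value-of-default, and consumption pieces of $\Omega^K_S$ (from $Q_{2,S}$ and $I_{2,S}$), and the $(1-\sigma)\lambda_S$ half of the main and of the future-asset-return terms. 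Group (ii): the channels routed through the future policy functions, since $\Bar{C}_{S'},\Bar{L}_{S'}$ enter both stochastic discount factors while $\Bar{L}_{S'},\Bar{K}'_{S'},\Bar{V}^1_{S'}$ enter $X_{S'}$ and the continuation piece inside the enforcement constraint; collecting these with the chain rule and the sensitivities $\Bar{C}_{X,S'},\Bar{L}_{X,S'},\Bar{K}'_{X,S'},\Bar{V}^1_{X,S'}$ produces exactly $\Xi^D_S$ and $\Xi^K_S$. Group (iii): the derivative of the continuation value, $\beta\mathbb{E}_z(\Bar{V}^h_X(S'))$ for $X\in\{D,K\}$.

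To evaluate group (iii) I would apply the envelope theorem to $V^h(S)$, differentiating only through the explicit state dependence. Since $D$ enters solely via $-\sigma D$ in the balance sheet, $\Bar{V}^h_D(S)=-\sigma\widetilde{\nu}_S$, which after normalization supplies the $\sigma\nu_{S'}$ term in both GEEs; while $K$ enters the balance sheet (through $\sigma X_S K$ and $Q(K,K',\xi)(\omega K-K')$), the enforcement constraint (through $\theta Q(K,K',\xi)K'$), the labor constraint, and the resource constraint, producing the future balance-sheet, future value-of-default, future wage, and future consumption terms of $\Omega^K_S$ together with the $\sigma\nu_{S'}$ half of the main and future-asset-return terms. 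The crucial bookkeeping point---and what I expect to be the main obstacle---is that the envelope conditions must be taken \emph{without} differentiating through the future policy functions, because those depend on next-period states rather than on the current state $S$; the future-policy channels therefore enter once and only once, inside the first-order conditions as $\Xi^D_S$, $\Xi^K_S$, and never again in the envelope step. Keeping this direct-versus-policy split clean across both the FOC and the envelope computations, and reassembling the recurring combination $(1-\sigma)\lambda_S+\sigma\nu_{S'}$ from its two distinct origins (the exit term in the enforcement constraint and the survival term in the continuation value), is where miscounting would most easily occur.

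Finally I would divide every relation by $U_{C,S}$, substitute $R_S=1/\mathbb{E}_z(\Lambda_{S,S'})$ and $\Lambda_{S,S'}=\beta U_{C,S'}/U_{C,S}$ from \eqref{eq: household Euler deposits} and the definition of the stochastic discount factor, and rearrange to the stated forms, isolating $\Xi^D_S/U_{C,S}$ and $\Xi^K_S/(Q_S U_{C,S})$ as the future-policy wedges and the remaining current-and-future terms as $\Omega^K_S$. The inefficiency assertion then follows by inspection: stripping the wedges $\Xi^D_S,\Xi^K_S,\Omega^K_S$ would return the privately optimal conditions \eqref{eq: bank Euler deposits DE}--\eqref{eq: bank Euler capital DE} that an atomistic banker obeys, and these wedges---built from the pecuniary derivatives $Q_{1},Q_{2}$ and the nonzero policy-function sensitivities---do not vanish generically, so the MCEA and the CE allocation cannot coincide in general.
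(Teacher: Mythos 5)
Your derivation follows essentially the same route as the paper's proof: form the Lagrangian of the current planner's recursive best response with multipliers $(\widetilde{\nu},\widetilde{\lambda},\lambda^L,\lambda^Y)$, take FOCs for $D'$ and $K'$ in which the effects routed through the future planner's policy functions are collected into $\Xi^D_S$ and $\Xi^K_S$, compute the envelope conditions $V^h_{D,S}=-\sigma\nu_S U_{C,S}$ and $V^h_{K,S}$ by differentiating only through the explicit state dependence, substitute, divide by $U_{C,S}$, and use the household Euler equation to replace $1/\mathbb{E}_z(\Lambda_{S,S'})$ with $R_S$. Your bookkeeping rule---policy-function sensitivities enter once in the FOCs and never in the envelope step---is exactly what the paper's computation implements.

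One correction to your closing argument: stripping the wedges $\Xi^D_S$, $\Xi^K_S$, $\Omega^K_S$ does \emph{not} return the banker's conditions \eqref{eq: bank Euler deposits DE}--\eqref{eq: bank Euler capital DE}. The direct quantity terms themselves differ: the planner's GEEs carry $(1-\sigma)\lambda_S+\sigma\nu_{S'}$ where the banker's Euler equations carry $(1+\lambda_S)(1-\sigma+\sigma\nu_{S'})$, because the bank value enters the banker's objective as well as the EC (hence the factor $1+\lambda_S$), whereas for the planner it enters only the EC and the shadow value of net worth is tied to the household value function rather than the bank's. The paper stresses this distinction in the discussion following the proposition. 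Your conclusion of generic inefficiency still stands---the equations differ even more than your argument asserts---but the claimed term-by-term coincidence of the wedge-free systems is false, so that sentence should be rewritten to compare the two multiplier structures rather than identify them.
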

\begin{proof}
    See Appendix \ref{sec: proof MCEA characterization}.
\end{proof}
First, as in the case of commitment, we must be aware that the planner's (transformed) Lagrange multipliers are generally different from those in the CE. Moreover, the direct quantity effects in the planner's Euler equations (right-hand sides without the wedges) are different from those in the CE: the planner's $(1-\sigma)\lambda_{S}+\sigma\nu_{S'}$ corresponds to the individual banker's $(1+\lambda_{S})(1-\sigma+\sigma\nu_{S'})$, which both reflect the direct effects on the future net worth and the (relevant) continuation value. In the individual banker's problem, the bank value appears in the EC and in the objective function---hence, the multiplication by $1+\lambda_{S}$. Moreover, the shadow value of net worth $\nu$ is linked to the derivative of the banker's value function $v$. In the planner's problem, the objective is the household welfare, so the bank value appears once in the EC (multiplication by $\lambda_{S}$ only). Moreover, the shadow value of net worth is linked to the derivatives of the household value function $V^h$, not being related to the EC---therefore, there is no multiplication by $1+\lambda_{S}$.

Now consider the wedges. Without commitment, the current planner must take into account how its current decisions affect the future endogenous state and the decisions of the future planner, which introduces the $\Xi^D_{S}$ and $\Xi^K_{S}$ terms reflecting those effects. These objects have a symmetric structure, capturing three main transmission mechanisms. First, $D'$ and $K'$ affect the future consumption $\Bar{C}$ and labor $\Bar{L}$ decisions and thus the future marginal utility of consumption and the stochastic discount factor, which affects the deposit rate according to the household Euler equation \eqref{eq: household Euler deposits}. Second, there is a similar effect on the stochastic discount factor implicit in the forward-looking bank value \eqref{eq: aggregate bank value}. Third, $D'$ and $K'$ affect the future net worth at exit and the future bank value of survivors $\Bar{V}^1$ conditional on survival, where the former is generated by the impact on both the future marginal product of capital through $\Bar{L}$ and the future asset price through $\Bar{K}$. Intuitively, we can expect that the derivatives of the policy functions with respect to $K$ are generally nonnegative since greater $K$ is associated with both greater output and a greater bank net worth. On the contrary, a greater bank debt $D$ has a negative effect on net worth, investment, and the household value function, so we can expect that the derivatives of the policy functions are generally nonpositive. The combined effects and the signs of $\Xi^D_{S}$ and $\Xi^K_{S}$ remain ambiguous.\footnote{Our quantitative approach is to find a fixed point in the Bellman equation and the policy functions directly instead of solving the KKT conditions, so we will not be assuming that the policy functions are differentiable.}

The additional capital wedge $\Omega^K_{S}$ corresponds to a similar term arising under commitment. Contrary to the latter, the time-consistent planner cannot affect the $t-1$ expectations of the asset return and the bank value distribution at $t$. Likewise, without commitment, the planner cannot affect the future distribution except for the indirect impact through the future endogenous states. For this reason, we did not make the distribution explicit in the continuation value of survivors $\Bar{V}^1$. The remaining effects---the negative balance sheet, value of default, and consumption channels, the corresponding positive future effects, and the negative impact on the future asset return---are identical to the case of commitment. A quantitative exploration is generally required to assess which effects dominate. Indeed, as we will see, the combined effect is typically not uniformly positive or negative but state-contingent, allowing for both excessive and insufficient borrowing and lending in the CE.

Unlike in the case of commitment, we do not have a formal statement on the welfare ranking of Markov perfect outcomes corresponding to different $\Delta$. A shift in $\Delta$ directly affects the fixed point as we iterate on $\Bar{V}^1$, so the welfare effects may have different signs in different regions of the state space. We can, however, state with certainty that a uniform positive shift in $\Delta$ must increase welfare in the steady state in which the EC is binding.

\subsection{Implementation with taxes, transfers, and capital requirements}
The presence of two broad sources of inefficiencies---various pecuniary externalities and a potentially suboptimal bank value distribution---generally requires two types of policy instruments to implement the CEA (MCEA) in a regulated CE. A given distribution $\Delta$ can naturally be achieved with entrants/survivors-specific transfers within the banking system. The wedges in the Euler equations can be addressed with proportional taxes on bank deposits and assets or, under some assumptions, with state-contingent capital requirements. 
The next proposition formalizes the alternative ways of implementing the CEA (MCEA) in a regulated CE. We will use the sequential (CEA) notation where $\{x_{t}\}$ denotes a sequence of functions $x_{t}:S^{t}\to\mathbb{R}$, while the implicit recursive (MCEA) analog is a single function $x:\mathcal{S}\to\mathbb{R}$.
\begin{proposition}\label{prop: implementation}
    Consider a regulated CE with balance sheet taxes, such that a banker $i\in[0,f]$ has the budget constraint
    \begin{equation*}
        (1+\tau^K_{t})Q_{t}k_{i,t+1}=(1+\tau^{j(i)}_{t})n_{i,t}+(1-\tau^D_{t})\frac{d_{i,t+1}}{R_{t}},
    \end{equation*}
    where
    \begin{equation*}
        j(i)=
        \begin{cases}
            1 & i\in[0,\sigma{}f]\\
            0 & i\in(\sigma{}f,f]
        \end{cases}.
    \end{equation*}
    Moreover, the banker faces a regulatory constraint
    \begin{equation*}
        (1+\tau^{j(i)}_{t})n_{i,t}\ge\kappa_{t}(1+\tau^K_{t})Q_{t}k_{i,t+1}.
    \end{equation*}
    The government budget constraint is
    \begin{equation*}
        \tau^D_{t}\frac{D_{t+1}}{R_{t}}+\tau^K_{t}Q_{t}K_{t+1}=\tau^1_{t}N^1_{t}+\tau^0_{t}N^0_{t}.
    \end{equation*}
    The CEA (MCEA) can be implemented with $(\tau^D_{t},\tau^0_{t},\tau^1_{t})$ or $(\tau^K_{t},\tau^0_{t},\tau^1_{t})$, such that $\tau^D_{t}$ or $\tau^K_{t}$ internalizes the pecuniary externalities, and $(\tau^0_{t},\tau^1_{t})$ implement the targeted bank value distribution. The minimum capital requirement $\kappa_{t}$ is not, in general, a substitute for $\tau^D_{t}$ or $\tau^K_{t}$. The CEA (MCEA) and the policy that implements it constitute a Ramsey (Markov perfect) equilibrium.
\end{proposition}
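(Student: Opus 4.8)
The plan is to mirror the derivation of Proposition~\ref{prop: banker's problem} for the regulated banker, obtain the modified Euler equations, and then choose the three instruments so that the regulated CE reproduces the target allocation term by term. First I would re-solve the banker's problem under the new budget and regulatory constraints. Because both remain linear in $n_{i,t}$---the transfer enters as $(1+\tau^{j(i)}_t)n_{i,t}$ and does not depend on the current portfolio $(d_{i,t+1},k_{i,t+1})$---the argument of Proposition~\ref{prop: banker's problem} carries over verbatim: the scaled multipliers and the ratios $d_{i,t+1}/n_{i,t}$, $k_{i,t+1}/n_{i,t}$ stay independent of $n_{i,t}$, and the value function remains linear. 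Denoting the multiplier on the capital requirement by $\mu_t\ge0$, the regulated deposit and capital Euler equations become the analogs of \eqref{eq: bank Euler deposits DE}--\eqref{eq: bank Euler capital DE} with the effective deposit price $(1-\tau^D_t)$, the effective asset price $(1+\tau^K_t)$, and an additive $\kappa_t\mu_t$ term in the asset equation. Aggregating and using $v_{i,t}=\nu_t n_{i,t}$, so that $\Delta_t=N^1_t/N_t$ after transfers, I recover the regulated counterparts of \eqref{eq: banking sector value DE}--\eqref{eq: banking sector net worth} in which each group's net worth is scaled by $(1+\tau^{j}_t)$.

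Second, I would isolate the credit spread by subtracting the deposit Euler equation from the asset Euler equation; the level multiplier $\nu_t$ cancels, leaving the regulated spread as an explicit function of $\lambda_t$, $\mu_t$, and the single active tax ($\tau^D_t$ or $\tau^K_t$). The key observation is that only this spread---not the level of $\nu_t$---enters the allocation through the balance sheet and the remaining CE conditions, so the regulated multipliers are free to differ from the planner's. Evaluating at the target allocation, I then set the single linear tax so that the regulated spread equals the spread implied by the CEA (MCEA) Euler equations of Proposition~\ref{prop: CEA characterization} (Proposition~\ref{prop: MCEA characterization}), i.e.\ so that the tax exactly reproduces the combined wedge $\Psi^K_t$ (respectively $\Omega^K_t$ together with the future-policy term). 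This yields a closed-form $\tau^D_t$ or $\tau^K_t$ and shows that one instrument suffices to correct the pecuniary externalities.

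Third, I would pin down the distribution and close the government budget. Since after transfers $\Delta_t=(1+\tau^1_t)N^1_t/[(1+\tau^1_t)N^1_t+(1+\tau^0_t)N^0_t]$, the two requirements---matching $\Delta_t$ to the target share and making $\tau^1_t N^1_t+\tau^0_t N^0_t$ equal the proceeds $\tau^D_t D_{t+1}/R_t$ or $\tau^K_t Q_t K_{t+1}$---form a linear system in $(\tau^0_t,\tau^1_t)$ with nonzero determinant whenever $N^0_t,N^1_t>0$, hence a unique solution, so the transfers implement any target distribution in $(0,1)$ while balancing the budget. For the capital requirement I would exploit its one-sidedness: its multiplier $\mu_t\ge0$ enters the asset Euler equation additively with a fixed sign, so $\kappa_t$ can only raise the effective cost of capital and thus the spread. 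In any contingency where the CEA (MCEA) spread lies below the unregulated one---the underleverage case---no admissible $\kappa_t$ can reproduce it, since the constraint must then be slack with $\mu_t=0$; this proves $\kappa_t$ is not in general a substitute for $\tau^D_t$ or $\tau^K_t$.

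Finally, I would verify the equilibrium claim. By construction every condition in Definition~\ref{def: CE} holds at the target allocation under the chosen policy, so the allocation is a CE given the policy; since it coincides with the CEA (MCEA), which maximizes welfare subject to CE implementability, the policy--allocation pair is a Ramsey equilibrium under commitment and, because the MCEA and the implementing instruments are measurable functions of the current state, a Markov perfect equilibrium without commitment. I expect the main obstacle to be the second step: making precise that a single linear tax closing only the relative (spread) wedge is enough---that the gap between the regulated and planner's level multipliers is immaterial for the allocation---and checking that the induced regulated $\lambda_t$ inherits the correct nonnegativity and complementary-slackness pattern of the target, so that the reproduced EC regime coincides with that of the CEA (MCEA).
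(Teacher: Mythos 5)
Your overall architecture is the same as the paper's: a primal construction of the policy from the target allocation, a triangular linear system for $(\tau^0_t,\tau^1_t)$ (distribution target plus budget balance), and the one-sidedness of the capital-requirement multiplier to rule out $\kappa_t$ when the optimal discounted spread is negative. Those parts are sound. The gap is in your second step, and it is substantive. The claim that ``only the spread, not the level of $\nu_t$, enters the allocation'' is false: the regulated banker's value function is $v_{i,t}=(\Tilde{\nu}_{t}+\Tilde{\xi}_{t})(1+\tau^{j(i)}_{t})n_{i,t}$ (where $\Tilde{\xi}_{t}$ is your $\mu_t$), so the level of the multiplier \emph{is} the bank value per unit of post-tax net worth, and the bank value enters the enforcement constraint \eqref{eq: aggregate EC} and its complementary slackness. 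The level is therefore not free: it must satisfy $\Tilde{\nu}_{t}+\Tilde{\xi}_{t}=V_{t}/\Tilde{N}_{t}$, where $V_{t}$ is the CEA bank value and $\Tilde{N}_{t}$ is post-tax aggregate net worth, which itself moves with the taxes through the government budget. Your spread-matching tax and this level requirement are two restrictions on the same instruments, and you never show they are compatible with the deposit Euler equation. You flag this as ``the main obstacle,'' but the resolution you sketch (the level is immaterial) is the wrong one; the correct resolution, which the paper proves, is that conditional on the balance sheet, the government budget, the asset Euler equation, and the complementary slackness conditions, the regulated deposit Euler equation is \emph{equivalent} to the identity $V_{t}=(\Tilde{\nu}_{t}+\Tilde{\xi}_{t})\Tilde{N}_{t}$, with $V_t$ obeying the recursion \eqref{eq: aggregate bank value}. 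That chain of equalities is the step your proposal is missing.

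A second, related omission: your regulated Euler equations drop the factor $(1+\tau^{1}_{t+1})$ multiplying the continuation marginal value of net worth---surviving bankers anticipate next period's subsidy, so it appears inside the expectations in \eqref{eq: banker Euler deposit regulated} and \eqref{eq: banker Euler asset regulated}. This is not cosmetic. It is exactly what allows the conditional expectations to be rewritten as functions of the allocation alone, because $(\Tilde{\nu}_{t+1}+\Tilde{\xi}_{t+1})(1+\tau^{1}_{t+1})=V^{1}_{t+1}/N^{1}_{t+1}$; the paper's $M^D_t$ and $M^K_t$ are these objects evaluated at the CEA. Without that substitution, the Euler equations form an infinite forward recursion coupling future multipliers and future taxes, so no period-by-period closed-form $\tau^D_t$ or $\tau^K_t$ exists; it also breaks the clean separation you assert between ``the tax closes the spread'' and ``the transfers set the distribution,'' since $\tau^1_{t+1}$ sits inside the spread itself. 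A minor further slip: the object the tax must absorb is the allocation-level spread $M^K_t-M^D_t$ at the CEA, not the planner's wedge $\Psi^K_t$, which is written in terms of the planner's own multipliers and is not directly comparable to the regulated banker's wedge.
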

\begin{proof}
    See Appendix \ref{sec: proof implementation}.
\end{proof}

As explained in Appendix \ref{sec: proof implementation}, we construct all the policies using the primal approach. The asset or deposit tax closes the wedges, the net worth subsidy to old banks targets the bank value distribution, and the subsidy to entrants balances the government budget. The regulatory capital requirement alone is sufficient to account for the wedges if and only if a measure of a discounted credit spread stays nonnegative in the CEA (MCEA). A sufficient condition for the latter is that $\mathbb{E}_{t}[\Lambda_{t,t+1}f_{t+1}(\frac{X_{t+1}}{Q_{t}}-R_{t})]\ge{}0$ for all positive-valued $f_{t+1}$. A necessary and sufficient condition is that it holds for $f_{t+1}=1-\sigma+\sigma(\Tilde{\nu}_{t+1}+\Bar{\xi}_{t+1})$, where $\Bar{\xi}_{t}$ is a transformation of the Lagrange multiplier on the regulatory constraint. A difficulty is that $\{\Tilde{\nu}_{t},\Bar{\xi_{t}}\}$ solve a system of two stochastic difference equations (the banker's deposit and asset Euler equations) conditional on the optimal allocation. Quantitatively, the required condition does not always hold: the planner can optimally choose to have a negative discounted credit spread in some contingencies. In this case, the capital requirement alone fails to be effective.

\section{Quantitative results}\label{sec: quantitative results}
This section describes the model calibration and conducts a multifaceted comparison of the CE, MCEA, and CEA allocations. We will investigate the efficiency of borrowing and lending by the banking system, explore the properties of optimal policies, analyze welfare gains, compare the economic dynamics around financial crises, and study the implications of alternative bailout policies.

\subsection{Calibration and computation}
I assume separable preferences for households:
\begin{equation*}
    U(C,L)=\lim_{x\to\gamma}\frac{C^{1-x}-1}{1-x}-\chi\frac{L^{1+\phi}}{1+\phi},
\end{equation*}
where $\gamma,\phi,\chi\ge0$. The final and capital good production technologies are $F(\xi{}K,L)=(\xi{}K)^\alpha{}L^{1-\alpha}$ with $\alpha\in(0,1)$, and $\Phi(x)=\zeta+\kappa_1{}x^\psi$ with $\zeta\in\mathbb{R}$, $\kappa_1>0$, and $\psi\in(0,1]$. The logs of exogenous stochastic processes $\{A_{t}\}$ and $\{\xi_{t}\}$ are AR(1) with autocorrelations $(\rho_a,\rho_\xi)$ and standard deviations $(\sigma_a,\sigma_\xi)$.

Table \ref{tab: parameters} reports the parameter values that are mostly set to reflect long-run facts about the US economy in 1990--2019.
\begin{table}[ht!]
    \caption{Parameter values}\label{tab: parameters}
    \centering
    \begin{tabular*}{\textwidth}{l@{\extracolsep{\fill}}ll}
        \toprule
        Parameter & Value & Target\\
        \midrule
        \multicolumn{3}{l}{Preferences and technology}\\
        \midrule
        $\alpha$ & 0.404 & labor share $\approx{59.6}\%$ \\
        $\beta$ & 0.995 & annualized real interest rate = 2\% \\
        $\gamma$ & 1 & log preferences from consumption \\
        $\delta$ & 0.02 & annual depreciation rate $\approx{7.6}\%$ \\
        $\zeta$ & -0.007  & $\frac{I}{K}=\delta$ \\
        $\kappa_1$ & 0.499 & $Q=1$ \\
        $\phi$ & 0.625 & microfounded aggregate Frisch elasticity = 1.6 \\
        $\chi$ & 0.86 & $L=1$ \\
        $\psi$ & 0.75 & panel data estimates in the literature \\
        \midrule
        \multicolumn{3}{l}{Banking}\\
        \midrule
        $\Bar{N}=0$ & 0 & linear endowment rule \\
        $\sigma$ & 0.976 & bank exit probability $\approx{0.091}$ \\
        $\theta$ & 0.216 & \multirow{2}{*}{$N/(Q{}K)=0.125$, annualized credit spread = 0.5\%} \\
        $\omega$ & 0.001 & \\
        \midrule
        \multicolumn{3}{l}{Exogenous stochastic processes}\\
        \midrule
        $\rho_a$ & 0.935 & \multirow{4}{*}{\begin{minipage}{0.5\textwidth}$\corr(\widehat{Y}_{t},\widehat{Y}_{t-1})\approx{0.886}$, $\corr(\widehat{I}_{t},\widehat{I}_{t-1})\approx{0.894}$,  $\sd(\widehat{Y}_{t})\approx{0.013}$, $\sd(\widehat{I}_{t})\approx{0.045}$\end{minipage}} \\
        $\rho_\xi$ & 0.956 & \\
        $\sigma_a$ & 0.006 & \\
        $\sigma_\xi$ & 0.002 & \\
        \bottomrule
    \end{tabular*}
    \begin{tabular}{@{}p{\textwidth}@{}}
        {\small Notes: $\widehat{X}_{t}$ denotes the cyclical component of $\ln(X_{t})$ extracted using the HP filter with $\lambda=1600$.}
    \end{tabular}
\end{table}
The Cobb---Douglas elasticity $\alpha$ targets the average labor share in the nonfarm business sector based on the US Bureau of Labor Statistics data. The discount factor $\beta$ corresponds to the annualized real interest rate of 2\%. The risk-aversion $\gamma$ is set to unity, implying log preferences from consumption, as common in the literature. The depreciation rate $\delta$ proxies the average depreciation rate of the current-cost net stock of private fixed assets and consumer durables in the Bureau of Economic Analysis data. The capital production technology parameters $(\zeta,\kappa_1)$ are set to have $\frac{I}{K}=\delta$ and normalize $Q=1$ in the deterministic steady state, while $\psi$ is set as in \cite{gertler20RESTUD} to match panel data estimates. The labor disutility elasticity $\phi$---an inverse of the Frisch elasticity of labor supply--- targets the average of the microfounded estimates of the aggregate Frisch elasticity for males \citep{erosa16} and females \citep{attanasio18}. The labor disutility scale $\chi$ corresponds to a normalization $L=1$ in the steady state.

It is computationally convenient to set $\Bar{N}=0$ so that the aggregate endowment of entrants is linear in the assets of exiting bankers. I set the survival probability $\sigma$ based on the average establishment exit rate in finance, insurance, and real estate according to the Business Dynamics Statistics data. The remaining banking parameters $(\theta,\omega)$ target the average capital ratio of 12.5\%---consistent with the evidence in \cite{begenau20} that for most banks, regulatory constraints are not binding---together with the annualized credit spread of 0.5\% so that the EC binds in the CE less than half of the time.

The AR(1) parameters $(\rho_a,\rho_\xi)$ and $(\sigma_a,\sigma_\xi)$ target the autocorrelations and standard deviations of output and investment, using the National Income and Product Accounts data. Each variable is logged and detrended using the HP filter with $\lambda=1600$, a standard value for quarterly data.

To compute the CE and MCEA, I use global projection methods \citep{judd98} so that the nonlinearity due to the occasionally binding EC can be fully addressed. Specifically, I approximate the CE and MCEA unknown functions with linear 2D splines for each $z\in\widehat{\mathcal{Z}}\subset\mathcal{Z}$. (Accordingly, I approximate the exogenous stochastic process $\{A_{t},\xi_{t}\}$ by a finite-state Markov chain $z\mapsto{}z'$.) In the case of the CEA, I employ both the global projection method---linear 4D splines---and the local piecewise linear perturbation method \citep{guerrieri15} that respects occasionally binding constraints but not precautionary savings. The latter method serves as the baseline, but I verify some results with the global method on a coarse grid. Since Lagrange multipliers $\gamma_{t-1}$ and $\nu_{t-1}$ must be treated as state variables, the complexity of the Ramsey problem combined with the course of dimensionality makes fully global approximation challenging.

Instead of the natural endogenous state $(D,K)$, I work with a rotated state space based on $(\log(D),\log(K))$. This way we can account for the strong positive correlation between $\log(D)$ and $\log(K)$, which is illustrated in Figure \ref{fig: state space} in the CE case.
\begin{figure}[ht!]
    \caption{Endogenous state space, CE ergodic distribution}\label{fig: state space}
    \centering
    \includegraphics[trim={1.8cm 6.8cm 1.8cm 7.3cm},clip,width=0.75\textwidth]{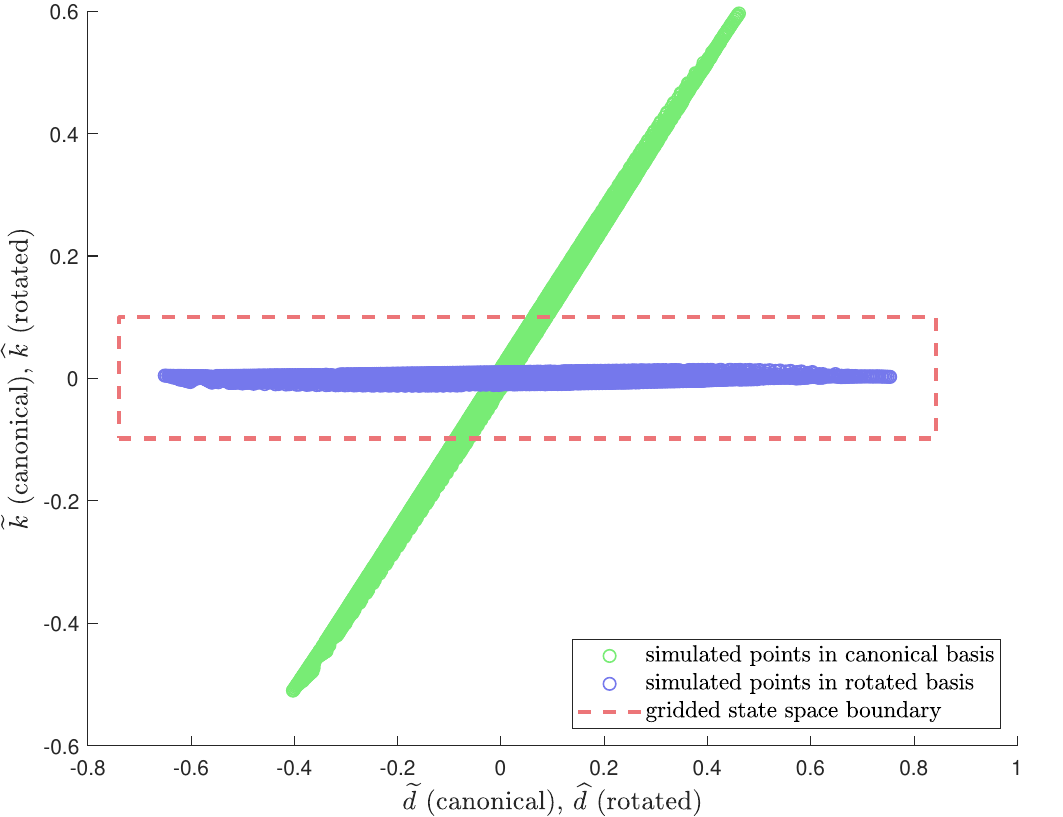}
    \begin{tabular}{@{}p{\textwidth}@{}}
        {\small Notes: For $x\in\{D,K\}$, $\widetilde{x}\equiv{}\log(X)-\widehat{\mathbb{E}}(\log(X))$, and $(\widehat{d},\widehat{k})$ are obtained by rotating $(\widetilde{d},\widetilde{k})$ clockwise at the angle $\arctan\left(\frac{\widehat{\cov}(\widetilde{d},\widetilde{k})}{\widehat{\vvar}(\widetilde{d})}\right)$.}
    \end{tabular}
\end{figure}

\subsection{Bank solvency and enforcement constraint regimes}
As illustrated in Figure \ref{fig: bank solvency and EC}, the endogenous state space can be divided into three regions: banks are solvent and unconstrained (highlighted in yellow), solvent but constrained (light green), and insolvent and constrained (dark green).
\begin{figure}[ht!]
    \caption{Bank solvency and EC regimes in the worst and best exogenous states in the CE}\label{fig: bank solvency and EC}
    \centering
    \includegraphics[trim={2.0cm 14.1cm 2.0cm 6.9cm},clip,width=0.75\textwidth]{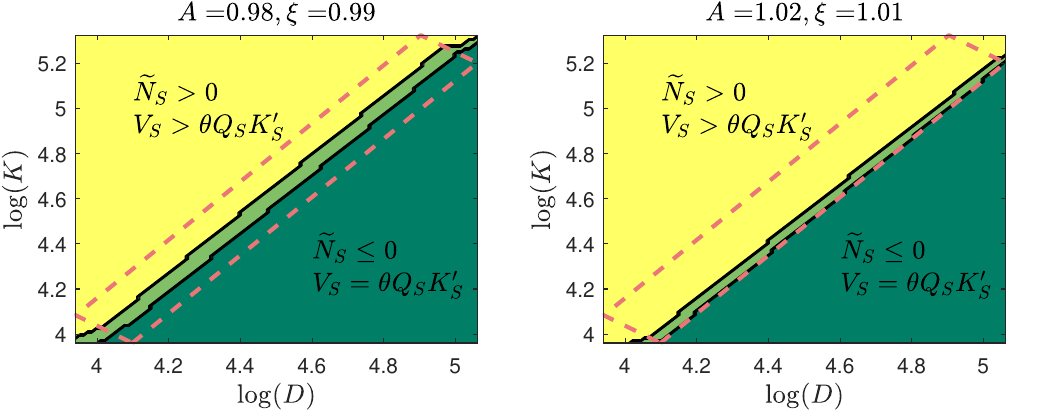}
    \begin{tabular}{@{}p{\textwidth}@{}}
        {\small Notes: In the yellow region, banks are solvent, and the EC has slack. In the light green region, banks are solvent, but the EC is binding. In the dark green region, banks are insolvent, and the EC is binding. The dashed parallelogram (not a rectangle due to scaling) is the boundary of the endogenous state space represented in the canonical basis.}
    \end{tabular}
\end{figure}
Banks cannot be insolvent and unconstrained in the CE simultaneously. If the survived banks are insolvent, their bank value is zero, so the EC must be binding for them. Since the Lagrange multipliers depend only on the aggregate state, the entering banks must also be constrained.

According to Figure \ref{fig: bank solvency and EC}, banks are solvent and unconstrained when the initial capital stock $K$ is sufficiently large compared to bank debt $D$. There generally exist thresholds $\Bar{K}(D,z)$ and $\widehat{K}(D,z)$, such that banks are solvent when $K>\Bar{K}(D,z)$ and are, moreover, unconstrained when $K>\widehat{K}(D,z)\ge\Bar{K}(D,z)$. Based on the figure, we can conjecture that both $\Bar{K}$ and $\widehat{K}$ are decreasing in $z$ in the sense that $\Bar{K}(D,s_2)\le\Bar{K}(D,s_1)$ when $A_2>A_1$ and $\xi_2>\xi_1$. The thresholds are also generally increasing in $D$. An analytic characterization of $\Bar{K}$ and $\widehat{K}$ does not seem possible, but the conjectured properties are intuitive.

Although the area of the insolvency region might seem significant, the model does not typically visit those states. According to Figure \ref{fig: state space}, the ergodic set is a thin ellipse inside the gridded state space (the dashed parallelogram in Figure \ref{fig: bank solvency and EC}). Insolvency is more likely in the worst exogenous state, but it does not typically occur even in that case. On the other hand, the model stays in the binding EC regime approximately 40\% of the time in the CE.

Figure \ref{fig: bank solvency and EC} confirms the potential welfare benefits from preemptive bailouts. By keeping banks away from the solvent-but-constrained buffer zone, the policymaker escapes the potentially harmful effects of being in the constrained regime and decreases the probability of ending up in the insolvency region, at which point the banking system would collapse.

Figure \ref{fig: net bank value slices k CE} further explores how the magnitude of the distance between the aggregate bank value and the value of default $V_{S}-\theta{}Q_{S}K'_{S}$ varies in the state space.
\begin{figure}[ht!]
    \caption{Net bank value in the CE}\label{fig: net bank value slices k CE}
    \centering
    \includegraphics[trim={2.3cm 13.5cm 2.3cm 6.9cm},clip,width=0.75\textwidth]{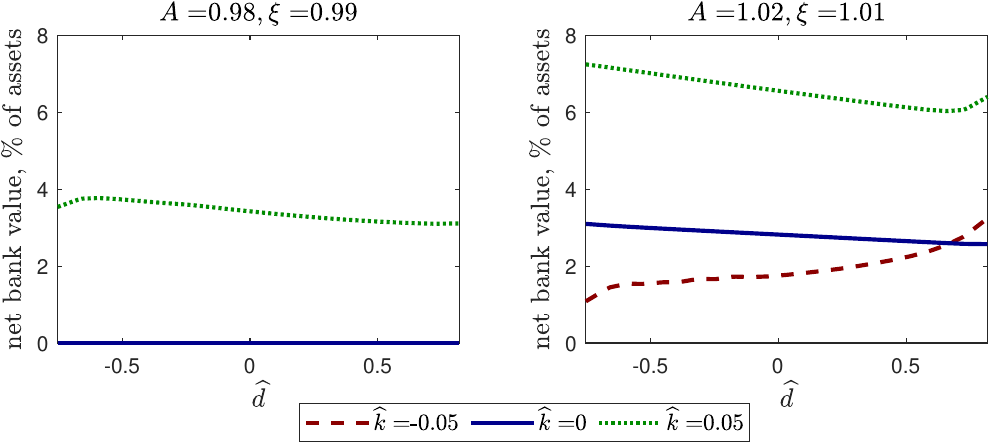}
    \begin{tabular}{@{}p{\textwidth}@{}}
        {\small Notes: Slices of the underlying surfaces along the $\widehat{d}$ dimension at the quartiles of the $\widehat{k}$ grid. The y-axis is $V_{S}-\theta{}Q_{S}K'_{S}$ in \% of $Q_{S}K'_{S}$.}
    \end{tabular}
\end{figure}
We now focus on the gridded state space where the model is solved. Figure \ref{fig: net bank value slices k CE} displays the variation along the $\widehat{d}$ dimension, that is, moving from the southwest to northeast inside the dashed parallelogram in Figure \ref{fig: bank solvency and EC} at the quartiles of the $\widehat{k}$ grid. In the lowest exogenous state, the EC is typically binding at higher leverage ratios, e.g., at or below the median of $\widehat{k}$. The constraint has slack when banks are more capitalized (higher $\widehat{k}$), and the slack in proportion to bank assets slightly decreases as the balance sheet expands (larger $\widehat{d}$). In the highest exogenous state, the expected asset returns are greater and financial constraints are mostly nonbinding, especially at the higher capital ratios. As with the lowest state, the relative slack generally decreases as the balance sheet expands at higher capital ratios; however, there is an opposite relationship when banks are more leveraged. These regularities indicate that a way to improve over the CE is to relax the binding ECs when exogenous conditions are worse.

\subsection{Financial crises in the unregulated economy}
When the banker's EC binds, a banker is indifferent between continuing to run the banking business and defaulting on liabilities and running away with a fraction of assets. Our convention is that bankers continue to operate at the point of indifference. The instances where the aggregate EC is about to switch from having slack to being binding and the ensuing spells in the binding regime with the associated deleveraging can naturally correspond to the build-up of systemic risk and financial crises. The risk is systemic because our banks make symmetric decisions: when the EC binds for one bank, it binds for all. This subsection explores the economic dynamics around such episodes.

Define a financial crisis that starts at $t$ as an event that satisfies two conditions on the behavior of the aggregate EC: it has slack for at least five years before the crisis ($[t-20,t-1]$) and then binding for at least one year ($[t,t+3]$). Figure \ref{fig: financial crises CE} illustrates typical dynamics around such crises.
\begin{figure}[ht!]
    \caption{Financial crises in the CE}\label{fig: financial crises CE}
    \centering
    \includegraphics[trim={2.2cm 11.8cm 2.2cm 7.1cm},clip,width=0.75\textwidth]{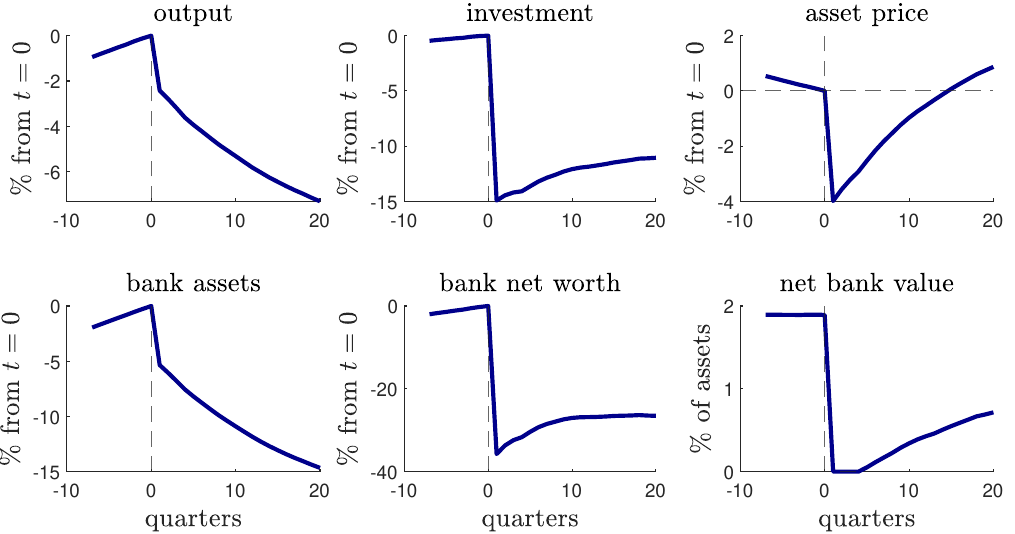}
    \begin{tabular}{@{}p{\textwidth}@{}}
        {\small Notes: Averages over a 1,000,000-period simulation.}
    \end{tabular}
\end{figure}
The figure is obtained by simulating the CE for 1,000,000 periods (quarters), selecting crisis episodes as defined above, and averaging the simulated paths. There are 8,106 such crises, which corresponds to approximately 3.2 financial crises per century, consistent with the findings in the related literature \citep{mendoza10}.

Financial crises have a boom-bust pattern. Ahead of a crisis, output, consumption, and investment are increasing, and the balance sheet of the banking system is expanding. A leading indicator of the crisis is the gradually falling forward-looking asset price. The aggregate EC binds when a bad exogenous state occurs, typically due to a decrease in capital quality. The asset price and the realized return on bank assets drop, which triggers a sharp fall in bank net worth---the bust starts. As banks deleverage, balance sheets shrink, firms cut investment, and an economic recession starts. There is a slight rise in consumption on impact due to the fall in the deposit rate and the increase in labor supply, but the effect is short-term, as consumption starts to fall next period. Meanwhile, the forward-looking asset price starts to recover, and so does bank net worth and the aggregate investment. As bank deleveraging continues, the EC switches to having slack again, and the bank value slowly begins to recover. The fall in output gradually slows down, but the recession and financial deleveraging persist.

\subsection{Ramsey equilibrium}
In this subsection, we will explore the implication of the Ramsey equilibrium, relating them to the findings discussed so far. The Ramsey allocation is not recursive but history-dependent; therefore, we cannot directly compare policy functions with those in the CE.\footnote{Since the Ramsey equilibrium is recursive on the state space augmented with Lagrange multipliers, it is possible to compare policy functions conditional on specific values of Lagrange multipliers.} We will focus on comparing the empirical distributions and the economic dynamics around financial crises. The computation of the CEA relies on piecewise linear perturbation about the steady state. For consistency, in this subsection, we use the same method to compute the CE. The computational burden of simulating the model using this approach is significant, so the simulation length is reduced from 1,000,000 to 100,000.

The optimal bank value distribution among constant distributions $\Delta_{t}(s^{t})=\Delta$ for all $(t,s^{t})$ is $\Delta\approx{}0.9985$---the smallest value at which the aggregate EC has slack in the steady state. The ergodic welfare gain from the CEA conditional on the optimal distribution $\Delta$ is about 0.75\% of consumption. Note that the steady-state value of the CE distribution is $\Bar{\sigma}^1\approx{}0.9911$. The Ramsey planner finds it optimal to promise sufficiently large transfers to the banks, such that the EC becomes just nonbinding in the steady state---that is, banks are exactly at the boundary of the constrained and unconstrained regions.

We begin by looking at the empirical distributions of bank deposits and loans. Figure \ref{fig: histograms deposits and loans CEA} shows the corresponding histograms, where in addition to the CE and CEA, we have histograms from the frictionless (unconstrained) CE (UE), in which $\theta=0$ and all other parameters are identical to those in the baseline CE. By construction, in the UE, the EC is always nonbinding since the aggregate net worth and bank value are strictly positive.
\begin{figure}[ht!]
    \caption{Bank borrowing and lending in the CE, CEA, and UE}\label{fig: histograms deposits and loans CEA}
    \centering
    \includegraphics[trim={2.4cm 13.5cm 2.4cm 7.3cm},clip,width=0.75\textwidth]{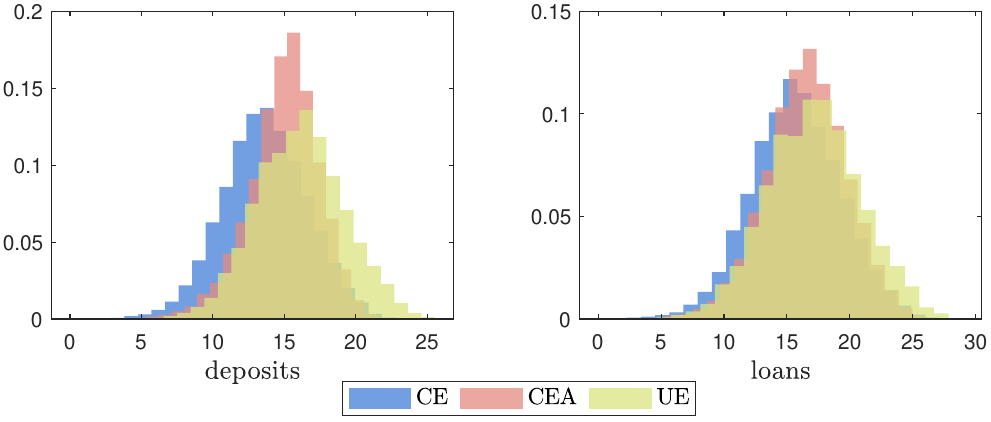}
    \begin{tabular}{@{}p{\textwidth}@{}}
        {\small Notes: UE refers to a frictionless (unconstrained) CE with $\theta=0$. Histograms based on the 100,000-period simulation with the same sequence of exogenous shocks. Variables are normalized by the average CE output; the y-axis has the pdf normalization.}
    \end{tabular}
\end{figure}

An immediate implication of the optimal preemptive bailout policy that supports the relative bank value of survived banks at a greater value than in the CE is the expansion of bank balance sheets. With commitment, we observe \emph{underborrowing} and \emph{underlending} by the banking sector in the CE compared to the CEA. Bank deposits and loans have a greater mean and variance in the Ramsey equilibrium than in the CE. The CEA histograms are more skewed to the left, so the median deposits and loans are even greater. However, the optimal balance sheets are smaller than in the UE, where the EC is always nonbinding and the limited enforcement friction is shut down. Hence, the Ramsey planner alleviates the friction with preemptive bailouts but cannot eliminate it completely.

Figure \ref{fig: histograms policies CEA} displays the empirical distributions of the alternative CEA implementation policies.
\begin{figure}[ht!]
    \caption{Optimal policies under commitment}\label{fig: histograms policies CEA}
    \centering
    \includegraphics[trim={2.5cm 11.3cm 2.5cm 7.1cm},clip,width=0.75\textwidth]{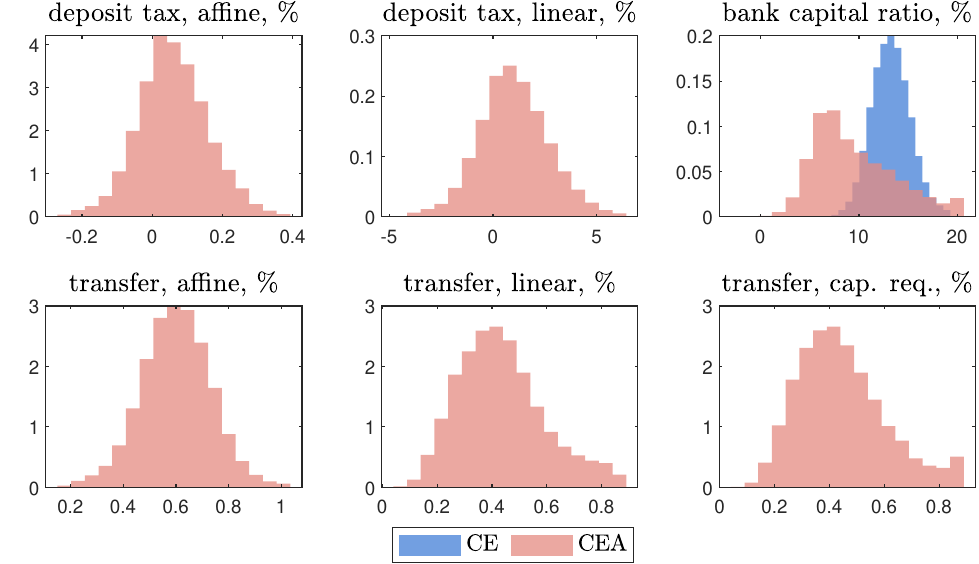}
    \begin{tabular}{@{}p{\textwidth}@{}}
        {\small Notes: Histograms based on the 100,000-period simulation with the same sequence of exogenous shocks. Each column corresponds to an alternative CEA decentralization scheme (pdf normalization on the y-axis). Outliers are removed. Transfers ($\tau^1_{t}$) are in \% of bank assets. The transfer in the last column is meaningful only when the implied Lagrange multiplier on the regulatory constraint is nonnegative, which does not always hold.}
    \end{tabular}
\end{figure}
Although there is greater bank borrowing and lending in the CEA than in the CE, the optimal deposit taxes have more mass in the positive region. The Ramsey planner uses taxes to correct the pecuniary externalities, which prevents borrowing and lending from being excessively large, even though it is larger than in the CE due to optimal transfers. When aggregate transfers are forbidden (the linear implementation scheme), the magnitude of the taxes is generally greater.

The optimal bank capital ratio has a lower mean and median than in the CE but a greater variance and a much greater skewness to the right. Hence, under commitment, the optimal capital ratios are generally lower than in the CE, reflecting the increased borrowing and lending, but there is a nontrivial measure of contingencies in which the planner finds it optimal for banks to be sufficiently more capitalized than in the CE.

The optimal transfers to survived banks are uniformly positive independently of the CEA implementation mechanism. The optimal transfers have a comparable magnitude across implementation schemes with a mean of about 0.5\% of bank assets. The decentralization with capital requirements alone does not always succeed; therefore, the optimal transfers are only valid conditional on having the Lagrange multiplier on the regulatory constraint nonnegative in the relevant contingencies.

Finally, we will look at the economic dynamics around financial crises. Remember that in this subsection, we use a different approach to compute the CE for consistency with the computation of the CEA. Our identification of financial crises changes slightly: instead of requiring the EC to have slack for twenty quarters before the crisis, we look for at least ten quarters, which allows obtaining a similar frequency of financial crises of about 3.1 crises per century.

Figure \ref{fig: financial crises CE and CEA} illustrates the dynamics around crises.
\begin{figure}[ht!]
    \caption{Financial crises in the CE and CEA}\label{fig: financial crises CE and CEA}
    \centering
    \includegraphics[trim={2.3cm 10.9cm 2.3cm 7.1cm},clip,width=0.75\textwidth]{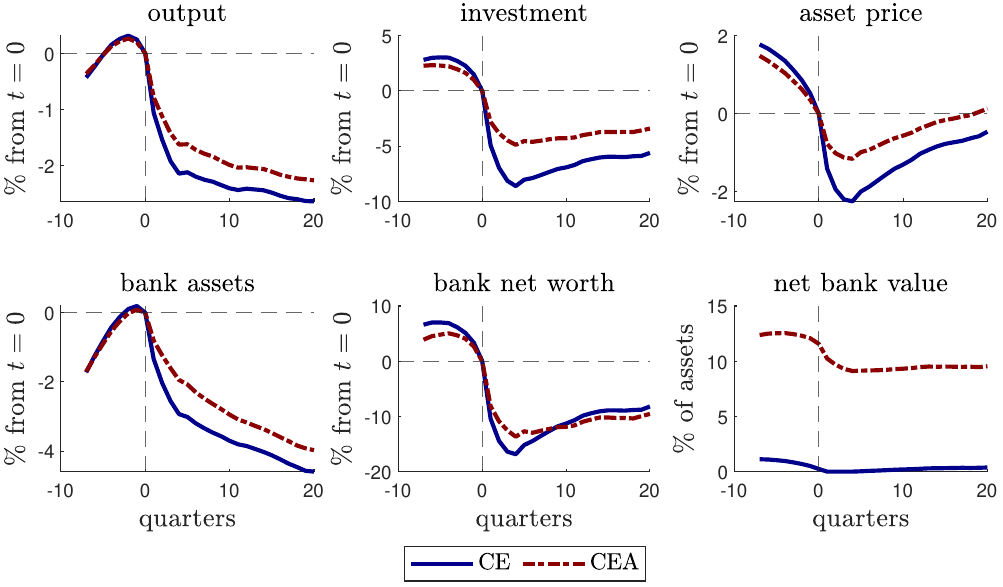}
    \begin{tabular}{@{}p{\textwidth}@{}}
        {\small Notes: Averages over a 100,000-period simulation.}
    \end{tabular}
\end{figure}
The optimal bank capital ratio is uniformly lower during the crises than in the CE. These dynamics reflect the comparison of empirical distributions in Figure \ref{fig: histograms policies CEA}. Since the planner finds it optimal to provide sufficient support to survivors through transfers, they generally borrow more and become more leveraged on average. A lower capital ratio is not a problem since the very purpose of those transfers---or preemptive bailouts---is to prevent the EC from switching to the binding regime, which is achieved successfully---the net bank value in the CEA generally remains nonbinding around crises.

The boom-bust dynamics in the CEA are generally less pronounced than in the CE, as both real and financial variables are less volatile in such episodes and recover faster after the bad shock hits. In particular, we observe a faster recovery in the asset price and bank assets and liabilities, which does not allow investment to drop as severely as in the CE. Consumption also varies less, and output rebounds faster.

Figure \ref{fig: financial crises policies CEA} focuses on the dynamics of the CEA decentralization policies.
\begin{figure}[ht!]
    \caption{CEA decentralization policies around financial crises}\label{fig: financial crises policies CEA}
    \centering
    \includegraphics[trim={2.1cm 11.8cm 2.1cm 7.1cm},clip,width=0.75\textwidth]{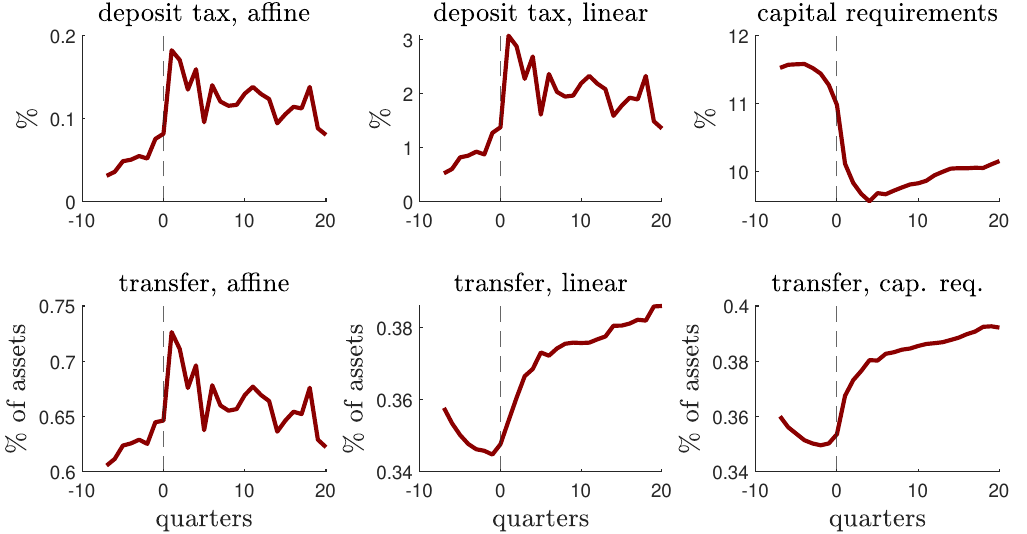}
    \begin{tabular}{@{}p{\textwidth}@{}}
        {\small Notes: Averages over a 100,000-period simulation. Each column corresponds to an alternative implementation mechanism. Transfers refer to $\tau^1$.}
    \end{tabular}
\end{figure}
The optimal deposit taxes are increasing ahead of a crisis and jump when the bad shock arrives to encourage faster deleveraging and keep the banking sector in the unconstrained regime. The increase is followed by a gradual decline as both exogenous and endogenous conditions improve.

By construction, the optimal transfer in the affine scheme tracks the dynamics of the deposit tax to a great extent. In the linear scheme, the optimal transfer is falling slightly ahead of a crisis, which is an additional way to encourage deleveraging ex ante. When the shock arrives, the trend is reversed, and the transfer increases to relax the EC. We observe very similar behavior in the optimal transfer conditional on the implementation with capital requirements. The optimal transfers in all implementation schemes are generally positive around crises.

The implementation with capital requirements is effective around crises since the implied Lagrange multiplier on the regulatory constraint stays positive. At the same time, the optimal capital ratio is generally lower than in the CE. It might seem unintuitive, but remember that the optimal constant bank value distribution scales the CE analog up in the Ramsey equilibrium, so the corresponding transfers would decrease the CE capital ratio to even lower values in the environment without additional regulation. Therefore, with optimal capital requirements and preemptive bailouts, the regulatory constraint would bind in the regulated CE.

\section{Conclusion}\label{sec: conclusion}
This paper has characterized the optimal regulation of a banking system in a quantitative general equilibrium environment. The optimal policy requires a combination of system-wide asset or deposit taxes---that address pecuniary externalities implicit in the banking system enforcement constraint---and bank entrants/survivors-specific transfers that achieve the optimal bank value distribution. We have referred to the optimal transfers as preemptive bailouts, as their goal is to prevent financial constraints from becoming binding, which guarantees bank solvency.

We have studied the optimal policy in the Markov perfect equilibrium and the Ramsey equilibrium, which differ in whether the policymaker can commit to a history-contingent plan. Independently of the latter, the optimal transfer policy generally ensures that the enforcement constraint has slack in most states/contingencies. The presence of commitment has important quantitative implications, as we mostly observe underborrowing and underlending in competitive markets compared to the Ramsey outcome.

The present analysis can be extended in various ways. We could consider alternative environments in which banks can self-insure with endogenous equity issuance or can invest in other types of assets, such as government debt, which will potentially introduce additional externalities. It is also interesting to generalize the model and explore the implications for optimal monetary policy.

\bibliography{lit}

\appendix

\hypertarget{appendices}{\appendixpage}

\section{Proofs}

\subsection{Proposition \ref{prop: banker's problem}}\label{sec: proof banker's problem}
Consider the problem of a banker that enters the banking business at $t=0$. The banker's Lagrangian can then be written as
\begin{multline*}
    \mathcal{L}=v_{0}+\mathbb{E}_{0}\biggl[\sum_{t=0}^\infty\sigma^{t}\beta^{t}\frac{U_{C,t}}{U_{C,0}}\biggl\{\Tilde{\nu}_{t}\left(n_{t}+\frac{d_{t+1}}{R_{t}}-Q_{t}k_{t+1}\right)\\
    +\gamma_{t}[\mathbb{E}_{t}\{\Lambda_{t,t+1}[(1-\sigma)n_{t+1}+\sigma{}v_{t+1}]\}-v_{t}]+\Tilde{\lambda}_{t}(v_{t}-\theta{}Q_{t}k_{t+1})\biggr\}\biggr],
\end{multline*}
where $n_{t+1}\equiv{}X_{t+1}k_{t+1}-d_{t+1}$. The first-order conditions (FOCs) with respect to $v_{t}$, $d_{t+1}$, and $k_{t+1}$ are
\begin{align}
    \gamma_{t}&=\gamma_{t-1}+\Tilde{\lambda}_{t},\qquad\gamma_{-1}=1,\label{eq: banker FOC value}\\
    \Tilde{\nu}_{t}&=\mathbb{E}_{t}\{\Lambda_{t,t+1}[(1-\sigma)\gamma_{t}+\sigma\Tilde{\nu}_{t+1}]\}R_{t},\label{eq: banker FOC deposit}\\
    \theta\Tilde{\lambda}_{t}+\Tilde{\nu}_{t}&=\mathbb{E}_{t}\{\Lambda_{t,t+1}[(1-\sigma)\gamma_{t}+\sigma\Tilde{\nu}_{t+1}]\frac{X_{t+1}}{Q_{t}}\}.\label{eq: banker FOC asset}
\end{align}
The complementary slackness conditions are
\begin{equation}\label{eq: banker CSC value}
    0=\Tilde{\lambda}_{t}(v_{t}-\theta{}Q_{t}k_{t+1}),\qquad\Tilde{\lambda}_{t}\ge0,\qquad{}v_{t}\ge\theta{}Q_{t}k_{t+1}.
\end{equation}
Note that \eqref{eq: banker FOC value} and \eqref{eq: banker CSC value} imply $\gamma_{t}=1+\sum_{j=0}^{t}\Tilde{\lambda}_{j}\ge1$.

\subsubsection{Stationary transformation}
It follows from \eqref{eq: banker FOC value} and \eqref{eq: banker CSC value} that for any history $s^\infty\in{S^\infty}$, the sequence $\{\gamma_{t}\}_{t=0}^\infty$ is nondecreasing and generally unbounded. Define $\nu_{t}\equiv\frac{\Tilde{\nu}_{t}}{\gamma_{t-1}}$, $\lambda_{t}\equiv\frac{\Tilde{\lambda}_{t}}{\gamma_{t-1}}$, and note that $\frac{\gamma_{t}}{\gamma_{t-1}}\equiv{}1+\lambda_{t}$. Then \eqref{eq: banker FOC deposit} and \eqref{eq: banker FOC asset} are equivalent to
\begin{align}
    \nu_{t}&=(1+\lambda_{t})\mathbb{E}_{t}[\Lambda_{t,t+1}(1-\sigma+\sigma\nu_{t+1})]R_{t},\label{eq: banker Euler deposit stationary}\\
    \theta\lambda_{t}+\nu_{t}&=(1+\lambda_{t})\mathbb{E}_{t}[\Lambda_{t,t+1}(1-\sigma+\sigma\nu_{t+1})\frac{X_{t+1}}{Q_{t}}].\label{eq: banker Euler asset stationary}
\end{align}
By construction, $\Tilde{\lambda}_{t}>0$ if and only if $\lambda_{t}>0$, so \eqref{eq: banker CSC value} is equivalent to
\begin{equation}
    0=\lambda_{t}(v_{t}-\theta{}Q_{t}k_{t+1}),\qquad\lambda_{t}\ge0,\qquad{}v_{t}\ge\theta{}Q_{t}k_{t+1}.\label{eq: banker CSV value stationary}
\end{equation}

\subsubsection{Value function}\label{sec: CE value function}
Define $\mu_{t}\equiv{}v_{t}-\nu_{t}n_{t}$. Then
\begin{align*}
    v_{t}&=\mathbb{E}_{t}\{\Lambda_{t,t+1}[(1-\sigma)n_{t+1}+\sigma{}v_{t+1}]\}\\
    &=\mathbb{E}_{t}\{\Lambda_{t,t+1}(1-\sigma+\sigma\nu_{t+1})(X_{t+1}k_{t+1}-d_{t+1})\}+\sigma\mathbb{E}_{t}(\Lambda_{t,t+1}\mu_{t+1})\\
    &=\frac{\theta\lambda_{t}Q_{t}k_{t+1}+\nu_{t}\left(Q_{t}k_{t+1}-\frac{d_{t+1}}{R_{t}}\right)}{1+\lambda_{t}}+\sigma\mathbb{E}_{t}(\Lambda_{t,t+1}\mu_{t+1})\\
    &=\frac{\lambda_{t}v_{t}+\nu_{t}n_{t}}{1+\lambda_{t}}+\sigma\mathbb{E}_{t}(\Lambda_{t,t+1}\mu_{t+1})\\
    &=\nu_{t}n_{t}+\sigma(1+\lambda_{t})\mathbb{E}_{t}(\Lambda_{t,t+1}\mu_{t+1}),
\end{align*}
where the second equality is by definitions of $n_{t+1}$ and $\mu_{t+1}$, the third equality is by \eqref{eq: banker Euler deposit stationary} and \eqref{eq: banker Euler asset stationary}, the fourth equality is by \eqref{eq: banker CSV value stationary} and the balance sheet constraint, and the fifth equality is by rearrangement of terms. It follows that $\mu_{t}(s^{t})=0$ for all $(t,s^{t})$. Therefore, $v_{t}=\nu_{t}n_{t}$.

\subsubsection{Symmetry of individual decisions}\label{sec: CE symmetry}
The banker's constraints when $n_{t}>0$---the only relevant case---are equivalent to
\begin{gather*}
    Q_{t}\Hat{k}_{t+1}=1+\frac{\Hat{d}_{t+1}}{R_{t}},\qquad
    \Hat{n}_{t+1}=X_{t+1}\Hat{k}_{t+1}-\Hat{d}_{t+1},\qquad
    \Hat{v}_{t}=\mathbb{E}_{t}[\Lambda_{t,t+1}(1-\sigma+\sigma\Hat{v}_{t+1})\Hat{n}_{t+1}],\\
    \Hat{v}_{t}\ge\theta{}Q_{t}\Hat{k}_{t+1},
\end{gather*}
where $\Hat{k}_{t+1}\equiv\frac{k_{t+1}}{n_{t}}$, $\Hat{d}_{t+1}\equiv\frac{d_{t+1}}{n_{t}}$, $\Hat{n}_{t+1}\equiv\frac{n_{t+1}}{n_{t}}$, and $\Hat{v}_{t}\equiv\frac{v_{t}}{n_{t}}$. Observe that none of the normalized constraints depend on $n_{t}$. Furthermore, $n_{t+1}>0$ if and only if $\Hat{n}_{t+1}>0$. Finally $\lambda_{t}(v_{t}-\theta{}Q_{t}k_{t+1})=0$ if and only if $\lambda_{t}(\Hat{v}_{t}-\theta{}Q_{t}\Hat{k}_{t+1})=0$. Consequently, the KKT conditions can be equivalently written in terms of normalized variables and are $n_{t}$-invariant: all Lagrange multipliers are $n_{t}$-invariant, the enforcement constraints bind either for all banks or for none, and old banks are either all solvent at any $s^{t+1}$ or are all insolvent.

\subsection{Proposition \ref{prop: CEA characterization}}\label{sec: proof CEA characterization}
The sequential planning problem is
\begin{equation*}
    \max_{\{C_{t},D_{t+1},K_{t+1},L_{t},V_{t}\}}\mathbb{E}_{0}\left[\sum_{t=0}^\infty\beta^{t}U(C_{t},L_{t})\right]
\end{equation*}
subject to
\begin{align*}
    \widetilde{\nu}_{t}:\quad{}0&=N_{t}-Q(K_{t},K_{t+1},\xi_{t})K_{t+1}+\beta\mathbb{E}_{t}\left(\frac{U_C(C_{t+1},L_{t+1})}{U_C(C_{t},L_{t})}\right)D_{t+1},\\
    \Bar{\gamma}_{t}:\quad{}0&=\mathbb{E}_{t}\left\{\beta\frac{U_C(C_{t+1},L_{t+1})}{U_C(C_{t},L_{t})}[(1-\sigma)(X_{t+1}K_{t+1}-D_{t+1})+\Delta_{t+1}V_{t+1}]\right\}-V_{t},\\
    \widetilde{\lambda}_{t}:\quad{}0&\le{}V_{t}-\theta{}Q(K_{t},K_{t+1},\xi_{t})K_{t+1},\\
    \lambda^L_{t}:\quad{}0&=U_C(C_{t},L_{t})A_{t}F_{L}(\xi_{t}K_{t},L_{t})+U_L(C_{t},L_{t}),\\
    \lambda^Y_{t}:\quad{}0&=A_{t}F(\xi_{t}K_{t},L_{t})-C_{t}-I(K_{t},K_{t+1},\xi_{t}),
\end{align*}
where
\begin{gather*}
    X_{t}\equiv[A_{t}F_{K}(\xi_{t}K_{t},L_{t})+Q(K_{t},K_{t+1},\xi_{t})(1-\delta)]\xi_{t},\\
    N_{t}\equiv\sigma(X_{t}K_{t}-D_{t})+\Bar{N}_0+\Bar{\omega}Q(K_{t},K_{t+1},\xi_{t})K_{t},
\end{gather*}
and $\{\Delta_{t}\}$ is either given or satisfies $\{\Delta_{t}\}=\{\widehat{\sigma}^1_{t}\}$. In the latter case, the partial derivatives are
\begin{gather*}
    \frac{\partial\Delta_{t}}{\partial{}D_{t}}\equiv-\sigma\frac{N^0_{t}}{N_{t}^2}<0,\\
    \frac{\partial\Delta_{t}}{\partial{}K_{t}}\equiv\sigma\frac{N^0_{t}}{N_{t}^2}\{[A_{t}F_{KK,t}\xi_{t}+Q_{1,t}(1-\delta)]\xi_{t}K_{t}+X_{t}\}-\sigma\frac{X_{t}K_{t}-D_{t}}{N_{t}^2}\omega(Q_{1,t}K_{t}+Q_{t}),\\
    \frac{\partial\Delta_{t}}{\partial{}K_{t+1}}\equiv\frac{\sigma{}Q_{2,t}K_{t}[N^0_{t}(1-\delta)\xi_{t}-\Bar{\omega}(X_{t}K_{t}-D_{t})]}{N_{t}^2},\qquad
    \frac{\partial\Delta_{t}}{\partial{}L_{t}}\equiv\sigma\frac{N^0_{t}}{N_{t}^2}A_{t}F_{KL,t}\xi_{t}K_{t}>0,
\end{gather*}
where $N^0_{t}\equiv\Bar{N}_0+\Bar{\omega}Q_{t}K_{t}$. Otherwise, all derivatives are zero.

Define $\nu_{t}\equiv\frac{\widetilde{\nu}_{t}}{U_{C,t}}$, $\gamma_{t}\equiv\frac{\Bar{\gamma}_{t}}{U_{C,t}}$, and $\lambda_{t}\equiv\frac{\widetilde{\lambda}_{t}}{U_{C,t}}$. The FOCs are
\begin{align*}
    C_{t}:\quad{}0&=U_{C,t}+U_{CC,t}[\nu_{t}(N_{t}-Q_{t}K_{t+1})-\gamma_{t}V_{t}]+\lambda^L_{t}(U_{CC,t}A_{t}F_{L,t}+U_{CL,t})-\lambda^Y_{t}\\
    &\quad+\bm{1}_\mathbb{N}(t)U_{CC,t}\{\nu_{t-1}D_{t}+\gamma_{t-1}[(1-\sigma)(X_{t}K_{t}-D_{t})+\Delta_{t}V_{t}]\},\\
    D_{t+1}:\quad\nu_{t}&=\mathbb{E}_{t}\left\{\Lambda_{t,t+1}\left[\left(1-\sigma-\frac{\partial\Delta_{t+1}}{\partial{}D_{t+1}}V_{t+1}\right)\gamma_{t}+\sigma\nu_{t+1}\right]\right\}R_{t},\\
    K_{t+1}:\quad{}0&=\nu_{t}(Q_{2,t}\{[\sigma(1-\delta)\xi_{t}+\omega]K_{t}-K_{t+1}\}-Q_{t})-\lambda_{t}\theta(Q_{2,t}K_{t+1}+Q_{t})\\
    &\quad+\mathbb{E}_{t}(\Lambda_{t,t+1}[(1-\sigma)\gamma_{t}+\sigma\nu_{t+1}]\{[A_{t+1}F_{KK,t+1}\xi_{t+1}+Q_{1,t+1}(1-\delta)]\xi_{t+1}K_{t+1}\\
    &\quad+X_{t+1}\})+\mathbb{E}_{t}\biggl(\Lambda_{t,t+1}\biggl\{\nu_{t+1}[\omega(Q_{1,t+1}K_{t+1}+Q_{t+1})-Q_{1,t+1}K_{t+2}]\\
    &\quad-\lambda_{t+1}\theta{}Q_{1,t+1}K_{t+2}+\lambda^L_{t+1}A_{t+1}F_{KL,t+1}\xi_{t+1}\\
    &\quad+\frac{\lambda^Y_{t+1}}{U_{C,t+1}}(A_{t+1}F_{K,t+1}\xi_{t+1}-I_{1,t+1})\biggr\}\biggr)-\frac{\lambda^Y_{t}}{U_{C,t}}I_{2,t}+\gamma_{t}\mathbb{E}_{t}\left(\Lambda_{t,t+1}\frac{\partial\Delta_{t+1}}{\partial{}K_{t+1}}V_{t+1}\right)\\
    &\quad+\bm{1}_\mathbb{N}(t)\gamma_{t-1}\left[(1-\sigma)Q_{2,t}(1-\delta)\xi_{t}K_{t}+\frac{\partial\Delta_{t}}{\partial{}K_{t+1}}V_{t}\right],\\
    L_{t}:\quad{}0&=U_{L,t}+\nu_{t}[U_{C,t}\sigma{}A_{t}F_{KL,t}\xi_{t}K_{t}+U_{CL,t}(N_{t}-Q_{t}K_{t+1})]-\gamma_{t}U_{CL,t}V_{t}\\
    &\quad+\lambda^L_{t}(U_{CL,t}A_{t}F_{L,t}+U_{C,t}A_{t}F_{LL,t}+U_{LL,t})+\lambda^Y_{t}A_{t}F_{L,t}\\
    &\quad+\bm{1}_\mathbb{N}(t)\biggl[\nu_{t-1}U_{CL,t}D_{t}+\gamma_{t-1}\biggl\{U_{CL,t}[(1-\sigma)(X_{t}K_{t}-D_{t})+\Delta_{t}V_{t}]\\
    &\quad+U_{C,t}\left[(1-\sigma)A_{t}F_{KL,t}\xi_{t}K_{t}+\frac{\partial\Delta_{t}}{\partial{}L_{t}}V_{t}\right]\biggr\}\biggr],\\
    V_{t}:\quad\gamma_{t}&=\bm{1}_\mathbb{N}(t)\Delta_{t}\gamma_{t-1}+\lambda_{t}.
\end{align*}
The complementary slackness conditions are
\begin{equation*}
    0=\lambda_{t}(V_{t}-\theta{}Q_{t}K_{t+1}),\qquad\lambda_{t}\ge0.
\end{equation*}

Conditional on the multipliers $\{\nu_{t},\gamma_{t}\}$, the planner's FOC for $D_{t+1}$ is equivalent to the individual FOC for $d_{t+1}$ if the planner cannot internalize $\Delta$, that is, when $\frac{\partial\Delta_{t+1}}{\partial{}D_{t+1}}=0$. Yet, the original multipliers are generally nonstationary in the CE and stationary in the CEA. For a closer comparison, we must apply a transformation, as we did with the CE. The problem is that, unlike in the CE, we can have $\gamma_{t}=0$. To apply the transformation, we will assume that $\gamma_{t}(s^{t})>0$ for all $(t,s^{t})$. We thus consider $t\ge{}t^*$ such that $\lambda_{t^*}>0$. We will not be relying on this assumption in the computation. Define $\widehat{x}_{t}\equiv\frac{x_{t}}{\gamma_{t}}$ and $\Bar{x}_{t}\equiv\frac{\widehat{x}_{t}}{1-\widehat{\lambda}_{t}}$ for $x\in\{\nu,\lambda,\lambda^L,\lambda^Y\}$. The FOC for $V_{t}$ implies $\frac{\gamma_{t}}{\gamma_{t-1}}=\frac{\bm{1}_\mathbb{N}(t)\Delta_{t}}{1-\widehat{\lambda}_{t}}$, which implies $\frac{\gamma_{t+1}}{\gamma_{t}}=\frac{\Delta_{t+1}}{1-\widehat{\lambda}_{t+1}}$. We can, therefore, write the FOCs for $D_{t+1}$ and $K_{t+1}$ as
\begin{align*}
    \Tilde{\nu}_{t}&=(1+\Tilde{\lambda}_{t})\mathbb{E}_{t}\left[\Lambda_{t,t+1}\left(1-\sigma+\Delta_{t+1}\sigma\Tilde{\nu}_{t+1}-\frac{\partial\Delta_{t+1}}{\partial{}D_{t+1}}V_{t+1}\right)\right]R_{t},\\
    \theta\Tilde{\lambda}_{t}+\Tilde{\nu}_{t}&=(1+\Tilde{\lambda}_{t})\mathbb{E}_{t}[\Lambda_{t,t+1}(1-\sigma+\Delta_{t+1}\sigma\Tilde{\nu}_{t+1})\frac{X_{t+1}}{Q_{t}}]+\Psi^K_{t},
\end{align*}
where
\begin{align*}
    Q_{t}\Psi^K_{t}&\equiv\Tilde{\nu}_{t}Q_{2,t}\{[\sigma(1-\delta)\xi_{t}+\omega]K_{t}-K_{t+1}\}+(1+\Tilde{\lambda}_{t})\mathbb{E}_{t}\left(\Lambda_{t,t+1}\frac{\partial\Delta_{t+1}}{\partial{}K_{t+1}}V_{t+1}\right)-\Tilde{\lambda}_{t}\theta{}Q_{2,t}K_{t+1}\\
    &\quad-\frac{\Tilde{\lambda}^Y_{t}}{U_{C,t}}I_{2,t}+\frac{\bm{1}_\mathbb{N}(t)}{\Delta_{t}}\left[(1-\sigma)Q_{2,t}(1-\delta)\xi_{t}K_{t}+\frac{\partial\Delta_{t}}{\partial{}K_{t+1}}V_{t}\right]\\
    &\quad+(1+\Tilde{\lambda}_{t})\mathbb{E}_{t}\{\Lambda_{t,t+1}(1-\sigma+\Delta_{t+1}\sigma\Tilde{\nu}_{t+1})[A_{t+1}F_{KK,t+1}\xi_{t+1}+Q_{1,t+1}(1-\delta)]\xi_{t+1}K_{t+1}\}\\
    &\quad+(1+\Tilde{\lambda}_{t})\mathbb{E}_{t}\biggl(\Lambda_{t,t+1}\Delta_{t+1}\biggl\{\Tilde{\nu}_{t+1}[\omega(Q_{1,t+1}K_{t+1}+Q_{t+1})-Q_{1,t+1}K_{t+2}]\\
    &\quad-\Tilde{\lambda}_{t+1}\theta{}Q_{1,t+1}K_{t+2}+\Tilde{\lambda}^L_{t+1}A_{t+1}F_{KL,t+1}\xi_{t+1}+\frac{\Tilde{\lambda}^Y_{t+1}}{U_{C,t+1}}(A_{t+1}F_{K,t+1}\xi_{t+1}-I_{1,t+1})\biggr\}\biggr).
\end{align*}

The arguments for bullet points are as follows.
\begin{enumerate}
    \item If the enforcement constraint is always binding at the CEA, the CEA is completely determined by the planner's constraints, and the planner's Euler equations for $D_{t+1}$ and $K_{t+1}$ determine the social Lagrange multipliers $\{\Tilde{\nu}_{t},\Tilde{\lambda}_{t}\}$. The implementability constraints can be formulated in recursive form, constituting a system of functional equations on a state-space with endogenous states $(D,K)$ and exogenous states $(A,\xi)$. If the enforcement constraint is always nonbinding at the CEA, the balance sheet, bank value, and enforcement constraints must be redundant. Indeed, the sequence of history-contingent balance sheet constraints determines $\{D_{t+1}\}$ and the sequence of bank value constraints determines $\{V_{t}\}$ conditional on $\{C_{t},K_{t+1},L_{t}\}$. The remaining labor market clearing and resource constraints are static. Therefore, the CEA must be recursive. In general, however, due to forward-looking constraints, many KKT conditions are different for $t=0$ and $t>0$; hence, the CEA is generally time inconsistent.
    \item If the enforcement constraint is always binding at the CEA, the CEA is determined from the implementability constraints. With $\{\Delta_{t}\}=\{\widehat{\sigma}^1_{t}\}$, those constraints are necessary for the CE. They are also sufficient, since we can use \eqref{eq: bank Euler deposits DE} and \eqref{eq: bank Euler capital DE} to back out the transformed CE multipliers $\{\Bar{\nu_{t}},\Tilde{\lambda}_{t}\}$, where the condition $\mathbb{E}_{t}[\Lambda_{t,t+1}f_{t+1}(\frac{X_{t+1}}{Q_{t}}-R_{t})]\ge0$ applied to $f_{t}\equiv{}1-\sigma+\sigma\Tilde{\nu}_{t}$ guarantees that $\Tilde{\lambda}_{t}\ge0$ for all $(t,s^{t})$. Therefore, the CEA equals the CE allocation, and the latter is constrained efficient.
    \item Given $\{\Delta_{t}\}$, consider any $\Bar{S}^{t}$ that is of positive measure. The definition of the forward-looking bank value---the aggregate bank value constraint---and the fact that $\Delta_{t}(s^{t})<1$ for all $(t,s^{t})$ imply that for all $s^{t}\in\Bar{S}^{t}$, there exists $\{\epsilon_{t+1}(s^{t+1})\}_{s^{t+1}\in{}S^{t+1}\mid{}s^{t}}$, where $\epsilon_{t+1}(s^{t+1})\in(0,1-\Delta_{t+1}(s^{t+1}))$ for all $s^{t+1}\in{}S^{t+1}\mid{}s^{t}$. Let us define $\{\widetilde{\sigma}^1_{t}\}$ as follows. For all $s^{t}\in\Bar{S}^{t}$, set $\widetilde{\sigma}^1_{t+1}(s^{t+1})\equiv\Delta_{t+1}(s^{t+1})+\epsilon_{t+1}(s^{t+1})$ for all $s^{t+1}$ that continue from $s^{t}$. Set $\widetilde{\sigma}^1_{t}(s^{t})\equiv\Delta_{t}(s^{t})$ for all other $(t,s^{t})$. The alternative distribution constructed this way would strictly relax enforcement constraints at $s^{t}\in\Bar{S}^{t}$ and strictly expand the planner's feasible set. Since, conditional on $\{\Delta_{t}\}$, the optimal allocation at $s^{t}\in\Bar{S}^{t}$ is strictly at the boundary of the feasible set, the optimal allocation over the expanded feasible set will be strictly outside the boundary. The strict concavity of $U$ and the positive measure of $\Bar{S}^{t}$ imply that the CEA conditional on $\{\widetilde{\sigma}^1_{t}\}$ must attain strictly greater welfare than the original CEA.
\end{enumerate}

\subsection{Proposition \ref{prop: MCEA characterization}}\label{sec: proof MCEA characterization}
The current planner's best response is
\begin{equation*}
    V^h(S)=\max_{(C,D',K',L)\in\mathcal{G}(S)}{U(C,L)+\beta\mathbb{E}_{z}(\Bar{V}^h(S'))},
\end{equation*}
where $\mathcal{G}:\mathcal{S}\to\mathcal{P}(\mathbb{R}^4_{+})$ is defined by the constraints
\begin{align*}
    \widetilde{\nu}:\quad{}0&=\sigma(X_{S}K-D)+\Bar{N}_0+Q(K,K',\xi)(\Bar{\omega}K-K')+\beta\mathbb{E}_{z}\left(\frac{U_C(\Bar{C}_{S'},\Bar{L}_{S'})}{U_C(C,L)}\right)D',\\
    \widetilde{\lambda}:\quad{}0&\le\beta\mathbb{E}_{z}\left\{\frac{U_C(\Bar{C}_{S'},\Bar{L}_{S'})}{U_C(C,L)}[(1-\sigma)(X_{S'}K'-D')+\Bar{V}^1_{S'}]\right\}-\theta{}Q(K,K',\xi)K',\\
    \lambda^L:\quad{}0&=U_C(C,L)AF_{L}(\xi{}K,L)+U_L(C,L),\\
    \lambda^Y:\quad{}0&=AF(\xi{}K,L)-C-I(K,K',\xi),
\end{align*}
where
\begin{gather*}
    X_{S}\equiv[AF_{K}(\xi{}K,L)+Q(K,K',\xi)(1-\delta)]\xi,\\
    X_{S'}\equiv[A'F_{K}(\xi'K',\Bar{L}_{S'})+Q(K',\Bar{K}'_{S'},\xi')(1-\delta)]\xi'.
\end{gather*}

Define $\nu_{S}\equiv\frac{\widetilde{\nu}_{S}}{U_{C,S}}$ and $\lambda_{S}\equiv\frac{\widetilde{\lambda}_{S}}{U_{C,S}}$. The FOCs are
\begin{align*}
    C:\quad\lambda^Y_{S}&=U_{C,S}+U_{CC,S}[\nu_{S}(N_{S}-Q_{S}K'_{S})-\lambda_{S}\theta{}Q_{S}K'_{S}]+\lambda^L_{S}(U_{CC,S}AF_{L,S}+U_{CL,S}),\\
    D':\quad{}0&=\beta\mathbb{E}_{z}(\Bar{V}^h_{D,S'})+\nu_{S}U_{C,S}\mathbb{E}_{z}(\Lambda_{S,S'})+\lambda_{S}U_{C,S}\mathbb{E}_{z}[\Lambda_{S,S'}(1-\sigma)(-1)]+\Xi^D_{S},\\
    K':\quad{}0&=\beta\mathbb{E}_{z}(\Bar{V}^h_{K,S'})+\nu_{S}U_{C,S}[Q_{2,S}\{[\sigma(1-\delta)\xi+\omega]K-K'_{S}\}-Q_{S}]\\
    &\quad+\lambda_{S}U_{C,S}\{\mathbb{E}_{z}[\Lambda_{S,S'}(1-\sigma)\widetilde{N}_{K,S'}]-\theta(Q_{2,S}K'_{S}+Q_{S})\}-\lambda^Y_{S}I_{2,S}+\Xi^K_{S},\\
    L:\quad{}0&=U_{L,S}+\nu_{S}\left[U_{CL,S}(N_{S}-Q_{S}K'_{S})+U_{C,S}\sigma{}AF_{KL,S}\xi{}K\right]-U_{CL,S}\lambda_{S}\theta{}Q_{S}K'_{S}\\
    &\quad+\lambda^L_{S}[A(U_{CL,S}F_{L,S}+U_{C,S}F_{LL,S})+U_{LL,S}]+\lambda^Y_{S}AF_{L,S},
\end{align*}
where $\widetilde{N}_{K,S}\equiv[AF_{KK,S}\xi+Q_{1,S}(1-\delta)]\xi{}K+X_{S}$, and for $x\in\{D,K\}$,
\begin{multline*}
    \Xi^X_{S}\equiv\beta\nu_{S}\mathbb{E}_{z}(U_{CC,S'}\Bar{C}_{X,S'}+U_{CL,S'}\Bar{L}_{X,S'})D'_{S}\\
    +\beta\lambda_{S}\mathbb{E}_{z}([(U_{CC,S'}\Bar{C}_{X,S'}+U_{CL,S'}\Bar{L}_{X,S'})[(1-\sigma)(X_{S'}K'_{S}-D'_{S})+\Bar{V}^1_{S'}]\\
    +U_{C,S'}\{(1-\sigma)[A'F_{KL,S'}\Bar{L}_{X,S'}+Q_{2,S'}\Bar{K}'_{X,S'}(1-\delta)]\xi'K'_{S}+\Bar{V}^1_{X,S'}\}])
\end{multline*}
is the combined marginal effect of $X'$ on the current planner's Lagrangian through the policy functions of the future planner $\Bar{C}$, $\Bar{L}$, $\Bar{K}'$, and $\Bar{V}^1$. The envelope conditions are
\begin{align*}
    V^h_{D,S}&=-\nu_{S}U_{C,S}\sigma,\\
    V^h_{K,S}&=\nu_{S}U_{C,S}[\sigma\widetilde{N}_{K,S}+\omega(Q_{1,S}K+Q_{S})-Q_{1,S}K'_{S}]-\lambda_{S}U_{C,S}\theta{}Q_{1,S}K'_{S}+\lambda^L_{S}U_{C,S}AF_{KL,S}\xi\\
    &\quad+\lambda^Y_{S}(AF_{K,S}\xi-I_{1,S}).
\end{align*}
The complementary slackness conditions are
\begin{equation*}
    0=\lambda_{S}[\mathbb{E}_{z}\{\Lambda_{S,S'}[(1-\sigma)(X_{S'}K'_{S}-D'_{S})+\Bar{V}^1_{S'}]\}-\theta{}Q_{S}K'_{S}],\qquad\lambda_{S}\ge0.
\end{equation*}
Substituting the envelope conditions in the FOCs for $D'$ and $K'$ and rearranging, we can write
\begin{align*}
    \nu_{S}&=R_{S}\mathbb{E}_{z}\{\Lambda_{S,S'}[(1-\sigma)\lambda_{S}+\sigma\nu_{S'}]\}-\frac{R_{S}\Xi^D_{S}}{U_{C,S}},\\
    \lambda_{S}\theta+\nu_{S}&=\mathbb{E}_{z}\left\{\Lambda_{S,S'}[(1-\sigma)\lambda_{S}+\sigma\nu_{S'}]\frac{X_{S'}}{Q_{S}}\right\}+\Omega^K_{S}+\frac{\Xi^K_{S}}{Q_{S}U_{C,S}},
\end{align*}
where
\begin{multline*}
    Q_{S}\Omega^K_{S}\equiv\nu_{S}Q_{2,S}\{[\sigma(1-\delta)\xi+\omega]K-K'_{S}\}-\lambda_{S}\theta{}Q_{2,S}K'_{S}-\frac{\lambda^Y_{S}}{U_{C,S}}I_{2,S}\\
    +\mathbb{E}_{z}\{\Lambda_{S,S'}[(1-\sigma)\lambda_{S}+\sigma\nu_{S'}][A'F_{KK,S'}\xi'+Q_{1,S'}(1-\delta)]\xi'K'_{S}\}\\
    +\mathbb{E}_{z}\biggl(\Lambda_{S,S'}\biggl\{\nu_{S'}[\omega(Q_{1,S'}K'_{S}+Q_{S'})-Q_{1,S'}K'_{S'}]-\lambda_{S'}\theta{}Q_{1,S'}K'_{S'}+\lambda^L_{S'}A'F_{KL,S'}\xi'\\
    +\frac{\lambda^Y_{S'}}{U_{C,S'}}(A'F_{K,S'}\xi'-I_{1,S'})\biggr\}\biggr).
\end{multline*}

\subsection{Proposition \ref{prop: implementation}}\label{sec: proof implementation}
We will prove the CEA implementation. The arguments for the MCEA implementation are identical.

\subsubsection{Regulated banker's problem}
Consider a regulated CE with affine taxes on bank balance sheet components and a regulatory capital requirement constraint. For a bank $i\in[0,f]$, the modified constraints can be written as
\begin{equation*}
    \widetilde{Q}_{t}k_{i,t+1}=\Bar{n}_{i,t}+\frac{d_{i,t+1}}{\widetilde{R}_{t}},\qquad
    \Bar{n}_{i,t}\ge\kappa_{t}\widetilde{Q}_{t}k_{i,t+1},
\end{equation*}
where
\begin{gather*}
    \widetilde{Q}_{t}\equiv(1+\tau^K_{t})Q_{t},\qquad
    \widetilde{R}_{t}\equiv\frac{R_{t}}{1-\tau^D_{t}},\qquad
    \Bar{n}_{i,t}\equiv\zeta^{j(i)}_{t}+(1+\tau^{j(i)}_{t})n_{i,t},\\
    j(i)=
    \begin{cases}
        1 & i\in[0,\sigma{}f]\\
        0 & i\in(\sigma{}f,f]
    \end{cases}.
\end{gather*}
Hence, the taxes $\tau^D_{t}$ and $\tau^K_{t}$ and capital requirements $\kappa_{t}$ are common to all banks, while the lump-sum transfer $\zeta^{j(i)}_{t}$ and the linear net worth subsidy $\tau^{j(i)}_{t}$ vary between the old and new banks but are equal within each group. To be consistent with \eqref{eq: banking sector balance sheet}, a government budget constraint holds:
\begin{equation}
    \tau^D_{t}\frac{D_{t+1}}{R_{t}}+\tau^K_{t}Q_{t}K_{t+1}=\Tilde{N}_{t}-N_{t},\label{eq: government BC}
\end{equation}
where
\begin{equation}
    \Tilde{N}_{t}\equiv[\sigma\zeta^1_{t}+(1-\sigma)\zeta^0_{t}]f+\tau^1_{t}N^1_{t}+\tau^0_{t}N^0_{t}+N_{t},\label{eq: banker aggregate post-tax net worth}
\end{equation}
and, as before, $N^1_{t}\equiv\sigma(X_{t}K_{t}-D_{t})$ and $N^0_{t}\equiv\Bar{N}_0+\Bar{\omega}Q_{t}K_{t}$. The modified Lagrangian is
\begin{multline*}
    \mathcal{L}=v_{0}+\mathbb{E}_{0}\biggl[\sum_{t=0}^\infty\sigma^{t}\beta^{t}\frac{U_{C,t}}{U_{C,0}}\biggl\{\nu_{t}\left(\Bar{n}_{t}+\frac{d_{t+1}}{\widetilde{R}_{t}}-\widetilde{Q}_{t}k_{t+1}\right)+\gamma_{t}[\mathbb{E}_{t}\{\Lambda_{t,t+1}[(1-\sigma)n_{t+1}+\sigma{}v_{t+1}]\}-v_{t}]\\
    +\lambda_{t}(v_{t}-\theta{}Q_{t}k_{t+1})+\xi_{t}(\Bar{n}_{t}-\kappa_{t}\widetilde{Q}_{t}k_{t+1})\biggr\}\biggr],
\end{multline*}
where $n_{t+1}\equiv{}X_{t+1}k_{t+1}-d_{t+1}$. The FOCs with respect to $v_{t}$, $d_{t+1}$, and $k_{t+1}$ are
\begin{align*}
    \gamma_{t}&=\gamma_{t-1}+\lambda_{t},\qquad\gamma_{-1}=1,\\
    \nu_{t}&=\mathbb{E}_{t}\{\Lambda_{t,t+1}[(1-\sigma)\gamma_{t}+\sigma(\nu_{t+1}+\xi_{t+1})(1+\tau^1_{t+1})]\}\widetilde{R}_{t},\\
    \theta\lambda_{t}+\frac{\widetilde{Q}_{t}}{Q_{t}}(\kappa_{t}\xi_{t}+\nu_{t})&=\mathbb{E}_{t}\{\Lambda_{t,t+1}[(1-\sigma)\gamma_{t}+\sigma(\nu_{t+1}+\xi_{t+1})(1+\tau^1_{t+1})]\frac{X_{t+1}}{Q_{t}}\}.
\end{align*}
The complementary slackness conditions are
\begin{align*}
    0&=\lambda_{t}(v_{t}-\theta{}Q_{t}k_{t+1}),\qquad\lambda_{t}\ge0,\qquad{}v_{t}\ge\theta{}Q_{t}k_{t+1},\\
    0&=\xi_{t}(\Bar{n}_{t}-\kappa_{t}\widetilde{Q}_{t}k_{t+1}),\qquad\xi_{t}\ge0,\qquad\Bar{n}_{t}\ge\kappa_{t}\widetilde{Q}_{t}k_{t+1}.
\end{align*}
Note that any bank $i\in[0,f]$ surviving from $t\ge0$ has $j(i)=1$ at $t+1$. Using the transformation $\Tilde{\nu}_{t}\equiv\frac{\nu_{t}}{\gamma_{t-1}}$, $\Tilde{\lambda}_{t}\equiv\frac{\lambda_{t}}{\gamma_{t-1}}$, $\Tilde{\xi}_{t}\equiv\frac{\xi_{t}}{\gamma_{t-1}}$, and noting that $\frac{\gamma_{t}}{\gamma_{t-1}}\equiv{}1+\Tilde{\lambda}_{t}$, we get the stationary KKT conditions
\begin{align}
    \Tilde{\nu}_{t}&=(1+\Tilde{\lambda}_{t})\mathbb{E}_{t}\{\Lambda_{t,t+1}[1-\sigma+\sigma(\Tilde{\nu}_{t+1}+\Tilde{\xi}_{t+1})(1+\tau^1_{t+1})]\}\widetilde{R}_{t},\label{eq: banker Euler deposit regulated}\\
    \theta\Tilde{\lambda}_{t}+\frac{\widetilde{Q}_{t}}{Q_{t}}(\kappa_{t}\Tilde{\xi}_{t}+\Tilde{\nu}_{t})&=(1+\Tilde{\lambda}_{t})\mathbb{E}_{t}\{\Lambda_{t,t+1}[1-\sigma+\sigma(\Tilde{\nu}_{t+1}+\Tilde{\xi}_{t+1})(1+\tau^1_{t+1})]\frac{X_{t+1}}{Q_{t}}\},\label{eq: banker Euler asset regulated}\\
    0&=\Tilde{\lambda}_{t}(v_{t}-\theta{}Q_{t}k_{t+1}),\qquad\Tilde{\lambda}_{t}\ge0,\qquad{}v_{t}\ge\theta{}Q_{t}k_{t+1},\nonumber\\
    0&=\Tilde{\xi}_{t}(\Bar{n}_{t}-\kappa_{t}\widetilde{Q}_{t}k_{t+1}),\qquad\Tilde{\xi}_{t}\ge0,\qquad\Bar{n}_{t}\ge\kappa_{t}\widetilde{Q}_{t}k_{t+1}.\nonumber
\end{align}
Define $\mu_{i,t}\equiv{}v_{i,t}-(\Tilde{\nu}_{t}+\Tilde{\xi}_{t})(1+\tau^{j(i)}_{t})n_{i,t}$. Using the same approach as in Section \ref{sec: CE value function}, we find
\begin{equation*}
    \mu_{i,t}=(\Tilde{\nu}_{t}+\Tilde{\xi}_{t})\zeta^{j(i)}_{t}+\sigma(1+\Tilde{\lambda}_{t})\mathbb{E}_{t}(\Lambda_{t,t+1}\mu_{i,t+1}).
\end{equation*}

\subsubsection{Symmetry of individual decisions}
We will focus on the CEA implementation when the individual bankers' optimal decisions with regulation remain symmetric. Consequently, the Lagrange multipliers are common across all banks, and the value function intercept $\mu_{i,t}$ varies only between the groups of old and new banks, satisfying
\begin{align}
    \mu^1_{t}&=(\Tilde{\nu}_{t}+\Tilde{\xi}_{t})\zeta^1_{t}+\sigma(1+\Tilde{\lambda}_{t})\mathbb{E}_{t}(\Lambda_{t,t+1}\mu^1_{t+1}),\label{eq: value function intercept old}\\
    \mu^0_{t}&=(\Tilde{\nu}_{t}+\Tilde{\xi}_{t})\zeta^0_{t}+\sigma(1+\Tilde{\lambda}_{t})\mathbb{E}_{t}(\Lambda_{t,t+1}\mu^1_{t+1}).\label{eq: value function intercept new}
\end{align}
A sufficient condition for symmetry is linear taxation of surviving banks. If $\zeta^1_{t}=0$ for all $(t,s^{t})$, all constraints can be written in terms of the ratios $\Hat{d}_{t+1}\equiv\frac{d_{t+1}}{\Bar{n}_{t}}$, $\Hat{n}_{t+1}\equiv\frac{n_{t+1}}{\Bar{n}_{t}}$, $\Hat{k}_{t+1}\equiv\frac{k_{t+1}}{\Bar{n}_{t}}$, and $\Hat{v}_{t}\equiv\frac{v_{t}}{\Bar{n}_{t}}$:
\begin{gather*}
    \widetilde{Q}_{t}\Hat{k}_{t+1}=1+\frac{\Hat{d}_{t+1}}{\widetilde{R}_{t}},\qquad
    \Hat{n}_{t+1}=X_{t+1}\Hat{k}_{t+1}-\Hat{d}_{t+1},\\
    \Hat{v}_{t}=\mathbb{E}_{t}\{\Lambda_{t,t+1}[1-\sigma+\sigma\Hat{v}_{t+1}(1+\tau^1_{t+1})]\Hat{n}_{t+1}\},\qquad
    \Hat{v}_{t}\ge\theta{}Q_{t}\Hat{k}_{t+1},\qquad
    1\ge\kappa_{t}\widetilde{Q}_{t}\Hat{k}_{t+1}.
\end{gather*}
The constraint set is independent of $\Bar{n}_{t}$; therefore, as in Section \ref{sec: CE symmetry}, the enforcement constraint regimes, the solvency regimes, and the Lagrange multipliers $\Tilde{\nu}_{t}$, $\Tilde{\lambda}_{t}$, and $\Tilde{\xi}_{t}$ are independent of $\Bar{n}_{t}$. Using the same arguments as in Section \ref{sec: CE value function}, we conclude that $\mu^1_{t}=0$ for all $(t,s^{t})$, which implies $\mu^0_{t}=(\Tilde{\nu}_{t}+\Tilde{\xi}_{t})\zeta^0_{t}$.

\subsubsection{Implementability constraints}\label{sec: implementability constraints}
Conditional on the CEA, the policy that implements it $\{\zeta^0_{t},\zeta^1_{t},\kappa_{t},\tau^D_{t},\tau^K_{t},\tau^0_{t},\tau^1_{t}\}$ and the scaled regulated CE multipliers $\{\Tilde{\lambda}_{t},\Tilde{\nu}_{t},\Tilde{\xi}_{t}\}$ must satisfy the government budget constraint \eqref{eq: government BC}, the stationary regulated Euler equations \eqref{eq: banker Euler deposit regulated} and \eqref{eq: banker Euler asset regulated}, the aggregate complementary slackness conditions
\begin{align}
    0&=\Tilde{\lambda}_{t}(V_{t}-\theta{}Q_{t}K_{t+1}),\qquad\Tilde{\lambda}_{t}\ge0,\label{eq: banker CSC value regulated}\\
    0&=\Tilde{\xi}_{t}(\Tilde{N}_{t}-\kappa_{t}\widetilde{Q}_{t}K_{t+1}),\qquad\Tilde{\xi}_{t}\ge0,\qquad\Tilde{N}_{t}\ge\kappa_{t}\widetilde{Q}_{t}K_{t+1},\label{eq: banker CSC equity regulated}
\end{align}
and the distribution consistency condition
\begin{equation}\label{eq: distribution consistency}
    V^1_{t}=\Delta_{t}V_{t},
\end{equation}
where
\begin{align}
    V^1_{t}&=\sigma{}f\mu^1_{t}+(\Tilde{\nu}_{t}+\Tilde{\xi}_{t})(1+\tau^1_{t})N^1_{t},\label{eq: aggregate old bank value}\\
    V_{t}&=[\sigma\mu^1_{t}+(1-\sigma)\mu^0_{t}]f+(\Tilde{\nu}_{t}+\Tilde{\xi}_{t})(\tau^1_{t}N^1_{t}+\tau^0_{t}N^0_{t}+N_{t}).\label{eq: aggregate bank value regulated}
\end{align}
Note that \eqref{eq: aggregate EC} is guaranteed by the CEA, being redundant in \eqref{eq: banker CSC value regulated}.

Observe that \eqref{eq: banker Euler deposit regulated} must be equivalent to \eqref{eq: aggregate bank value regulated} conditional on other equations and the CEA. Indeed, we have already shown that the former implies the latter when deriving the value function. Moreover, since the CEA satisfies \eqref{eq: aggregate bank value},
\begin{align*}
    V_{t}&=\mathbb{E}_{t}\{\Lambda_{t,t+1}[(1-\sigma)(X_{t+1}K_{t+1}-D_{t+1})+V^1_{t+1}]\}\\
    &=\mathbb{E}_{t}\{\Lambda_{t,t+1}[1-\sigma+\sigma(\Tilde{\nu}_{t+1}+\Tilde{\xi}_{t+1})(1+\tau^1_{t+1})](X_{t+1}K_{t+1}-D_{t+1})\}+f\sigma\mathbb{E}_{t}(\Lambda_{t,t+1}\mu^1_{t+1})\\
    &=(\Tilde{\nu}_{t}+\Tilde{\xi}_{t})\Tilde{N}_{t}+\Tilde{\nu}_{t}\frac{D_{t+1}}{\widetilde{R}_{t}}-(1+\Tilde{\lambda}_{t})\mathbb{E}_{t}\{\Lambda_{t,t+1}[1-\sigma+\sigma(\Tilde{\nu}_{t+1}+\Tilde{\xi}_{t+1})(1+\tau^1_{t+1})]\}D_{t+1}\\
    &\quad+[\sigma\mu^1_{t}+(1-\sigma)\mu^0_{t}]f-(\Tilde{\nu}_{t}+\Tilde{\xi}_{t})[\sigma\zeta^1_{t}+(1-\sigma)\zeta^0_{t}]f\\
    &=V_{t}+\Tilde{\nu}_{t}\frac{D_{t+1}}{\widetilde{R}_{t}}-(1+\Tilde{\lambda}_{t})\mathbb{E}_{t}\{\Lambda_{t,t+1}[1-\sigma+\sigma(\Tilde{\nu}_{t+1}+\Tilde{\xi}_{t+1})(1+\tau^1_{t+1})]\}D_{t+1},
\end{align*}
where the second equality uses \eqref{eq: aggregate old bank value} and the definition of $N^1_{t}$, the third equality uses \eqref{eq: banking sector balance sheet}, \eqref{eq: government BC}, and \eqref{eq: banker Euler asset regulated}--\eqref{eq: banker CSC equity regulated}, and the fourth equality uses \eqref{eq: banker aggregate post-tax net worth} and \eqref{eq: aggregate bank value regulated}. Hence, \eqref{eq: banker Euler deposit regulated} is satisfied. This is convenient, since \eqref{eq: aggregate bank value regulated} allows expressing $\Tilde{\nu}_{t}$ explicitly, while \eqref{eq: banker Euler deposit regulated} does not.

The six equations \eqref{eq: government BC}, \eqref{eq: banker Euler deposit regulated} or \eqref{eq: aggregate bank value regulated}, \eqref{eq: banker Euler asset regulated}, and \eqref{eq: banker CSC value regulated}--\eqref{eq: distribution consistency} determine ten unknowns, so there are as many as four degrees of freedom and many alternative implementation mechanisms.

\subsubsection{Implementation under linear taxation of surviving banks}
To guarantee the symmetry of individual decisions, we set $\zeta^1_{t}=0$ for all $(t,s^{t})$, which implies $\mu^1_{t}=0$ and $\mu^0_{t}=(\Tilde{\nu}_{t}+\Tilde{\xi}_{t})\zeta^0_{t}$. In this case, the implementability conditions \eqref{eq: government BC} with \eqref{eq: banker aggregate post-tax net worth}, \eqref{eq: banker Euler asset regulated}, \eqref{eq: banker CSC value regulated}--\eqref{eq: distribution consistency}, and \eqref{eq: aggregate bank value regulated} become
\begin{align*}
    \tau^D_{t}\frac{D_{t+1}}{R_{t}}+\tau^K_{t}Q_{t}K_{t+1}&=\Tilde{N}_{t}-N_{t},\\
    \Tilde{N}_{t}&=(1-\sigma)f\zeta^0_{t}+\tau^1_{t}N^1_{t}+\tau^0_{t}N^0_{t}+N_{t},\\
    \theta\Tilde{\lambda}_{t}+(1+\tau^K_{t})(\kappa_{t}\Tilde{\xi}_{t}+\Tilde{\nu}_{t})&=(1+\Tilde{\lambda}_{t})\mathbb{E}_{t}\{\Lambda_{t,t+1}[1-\sigma+\sigma(\Tilde{\nu}_{t+1}+\Tilde{\xi}_{t+1})(1+\tau^1_{t+1})]\frac{X_{t+1}}{Q_{t}}\},\\
    0&=\Tilde{\lambda}_{t}(V_{t}-\theta{}Q_{t}K_{t+1}),\qquad\Tilde{\lambda}_{t}\ge0,\\
    0&=\Tilde{\xi}_{t}[\Tilde{N}_{t}-\kappa_{t}(1+\tau^K_{t})Q_{t}K_{t+1}],\qquad\Tilde{\xi}_{t}\ge0,\\
    \Tilde{N}_{t}&\ge\kappa_{t}(1+\tau^K_{t})Q_{t}K_{t+1},\\
    0&=(1+\tau^1_{t})N^1_{t}-\Delta_{t}\Tilde{N}_{t},\\
    V_{t}&=(\Tilde{\nu}_{t}+\Tilde{\xi}_{t})\Tilde{N}_{t}.
\end{align*}

First, we express $\Tilde{N}_{t}$ from \eqref{eq: government BC} and use \eqref{eq: distribution consistency}, \eqref{eq: banker aggregate post-tax net worth}, and \eqref{eq: aggregate bank value regulated} to solve for $\tau^1_{t}$, $\zeta^0_{t}$ or $\tau^0_{t}$, and $\Tilde{\nu}_{t}$ in terms of $(\tau^D_{t},\tau^K_{t},\Tilde{\xi}_{t})$ as follows:
\begin{gather*}
    \Tilde{N}_{t}=\tau^D_{t}\frac{D_{t+1}}{R_{t}}+\tau^K_{t}Q_{t}K_{t+1}+N_{t},\qquad
    \tau^1_{t}=\frac{\Delta_{t}\Tilde{N}_{t}}{N^1_{t}}-1,\\
    \frac{(1-\sigma)f}{N^0_{t}}\zeta^0_{t}+\tau^0_{t}=\frac{\Tilde{N}_{t}-N_{t}-\tau^1_{t}N^1_{t}}{N^0_{t}},\qquad
    \Tilde{\nu}_{t}=\frac{V_{t}}{\Tilde{N}_{t}}-\Tilde{\xi}_{t}.
\end{gather*}
Observe that $\{\zeta^0_{t}\}$ and $\{\tau^0_{t}\}$ are interchangeable. For example, we can set $\zeta^0_{t}=0$ for all $(t,s^{t})$ w.l.o.g.

Define the unregulated marginal cost of deposits and the marginal benefit of assets (capital) 
\begin{align}
    M^D_{t}&\equiv\mathbb{E}_{t}\left[\Lambda_{t,t+1}\left(1-\sigma+\sigma\frac{V^1_{t+1}}{N^1_{t+1}}\right)\right]R_{t},\label{eq: marginal cost of deposits}\\
    M^K_{t}&\equiv\mathbb{E}_{t}\left[\Lambda_{t,t+1}\left(1-\sigma+\sigma\frac{V^1_{t+1}}{N^1_{t+1}}\right)\frac{X_{t+1}}{Q_{t}}\right],\label{eq: marginal benefit of capital}
\end{align}
which are proportional to the conditional expectations in \eqref{eq: banker Euler deposit regulated} and \eqref{eq: banker Euler asset regulated} given \eqref{eq: aggregate old bank value}. Note that $M^D_{t}$ and $M^K_{t}$ are entirely the functions of allocations and can be evaluated at the CEA. Moreover, by definition,
\begin{align*}
    V_{t}&=\mathbb{E}_{t}\{\Lambda_{t,t+1}[(1-\sigma)(X_{t+1}K_{t+1}-D_{t+1})+V^1_{t+1}]\}\\
    &=\mathbb{E}_{t}\left[\Lambda_{t,t+1}\left(1-\sigma+\sigma\frac{V^1_{t+1}}{N^1_{t+1}}\right)(X_{t+1}K_{t+1}-D_{t+1})\right]\\
    &=M^K_{t}Q_{t}K_{t+1}-M^D_{t}\frac{D_{t+1}}{R_{t}},
\end{align*}
which implies,
\begin{equation*}
    M^K_{t}=\frac{V_{t}}{Q_{t}K_{t+1}}+\frac{M^D_{t}}{Q_{t}K_{t+1}}\frac{D_{t+1}}{R_{t}}>\theta.
\end{equation*}

We are left with three implementability conditions \eqref{eq: banker Euler asset regulated}, \eqref{eq: banker CSC value regulated}, and \eqref{eq: banker CSC equity regulated} putting restrictions on five unknowns $\{\kappa_{t},\tau^D_{t},\tau^K_{t},\Tilde{\lambda}_{t},\Tilde{\xi}_{t}\}$. Consider several restricted sets of instruments that allow to pin down the policy uniquely.

\paragraph{Implementation with a minimal set of taxes}
Conditional on $(\tau^D_{t},\tau^K_{t})$, we can set $\kappa_{t}<\frac{\Tilde{N}_{t}}{(1+\tau^K_{t})Q_{t}K_{t+1}}$, so that the capital requirement constraint is not binding and $\Tilde{\xi}_{t}=0$. We need to determine $\{\tau^D_{t},\tau^K_{t},\Tilde{\lambda}_{t}\}$ that satisfy
\begin{align*}
    \theta\Tilde{\lambda}_{t}+\frac{(1+\tau^K_{t})V_{t}}{\tau^D_{t}\frac{D_{t+1}}{R_{t}}+\tau^K_{t}Q_{t}K_{t+1}+N_{t}}&=(1+\Tilde{\lambda}_{t})M^K_{t},\\
    0&=\Tilde{\lambda}_{t}(V_{t}-\theta{}Q_{t}K_{t+1}),\qquad\Tilde{\lambda}_{t}\ge0.
\end{align*}
If $V_{t}>\theta{}Q_{t}K_{t+1}$, $(\tau^D_{t},\tau^K_{t})$ must be such that $\Tilde{\lambda}_{t}=0$. If $V_{t}=\theta{}Q_{t}K_{t+1}$, $(\tau^D_{t},\tau^K_{t})$ must be such that $\Tilde{\lambda}_{t}\ge0$. Hence, $(\tau^D_{t},\tau^K_{t})$ must satisfy
\begin{equation*}
    \tau^D_{t}M^K_{t}+\tau^K_{t}M^D_{t}\le{}M^K_{t}-M^D_{t},\qquad\text{with equality if }V_{t}>\theta{}Q_{t}K_{t+1}.
\end{equation*}
Clearly, $\{\tau^D_{t}\}$ and $\{\tau^K_{t}\}$ are interchangeable, and the two simplest options are to set
\begin{equation*}
    \tau^K_{t}=0,\qquad\tau^D_{t}\le1-\frac{M^D_{t}}{M^K_{t}},\qquad\text{with equality if }V_{t}>\theta{}Q_{t}K_{t+1},
\end{equation*}
or
\begin{equation*}
    \tau^D_{t}=0,\qquad\tau^K_{t}\le\frac{M^K_{t}}{M^D_{t}}-1,\qquad\text{with equality if }V_{t}>\theta{}Q_{t}K_{t+1}.
\end{equation*}

\paragraph{Implementation with capital requirements and a minimal set of taxes}
We need to determine $\{\kappa_{t},\tau^D_{t},\tau^K_{t},\Tilde{\lambda}_{t},\Tilde{\xi}_{t}\}$ that satisfy
\begin{align*}
    \theta\Tilde{\lambda}_{t}+(1+\tau^K_{t})\left[\frac{V_{t}}{\Tilde{N}_{t}}-\Tilde{\xi}_{t}(1-\kappa_{t})\right]&=(1+\Tilde{\lambda}_{t})M^K_{t},\\
    0&=\Tilde{\lambda}_{t}(V_{t}-\theta{}Q_{t}K_{t+1}),\qquad\Tilde{\lambda}_{t}\ge0,\\
    0&=\Tilde{\xi}_{t}[\Tilde{N}_{t}-\kappa_{t}(1+\tau^K_{t})Q_{t}K_{t+1}],\qquad\Tilde{\xi}_{t}\ge0,\\
    &\quad\Tilde{N}_{t}\ge\kappa_{t}(1+\tau^K_{t})Q_{t}K_{t+1}.
\end{align*}
Observe that \eqref{eq: banker Euler asset regulated} is the only way to pin down both $\Tilde{\lambda}_{t}$ and $\Tilde{\xi}_{t}$. Hence, the enforcement and capital requirement constraints cannot be binding at the same time. Moreover, we generally need to use $\tau^D_{t}$ or $\tau^K_{t}$ to guarantee that $\Tilde{\lambda}_{t}\ge0$ or $\Tilde{\xi}_{t}\ge0$, although the latter inequalities may hold with a zero tax. Furthermore, the enforcement and capital requirement constraints should not have slack at the same time, for our intention here is to use the capital requirement constraint instead of taxation to match the wedge between the CEA and the CE.

If $V_{t}>\theta{}Q_{t}K_{t+1}$, we must have $\Tilde{\lambda}_{t}=0$, the capital requirement constraint must be binding, and $\Tilde{\xi_{t}}$ is determined from \eqref{eq: banker Euler asset regulated}. We need the same restriction on $(\tau^D_{t},\tau^K_{t})$ to guarantee $\Tilde{\xi_{t}}\ge0$ as we did before to guarantee $\Tilde{\lambda}_{t}\ge0$. Formally,
\begin{equation*}
    \kappa_{t}=\frac{\Tilde{N}_{t}}{(1+\tau^K_{t})Q_{t}K_{t+1}},\qquad
    \tau^D_{t}M^K_{t}+\tau^K_{t}M^D_{t}\le{}M^K_{t}-M^D_{t}.
\end{equation*}
If $V_{t}=\theta{}Q_{t}K_{t+1}$, the regulatory constraint must have slack, so we set the taxes as in the previous implementation strategy and set
\begin{equation*}
    \kappa_{t}<\frac{\Tilde{N}_{t}}{(1+\tau^K_{t})Q_{t}K_{t+1}}.
\end{equation*}

\subsubsection{Optimal policy}
The Ramsey problem is to choose the allocation $\{C_{t},D_{t+1},K_{t+1},L_{t},V_{t}\}$, policy $\{\zeta^0_{t},\kappa_{t},\tau^D_{t},\\\tau^K_{t},\tau^0_{t},\tau^1_{t}\}$, and scaled Lagrange multipliers $\{\Tilde{\lambda}_{t},\Tilde{\nu}_{t},\Tilde{\xi}_{t}\}$ that solve
\begin{equation*}
    \max\mathbb{E}_{0}\left[\sum_{t=0}^\infty\beta^{t}U(C_{t},L_{t})\right]
\end{equation*}
subject to \eqref{eq: household labor supply}, \eqref{eq: banking sector balance sheet}, \eqref{eq: market clearing final good}, \eqref{eq: aggregate EC}, \eqref{eq: government BC} with \eqref{eq: banker aggregate post-tax net worth}, \eqref{eq: banker Euler deposit regulated}, \eqref{eq: banker Euler asset regulated}, \eqref{eq: banker CSC value regulated}, \eqref{eq: banker CSC equity regulated}, \eqref{eq: aggregate old bank value}, \eqref{eq: aggregate bank value regulated}, the definitions of several auxiliary variables, $\zeta^1_{t}=0$ (hence, $\mu^1_{t}=0$ and $\mu^0_{t}=(\Tilde{\nu}_{t}+\Tilde{\xi}_{t})\zeta^0_{t}$) for all $(t,s^{t})$, and using \eqref{eq: household Euler deposits}, \eqref{eq: banking sector net worth}--\eqref{eq: market clearing capital}, and \eqref{eq: distribution consistency} to solve for $R_{t}$, $N_{t}$, $W_{t}$, $X_{t}$, $Q_{t}$, $I_{t}$, and $V^1_{t}$.

First, note that conditional on other constraints, \eqref{eq: aggregate bank value regulated} is equivalent to \eqref{eq: aggregate bank value}. Section \ref{sec: implementability constraints} showed that \eqref{eq: banker Euler deposit regulated} is equivalent to \eqref{eq: aggregate bank value regulated}, and it follows from an identical derivation that \eqref{eq: aggregate bank value} implies \eqref{eq: aggregate bank value regulated} when \eqref{eq: banker Euler deposit regulated} holds. Conversely, multiply both sides of \eqref{eq: banker Euler deposit regulated} by $\frac{D_{t+1}}{\widetilde{R}_{t}}$ and both sides of \eqref{eq: banker Euler asset regulated} by $Q_{t}K_{t+1}$, subtract the former from the latter, and use \eqref{eq: banking sector balance sheet}, \eqref{eq: government BC}, \eqref{eq: banker CSC value regulated}, \eqref{eq: banker CSC equity regulated}, \eqref{eq: aggregate old bank value}, and \eqref{eq: aggregate bank value regulated} to get \eqref{eq: aggregate bank value}.

Now we show that \eqref{eq: government BC} with \eqref{eq: banker aggregate post-tax net worth}, \eqref{eq: banker Euler deposit regulated}, \eqref{eq: banker Euler asset regulated}, \eqref{eq: banker CSC value regulated}, \eqref{eq: banker CSC equity regulated}, and \eqref{eq: aggregate old bank value} can be used to construct $\{\zeta^0_{t},\kappa_{t},\tau^D_{t},\tau^K_{t},\tau^0_{t},\tau^1_{t}\}$ and $\{\Tilde{\lambda}_{t},\Tilde{\nu}_{t},\Tilde{\xi}_{t}\}$ conditional on $\{C_{t},D_{t+1},K_{t+1},L_{t},V_{t}\}$. Indeed, express $\Tilde{N}_{t}$ from \eqref{eq: government BC} and use \eqref{eq: banker Euler deposit regulated} with the definition \eqref{eq: marginal cost of deposits}, \eqref{eq: aggregate old bank value}, and \eqref{eq: banker aggregate post-tax net worth} to solve for $\Tilde{\nu}_{t}$, $\tau^1_{t}$, and $\zeta^0_{t}$ or $\tau^0_{t}$ conditional on the allocation and $(\tau^D_{t},\tau^K_{t},\Tilde{\lambda}_{t},\Tilde{\xi}_{t})$:
\begin{gather*}
    \Tilde{N}_{t}=\tau^D_{t}\frac{D_{t+1}}{R_{t}}+\tau^K_{t}Q_{t}K_{t+1}+N_{t},\qquad
    \Tilde{\nu}_{t}=\frac{1+\Tilde{\lambda}_{t}}{1-\tau^D_{t}}M^D_{t},\qquad
    \tau^1_{t}=\frac{V^1_{t}}{N^1_{t}(\Tilde{\nu}_{t}+\Tilde{\xi}_{t})}-1,\\
    \frac{(1-\sigma)f}{N^0_{t}}\zeta^0_{t}+\tau^0_{t}=\frac{\Tilde{N}_{t}-\tau^1_{t}N^1_{t}-N_{t}}{N^0_{t}}.
\end{gather*}
Given this conditional solution, we are left to construct $\{\kappa_{t},\tau^D_{t},\tau^K_{t}\}$ and $\{\Tilde{\lambda}_{t},\Tilde{\xi}_{t}\}$ from \eqref{eq: banker Euler asset regulated}, \eqref{eq: banker CSC value regulated}, and \eqref{eq: banker CSC equity regulated}:
\begin{align*}
    \theta\Tilde{\lambda}_{t}+\frac{\widetilde{Q}_{t}}{Q_{t}}(\kappa_{t}\Tilde{\xi}_{t}+\Tilde{\nu}_{t})&=(1+\Tilde{\lambda}_{t})M^K_{t},\\
    0&=\Tilde{\lambda}_{t}(V_{t}-\theta{}Q_{t}K_{t+1}),\qquad\Tilde{\lambda}_{t}\ge0,\\
    0&=\Tilde{\xi}_{t}(\Tilde{N}_{t}-\kappa_{t}\widetilde{Q}_{t}K_{t+1}),\qquad\Tilde{\xi}_{t}\ge0,\qquad\Tilde{N}_{t}\ge\kappa_{t}\widetilde{Q}_{t}K_{t+1},
\end{align*}
where we used the definition \eqref{eq: marginal benefit of capital}.
\begin{enumerate}
    \item If $V_{t}>\theta{}Q_{t}K_{t+1}$, set $\Tilde{\lambda}_{t}=0$ and do either of the following:
    \begin{enumerate}
        \item Conditional $(\tau^D_{t},\tau^K_{t})$, choose $\kappa_{t}<\frac{\Tilde{N}_{t}}{\widetilde{Q}_{t}K_{t+1}}$ and set $\Tilde{\xi}_{t}=0$. Choose any $(\tau^D_{t},\tau^K_{t})$ that satisfy \eqref{eq: banker Euler asset regulated}.
        \item Conditional $(\tau^D_{t},\tau^K_{t})$, set $\kappa_{t}=\frac{\Tilde{N}_{t}}{\widetilde{Q}_{t}K_{t+1}}$ and solve for $\Tilde{\xi}_{t}$ from \eqref{eq: banker Euler asset regulated}. Choose any $(\tau^D_{t},\tau^K_{t})$ that imply $\Tilde{\xi}_{t}\ge0$.
    \end{enumerate}
    \item If $V_{t}=\theta{}Q_{t}K_{t+1}$, conditional on $(\tau^D_{t},\tau^K_{t})$, set $\kappa_{t}<\frac{\Tilde{N}_{t}}{\widetilde{Q}_{t}K_{t+1}}$ and $\Tilde{\xi}_{t}=0$ and solve for $\Tilde{\lambda}_{t}$ from \eqref{eq: banker Euler asset regulated}. Choose any $(\tau^D_{t},\tau^K_{t})$ that imply $\Tilde{\lambda}_{t}\ge0$.
\end{enumerate}

Therefore, the Ramsey problem reduces to choosing the allocation $\{C_{t},D_{t+1},K_{t+1},L_{t},V_{t}\}$, and \eqref{eq: government BC} with \eqref{eq: banker aggregate post-tax net worth}, \eqref{eq: banker Euler deposit regulated}, \eqref{eq: banker Euler asset regulated}, \eqref{eq: banker CSC value regulated}, \eqref{eq: banker CSC equity regulated}, and \eqref{eq: aggregate old bank value} are redundant. The resulting problem is identical to the CEA planning problem in Proposition \ref{prop: CEA characterization}.

\paragraph{Optimal time-consistent policy}
The argument is similar to that for the Ramsey problem, but the planner takes as given the policy functions of the future planner. The planner's objective is
\begin{equation*}
    V^h(S)=\max_{(C,D',K',L,V,\zeta^0,\kappa,\tau^D,\tau^K,\tau^0,\tau^1,\Tilde{\lambda},\Tilde{\nu},\Tilde{\xi})\in\mathcal{G}(S)}{U(C,L)+\beta\mathbb{E}_{z}(\Bar{V}^h(S'))},
\end{equation*}
where $S=(D,K,z)$, $z=(A,\xi)$, and the correspondence $\mathcal{G}$ is defined by the same set of constraints as in the Ramsey problem, accounting for future policy functions that are taken as given. As in the Ramsey problem, conditional on other constraints, \eqref{eq: aggregate bank value regulated} is equivalent to \eqref{eq: aggregate bank value}. We need to show that \eqref{eq: government BC} with \eqref{eq: banker aggregate post-tax net worth}, \eqref{eq: banker Euler deposit regulated}, \eqref{eq: banker Euler asset regulated}, \eqref{eq: banker CSC value regulated}, \eqref{eq: banker CSC equity regulated}, and \eqref{eq: aggregate old bank value} are redundant. The only dynamic constraints here are \eqref{eq: banker Euler deposit regulated} and \eqref{eq: banker Euler asset regulated}, which can be written in recursive form, taking as given the future policy functions, as
\begin{align*}
    (1-\tau^D)\Tilde{\nu}&=(1+\Tilde{\lambda})\mathbb{E}_{z}\left\{\beta\frac{U_C(\Bar{C}_{S'},\Bar{L}_{S'})}{U_C(C,L)}[1-\sigma+\sigma(\Bar{\Tilde{\nu}}_{S'}+\Bar{\Tilde{\xi}}_{S'})(1+\Bar{\tau}^1_{S'})]\right\}R_{S},\\
    \theta\Tilde{\lambda}+(1+\tau^K)(\kappa\Tilde{\xi}+\Tilde{\nu})&=(1+\Tilde{\lambda})\mathbb{E}_{z}\left\{\beta\frac{U_C(\Bar{C}_{S'},\Bar{L}_{S'})}{U_C(C,L)}[1-\sigma+\sigma(\Bar{\Tilde{\nu}}_{S'}+\Bar{\Tilde{\xi}}_{S'})(1+\Bar{\tau}^1_{S'})]\frac{X_{S'}}{Q_{S}}\right\},
\end{align*}
where
\begin{gather*}
    R_{S}\equiv\frac{U_C(C,L)}{\beta\mathbb{E}_{z}U_C(\Bar{C}_{S'},\Bar{L}_{S'})},\qquad
    Q_{S}\equiv\left(\Phi'\left\{\Phi^{-1}\left[\frac{K'}{K}-(1-\delta)\xi\right]\right\}\right)^{-1},\\
    X_{S'}\equiv[A'F_{K}(\xi'K',\Bar{L}_{S'})+\Bar{Q}_{S'}(1-\delta)]\xi',
\end{gather*}
and $\{\Bar{C},\Bar{L},\Bar{Q},\Bar{\Tilde{\nu}},\Bar{\Tilde{\xi}},\Bar{\tau}^1\}$ are the policy functions of the future planner. Since the future planner is constrained by \eqref{eq: aggregate old bank value}, we have
\begin{equation*}
    \Bar{V}^1_{S'}=(\Bar{\Tilde{\nu}}_{S'}+\Bar{\Tilde{\xi}}_{S'})(1+\Bar{\tau}^1_{S'})N^1_{S'},\qquad
    N^1_{S'}\equiv\sigma(X_{S'}K'-D'),
\end{equation*}
where $\Bar{V}^1$ is the policy function of the future planner. Therefore,
\begin{align*}
    (1-\tau^D)\Tilde{\nu}&=(1+\Tilde{\lambda})\mathbb{E}_{z}\left[\beta\frac{U_C(\Bar{C}_{S'},\Bar{L}_{S'})}{U_C(C,L)}\left(1-\sigma+\sigma\frac{\Bar{V}^1_{S'}}{N^1_{S'}}\right)\right]R_{S},\\
    \theta\Tilde{\lambda}+(1+\tau^K)(\kappa\Tilde{\xi}+\Tilde{\nu})&=(1+\Tilde{\lambda})\mathbb{E}_{z}\left[\beta\frac{U_C(\Bar{C}_{S'},\Bar{L}_{S'})}{U_C(C,L)}\left(1-\sigma+\sigma\frac{\Bar{V}^1_{S'}}{N^1_{S'}}\right)\frac{X_{S'}}{Q_{S}}\right].
\end{align*}
The conditional expectations are entirely the functions of allocations and can be replaced by $M^D_{S}$ and $M^K_{S}$ as defined in \eqref{eq: marginal cost of deposits} and \eqref{eq: marginal benefit of capital}. It follows that $(\zeta^0,\kappa,\tau^D,\tau^K,\tau^0,\tau^1,\Tilde{\lambda},\Tilde{\nu},\Tilde{\xi})$ can be constructed using the same recipe as for the Ramsey problem. Note that \eqref{eq: aggregate bank value} allows to solve for $V$ explicitly, since $\Bar{V}^1$ is taken as given. Therefore, the problem becomes identical to that in Proposition \ref{prop: MCEA characterization}.

\end{document}